\crefname{equation}{}{}
\apptocmd{\sloppy}{\hbadness 10000\relax}{}{} 
\crefname{algocf}{Algorithm}{Algorithms}
\crefname{equation}{}{} 
\crefname{conjecture}{Conjecture}{Conjectures} 
\colorlet{refkey}{orange!20}
\colorlet{labelkey}{blue!30}
\crefname{algocf}{Algorithm}{Algorithms}
\numberwithin{equation}{section}
\newtheorem{theorem}{Theorem}[section]
\newtheorem{lemma}[theorem]{Lemma}
\crefname{claim}{Claim}{Claims}
\newtheorem{corollary}[theorem]{Corollary}
\newtheorem{conjecture}[theorem]{Conjecture}
\newtheorem*{question*}{Question}
\theoremstyle{definition}
\newtheorem{definition}[theorem]{Definition}
\newtheorem*{definition*}{Definition}
\theoremstyle{remark}
\newtheorem*{remark}{Remark}
\newcommand{\mb}{\mathbb}
\newcommand{\mbm}{\mathbbm}
\newcommand{\mc}{\mathcal}
\newcommand{\on}{\operatorname}
\newcommand{\wh}{\widehat}
\newcommand{\lam}{\lambda}
\newcommand{\E}{\mb{E}}
\newcommand{\var}{\on{Var}}
\title{Approximate counting and sampling\\ via local central limit theorems}
\author[Jain]{Vishesh Jain}
\address{Department of Statistics, Stanford University}
\email{visheshj@stanford.edu}
\author[Perkins]{Will Perkins}
\address{Department of Mathematics, Statistics, and Computer Science, University of Illinois at Chicago}
\email{math@willperkins.org}
\author[Sah]{Ashwin Sah}
\author[Sawhney]{Mehtaab Sawhney}
\address{Department of Mathematics, Massachusetts Institute of Technology}
\email{\{asah,msawhney\}@mit.edu}
\thanks{Perkins was supported in part by NSF grants DMS-1847451 and CCF-1934915.  Sah and Sawhney were supported by NSF Graduate Research Fellowship Program DGE-1745302.}
\begin{document}

\begin{abstract}
We give an FPTAS for computing the number of matchings of size $k$ in a graph $G$ of maximum degree $\Delta$ on $n$ vertices, for all $k \le (1-\delta)m^*(G)$, where $\delta>0$ is fixed and $m^*(G)$ is the matching number of $G$, and an FPTAS for the number of independent sets of size $k \le (1-\delta) \alpha_c(\Delta) n$, where $\alpha_c(\Delta)$ is the NP-hardness threshold for this problem. We also provide quasi-linear time randomized algorithms to approximately sample from the uniform distribution on matchings of size $k \leq (1-\delta)m^*(G)$ and independent sets of size $k \leq (1-\delta)\alpha_c(\Delta)n$. 

Our results are based on a new framework for exploiting local central limit theorems as an algorithmic tool. We use a combination of Fourier inversion, probabilistic estimates, and the deterministic approximation of partition functions at complex activities to extract approximations of the coefficients of the partition function. For our results for independent sets, we prove a new local central limit theorem for the hard-core model that applies to all fugacities below $\lam_c(\Delta)$, the uniqueness threshold on the infinite $\Delta$-regular tree.
\end{abstract}

\maketitle

\section{Introduction}\label{sec:introduction}

Counting matchings and independent sets in graphs are central problems in the study of exact and approximate counting algorithms.  Exact counting of the total number of matchings of a graph, the number of perfect matchings, the number of matchings of a given size, the number of independent sets, and the number of independent sets of a given size are all \#P-hard problems~\cite{valiant1979complexity}, even for many restricted classes of input graphs (bipartite graphs, graphs of bounded degree).  A singular exception is the classical algorithm of Kasteleyn for counting the number of perfect matchings of a planar graph~\cite{kasteleyn1967graph}.

Turning to approximate counting, a stark difference emerges between matchings and independent sets. The landmark work of Jerrum and Sinclair gave an FPRAS (fully polynomial-time randomized approximation scheme) for counting (weighted) matchings in general graphs as well as counting matchings of any given size bounded away from the maximum matching~\cite{jerrum1989approximating}. For the special case of bipartite graphs, Jerrum, Sinclair, and Vigoda \cite{jerrum2004polynomial} gave an FPRAS for the number of matchings of any given size, including perfect matchings. Recent work of Alimohammadi, Anari, Shiragur, and Vuong~ \cite{alimohammadi2021fractionally} provides an FPRAS for the number of matchings of any given size in planar graphs. 
On the other hand, for counting (weighted) independent sets and independent sets of a given size, there is a threshold (in terms of degree, weighing factor, or density) above which the approximation problems are NP-hard and below which efficient approximation algorithms exist~\cite{weitz2006counting,DP21, anari2020spectral}. We mention that approximating the number of perfect matchings in general graphs is an outstanding open problem (see \cite{vstefankovivc2018counting}). 

Randomization has played a crucial role in the aforementioned algorithmic results, and especially for matchings, there is a wide gap between what is known to be achievable deterministically and with randomness. For graphs of maximum degree $\Delta$, Bayati, Gamarnik, Katz, Nair, and Tetali~\cite{bayati2007simple} gave an FPTAS (fully polynomial-time apporoximation scheme) for the number of matchings  and for weighted matchings with a bounded weighing factor; the running time of their algorithm is polynomial in $n$ and $1/\epsilon$, where $\epsilon$ is the desired accuracy and where the exponent of the polynomial depends on $\Delta$ and the weighing factor. However, it is not known, for instance, how to deterministically approximate the number of near-maximum matchings in bounded degree graphs. Our first main result addresses this by achieving, for bounded degree graphs, what was previously only known to be possible with randomness: we provide an FPTAS for the number of matchings of any given size bounded away from the maximum matching.  Let $m_k(G)$ be the number of matchings of $G$ of size $k$ and let $m^*(G)$ be the size of the maximum matching of $G$.

\begin{theorem}
\label{thm:matching-fptas}
Let $\Delta \geq 3$ and $\delta \in (0,1)$. There exists a deterministic algorithm which, on input a graph $G = (V,E)$ on $n$ vertices of maximum degree at most $\Delta$, an integer $1\leq k \leq (1-\delta)m^*(G)$, and an error parameter $\epsilon \in (0,1)$, outputs an $\epsilon$-relative approximation to $m_k(G)$ in time $(n/\epsilon)^{O_{\delta, \Delta}(1)}$.
\end{theorem}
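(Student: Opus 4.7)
The plan is to recover $m_k(G)$ from the matching polynomial $Z_G(\lambda) = \sum_{j \geq 0} m_j(G) \lambda^j$ via Fourier inversion at a carefully chosen positive real fugacity $\lambda = \lambda(k, G)$. By the Heilmann--Lieb theorem, $Z_G$ has only negative real roots, hence is nonvanishing on $\mathbb{C} \setminus (-\infty, 0]$. The Barvinok--Patel--Regts polynomial interpolation method therefore provides a deterministic algorithm that, for any $\lambda$ in a fixed compact subset of this slit plane, produces an $\epsilon$-relative approximation to $Z_G(\lambda)$ in time $(n/\epsilon)^{O_\Delta(1)}$. Letting $\mu_\lambda$ denote the matching Gibbs measure at fugacity $\lambda$, the elementary identity
\[ m_k(G) \lambda^k = \Pr_{M \sim \mu_\lambda}[|M| = k] \cdot Z_G(\lambda) \]
reduces the problem to approximating the point probability $p_k(\lambda) := \Pr_{\mu_\lambda}[|M| = k]$.

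To compute $p_k(\lambda)$, I would calibrate $\lambda$ so that $\mathbb{E}_{\mu_\lambda}[|M|] = k$; the mean $\lambda Z_G'(\lambda)/Z_G(\lambda)$ is strictly increasing from $0$ to $m^*(G)$, so since $k \leq (1-\delta) m^*(G)$ this calibration can be carried out by binary search using Patel--Regts to evaluate $Z_G$ (and a finite difference for its logarithmic derivative), producing a $\lambda$ in a compact interval $[\lambda_-, \lambda_+]$ with $\lambda_+ < \infty$ depending only on $\delta$ and $\Delta$. Fourier inversion then gives
\[ p_k(\lambda) = \frac{1}{2\pi} \int_{-\pi}^{\pi} e^{-ikt} \frac{Z_G(\lambda e^{it})}{Z_G(\lambda)} \, dt, \]
which I plan to approximate by quadrature on $N = (n/\epsilon)^{O(1)}$ sample points $\lambda e^{it_j}$, evaluating $Z_G$ at each complex argument via Patel--Regts.

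The main obstacle, and the conceptual heart of the paper's framework, is a \emph{local central limit theorem} for $|M|$ under $\mu_\lambda$. Specifically, I need to establish that
\[ \left| \frac{Z_G(\lambda e^{it})}{Z_G(\lambda)} \right| \leq \exp\!\paren{-c \var_{\mu_\lambda}(|M|) \, t^2} \quad \text{for } |t| \leq t_0, \]
together with a uniform bound $|Z_G(\lambda e^{it})/Z_G(\lambda)| \leq e^{-c' n}$ on $t_0 \leq |t| \leq \pi$, where $c, c', t_0 > 0$ depend only on $\delta$ and $\Delta$. The near-origin Gaussian estimate follows by expanding $\log Z_G(\lambda e^{it})$ to second order, using the second cumulant $\var_{\mu_\lambda}(|M|) = \Omega(n)$, which in this regime can be verified by matching standard regularity estimates for the matching polynomial at real fugacities bounded in $[\lambda_-, \lambda_+]$. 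The far-field decay exploits real-rootedness: writing $Z_G(z) = \prod_j (z + r_j)$ with $r_j > 0$,
\[ \left| \frac{Z_G(\lambda e^{it})}{Z_G(\lambda)} \right|^2 = \prod_j \frac{\lambda^2 + 2 \lambda r_j \cos t + r_j^2}{(\lambda + r_j)^2}, \]
and a counting argument based on $\var_{\mu_\lambda}(|M|) = \sum_j \lambda r_j/(\lambda + r_j)^2 = \Omega(n)$ extracts $\Omega(n)$ roots $r_j$ comparable to $\lambda$, each contributing a factor bounded away from $1$ whenever $t$ is bounded away from $0$.

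With the local CLT in hand the algorithm assembles cleanly: calibrate $\lambda$, select $t_1, \ldots, t_N$ on a uniform grid, compute $\epsilon/\operatorname{poly}(n)$-relative approximations to each $Z_G(\lambda e^{it_j})$ via Patel--Regts, sum the trapezoidal quadrature for $p_k(\lambda)$, and output $m_k(G) \approx p_k(\lambda) \cdot Z_G(\lambda) / \lambda^k$. The LCLT concentrates the integrand on a window of width $O(1/\sqrt{n})$, so truncation and quadrature errors shrink as $\exp(-\Omega(n))$ and the total error is controlled by the Patel--Regts approximation, which is $(n/\epsilon)^{O_{\delta,\Delta}(1)}$. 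I expect the substantive work to lie in proving the LCLT quantitatively and uniformly across the regime $k \leq (1-\delta) m^*(G)$; everything else is a careful error analysis of complex polynomial interpolation combined with the standard identities above.
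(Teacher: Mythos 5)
Your overall framework is the same as the paper's: calibrate a real fugacity $\lambda$ so that the mean matching size under $\mu_\lambda$ equals $k$, express $m_k(G)\lambda^k/Z_G(\lambda)$ as a Fourier-inversion integral of the normalized characteristic function, prove an LCLT by combining a near-origin Gaussian bound with far-field decay, evaluate $Z_G$ at complex fugacities via Barvinok--Patel--Regts, and approximate the integral by a Riemann sum. Where you genuinely diverge is in the proof of the LCLT bounds, and in a way that is worth noting. For the far-field decay you go directly through the Heilmann--Lieb/Chudnovsky--Seymour real-rootedness: writing $Z_G(z)=\prod_j(z+r_j)$ with $r_j>0$, each factor of $|Z_G(\lambda e^{it})/Z_G(\lambda)|^2$ equals $1-\frac{2\lambda r_j(1-\cos t)}{(\lambda+r_j)^2}$, which (summed in the exponent and using the identity $\var_\lambda|M|=\sum_j \lambda r_j/(\lambda+r_j)^2$) gives $|Z_G(\lambda e^{it})/Z_G(\lambda)|^2\le\exp\bigl(-2(1-\cos t)\,\var_\lambda|M|\bigr)$ for all $|t|\le\pi$ --- cleaner, in fact, than the paper's Lemma~\ref{lem:high-fourier}, which works with a $4$-separated vertex set and a conditional-independence decomposition in the style of Dobrushin--Tirozzi. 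Similarly your near-origin Taylor expansion of $\log Z_G(\lambda e^{it})$ amounts to the classical observation that $|M|$ is a sum of independent Bernoullis when $Z_G$ is real-rooted; the paper instead invokes the Michelen--Sahasrabudhe CLT (which needs only a zero-free neighborhood of $1$) and converts it via integration by parts (Lemma~\ref{lem:low-fourier}). The trade-off is that the paper's proofs of both lemmas transfer verbatim to the hard-core model on general bounded-degree graphs, where only the complex zero-free region of Theorem~\ref{thm:zero-free} is available; your route is specialized to matchings and claw-free graphs but buys a more elementary LCLT.

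Two places where your sketch is optimistic and would need the paper's more careful bookkeeping. First, the claim that the truncation and quadrature errors decay like $\exp(-\Omega(n))$ is too strong: the far-field tail does decay exponentially in $\var_\lambda|M|\,t_0^2$, but the Riemann-sum error is governed by the mesh size and the Lipschitz constant of the integrand, and the paper obtains only a polynomial-in-$\epsilon$ bound (mesh $\epsilon^{100}$, total error $O(\epsilon^{75})$), which is absorbed because $\mathbb P_\lambda[|M|=k]=\Omega_{\delta,\Delta}(1/\sigma)$ is only polynomially small. Second, your calibration produces $\lambda$ in a fixed interval $[\lambda_-,\lambda_+]$ only when $k=\Theta(n)$; for $k=o(n)$ one has $\lambda\sim k/n\to0$, so $\var_\lambda|M|=\Theta(n\lambda)$ is not $\Omega(n)$ and both your Gaussian expansion and your $e^{-c'n}$ far-field bound degrade. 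The paper quietly handles this regime separately (the proof of the companion EPTAS, Theorem~\ref{thm:eptas}, explicitly falls back on the cluster expansion for $k$ below $e^{-5}n/(\Delta+1)$, and Section~\ref{secCluster} supplies the tools); you would need to do the same or track the $\lambda$-dependence in all the bounds as the paper does via $\var_\lambda|M|=\Theta(n\lambda)$ from Lemma~\ref{lem:var-bound}.
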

Here, by an $\epsilon$-relative approximation to $m_k(G)$, we mean that the output $A$ satisfies $e^{-\epsilon}m_k(G) \leq A\leq e^{\epsilon}m_k(G)$. 

We next prove the corresponding result for independent sets. Recall that the hard-core model on a graph $G = (V,E)$ at fugacity $\lambda \in \mb{R}_{\ge 0}$ is the probability distribution on $\mc{I}(G)$, the independent sets of $G$, defined by
\[\mu_{G,\lambda}(I) = \frac{\lambda^{|I|}}{Z_G(\lambda)},\]
where $Z_G(\lambda) = \sum_{I \in \mc{I}(G)}\lambda^{|I|} $ is the independence polynomial of $G$. 
For $\Delta \ge 3$, let 
 $$\lambda_c(\Delta) = \frac{(\Delta-1)^{\Delta-1}}{(\Delta-2)^{\Delta}};$$
 this is the uniqueness threshold for the hard-core model on the infinite $\Delta$-regular tree.  For $ 0\leq \lam < \lam_c(\Delta)$,  Weitz  gave an FTPAS for $Z_G(\lambda)$ on the class of graphs of maximum degree $\Delta$~\cite{weitz2006counting}.  Sly~\cite{sly2010computational}, Sly and Sun~\cite{sly2014counting}, and Galanis, {\v{S}}tefankovi{\v{c}}, and Vigoda~\cite{galanis2016inapproximability} complemented this by showing that for $\lambda > \lambda_c(\Delta)$, no FPRAS for $Z_G(\lam)$ exists unless $\on{NP}=\on{RP}$. 
 
 Recently, Davies and Perkins \cite{DP21} showed an analogous threshold for counting independent sets of a given size in bounded degree graphs. Let 
\[ \alpha_c(\Delta) = \frac{\lam_c(\Delta)}{1+(\Delta+1)\lam_c(\Delta)} = \frac{(\Delta-1)^{\Delta-1}}{(\Delta-2)^{\Delta}+(\Delta+1)(\Delta-1)^{\Delta-1}};\]
this is the occupancy fraction (i.e.~the expected density of an independent set) for the hard-core model on the clique on $\Delta + 1$ vertices at the critical fugacity $\lambda_c(\Delta)$.
They showed that for $\alpha < \alpha_c(\Delta)$, there is an FPRAS for $i_k(G)$ (the number of independent sets in $G$ of size $k$) for any  $G$ of maximum degree $\Delta$ on $n$ vertices and any $k \le \alpha n$; conversely, no FPRAS exists for $k \ge \alpha n$ for $\alpha > \alpha_c(\Delta)$ unless $\on{NP} = \on{RP}$.  The algorithm of~\cite{DP21} uses randomness in an essential way and the authors conjectured the existence of an FPTAS for $i_k(G)$ for $k\leq \alpha n$ where $\alpha <\alpha_c(\Delta)$~\cite[Conjecture~1]{DP21}.  We prove this conjecture. 
\begin{theorem}
\label{thm:independent-fptas}
Let $\Delta \geq 3$ and $\delta \in (0,1)$. There exists a deterministic algorithm which, on input a graph $G = (V,E)$ on $n$ vertices of maximum degree at most $\Delta$, an integer $1\leq k \leq (1-\delta)n\alpha_{c}(\Delta)$, and an error parameter $\epsilon \in (0,1)$, outputs an $\epsilon$-relative approximation to $i_k(G)$ in time $(n/\epsilon)^{O_{\delta, \Delta}(1)}$.
\end{theorem}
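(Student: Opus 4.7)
The plan is to extract $i_k(G)$ from the independence polynomial by Fourier inversion on a circle of carefully chosen radius, and to control the resulting integral using deterministic complex partition function algorithms together with the advertised local central limit theorem. For any $r>0$,
\[
i_k(G)\, r^k \;=\; \frac{1}{2\pi}\int_{-\pi}^{\pi} Z_G(r e^{i\theta})\, e^{-ik\theta}\, d\theta,
\]
so the task reduces to approximating this integral numerically. The radius I would use is $r=\lam^*$, chosen so that $\E_{\mu_{G,\lam^*}}[|I|]=k$. Since the expected size under $\mu_{G,\lam}$ is continuous and monotone in $\lam$, and since the occupancy fraction at $\lam_c(\Delta)$ is at least $\alpha_c(\Delta)$ on every graph of maximum degree $\Delta$ (the extremal inequality underlying \cite{DP21}), the hypothesis $k\leq (1-\delta)\alpha_c(\Delta)n$ guarantees that such a $\lam^*$ exists and is bounded away from $\lam_c(\Delta)$ by a gap $\delta'=\delta'(\delta,\Delta)>0$.

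With this choice, $\mu_{G,\lam^*}(|I|=k) = i_k(G)\lam^{*k}/Z_G(\lam^*)$. The new local central limit theorem for the hard-core model, valid throughout $[0,\lam_c(\Delta))$, will yield a lower bound $\mu_{G,\lam^*}(|I|=k)\geq c_{\delta,\Delta}/\sqrt{n}$, so that the quantity of interest $i_k(G)\lam^{*k}$ is only a polynomial factor smaller than $Z_G(\lam^*)$. Consequently, to obtain an $\epsilon$-relative approximation to $i_k(G)$ it is enough to approximate the Fourier integral to additive error of order $\epsilon\, Z_G(\lam^*)/\sqrt{n}$.

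I would split the integral into a central window $|\theta|\leq T$ and its complement. On the tail the LCLT should furnish a decay estimate of the form $|Z_G(\lam^* e^{i\theta})|\leq Z_G(\lam^*)e^{-c_{\delta,\Delta} n\theta^2}$ in an initial Gaussian regime plus super-polynomially small values on the rest of $[-\pi,\pi]$; choosing $T$ as a small power of $n^{-1/2}$ (or a polylog-factor above $n^{-1/2}$) makes the tail contribution negligible. On the central window I would evaluate $Z_G(\lam^* e^{i\theta})$ at $\on{poly}(n/\epsilon)$ equally spaced nodes via Patel--Regts-style polynomial interpolation, which runs in time $(n/\epsilon)^{O_\Delta(1)}$ whenever $\lam^* e^{i\theta}$ lies in a known zero-free region of the independence polynomial for graphs of maximum degree $\Delta$ (e.g.\ the Peters--Regts region around $[0,\lam_c(\Delta))$). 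Since $\lam^*$ is bounded away from $\lam_c(\Delta)$ and $T$ is small, all nodes lie safely inside this region. A standard quadrature rule then converts pointwise evaluations into an approximation of the integral, with the overall error budget split between quadrature, pointwise evaluation, and the tail.

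The main obstacle is establishing the local central limit theorem for the hard-core model throughout the entire uniqueness regime. Existing LCLTs cover restricted ranges (small $\lam$, specific graph classes, or regimes accessible to spectral independence arguments), but the application requires uniform control right up to $\lam_c(\Delta)$, where correlations become arbitrarily long-range. The route I would pursue couples complex-analytic control of $\log Z_G$ on a strip around the positive real axis (inherited from zero-free region results and bounds on their log-derivatives) with probabilistic fluctuation estimates that upgrade pointwise concentration to a point-probability statement of the form $\mu_{G,\lam^*}(|I|=k)\geq c/\sqrt{n}$. A secondary input is the lower bound $\var_{\mu_{G,\lam^*}}(|I|)=\Theta_{\delta,\Delta}(n)$, which should follow by differentiating the occupancy-fraction arguments of \cite{DP21} and using that $\lam^*$ is bounded away from $0$ and from $\lam_c(\Delta)$. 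Together, the zero-free region (for algorithmic evaluation) and the LCLT (for both the denominator lower bound and the tail decay) should suffice to complete the proof.
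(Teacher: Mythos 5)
Your proposal follows essentially the same approach as the paper: choose the fugacity $\lambda^*$ so that the expected independent set size matches $k$ (bounded away from $\lambda_c(\Delta)$ by the occupancy-fraction extremal inequality), express $i_k(G)\lambda^{*k}/Z_G(\lambda^*)$ via Fourier inversion as an integral of the characteristic function, split into a central window evaluated by Patel--Regts quadrature inside the zero-free region and a tail controlled by a Gaussian decay of the characteristic function, and use the LCLT lower bound $\mu_{G,\lambda^*}(|I|=k)=\Omega(1/\sqrt{n})$ to convert additive error into relative error. The only minor imprecision is attributing the tail decay $|Z_G(\lambda^*e^{i\theta})|\le Z_G(\lambda^*)e^{-c n\theta^2}$ to the LCLT itself; in the paper this is a separate ingredient (\cref{lem:high-fourier}, proved by a Dobrushin--Tirozzi-style decomposition into well-separated neighborhoods) that is used \emph{in} the proof of the LCLT rather than deduced from it, and it holds on all of $[-\pi,\pi]$ rather than merely an ``initial Gaussian regime.''
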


We remark that a deterministic algorithm for approximating $m_k$ and $i_k$ via the cluster expansion is implicit in~\cite{davies2020proof}, but only for $k $ much smaller than the bounds above.

Next, we turn to the problem of uniformly sampling matchings and independent sets of a given size. For fixed $\Delta \geq 3$, $\delta \in (0,1)$,  Davies and Perkins \cite{DP21} gave an algorithm for $\epsilon$-approximately sampling (i.e.~within $\epsilon$ in total variation distance) from the uniform distribution on $\mc{I}_k(G)$ (the independent sets of size $k$ in $G$) for a graph $G = (V,E)$ on $n$ vertices with maximum degree at most $\Delta$, for all $1\leq k \leq (1-\delta)n\alpha_c(\Delta)$. The running time of their algorithm is $\tilde{O}_{\delta, \Delta}(n^3)$, where the $\tilde{O}$ conceals polylogarithmic factors in $n$ and $1/\epsilon$. For the more restricted range $1\leq k \leq (1-\delta)n/(2(\Delta+1))$, it was already shown by Bubley and Dyer \cite{bubley1997path} that the down-up walk on $\mc{I}_k(G)$ has $\epsilon$-mixing time $O_{\delta}(n\log(n/\epsilon))$, which is optimal up to constants (see also~\cite{alev2020improved} which gave fast mixing for a larger range of $k$ in graphs satisfying a spectral condition).  The down-up walk is the following Markov chain on $\mc{I}_k(G)$: at each  step, given the current independent set $I \in \mc{I}_k(G)$, choose a uniformly random vertex $v \in I$ and a uniformly random vertex $w \in V$. Let $I' = (I \setminus v) \cup w$. If $I' \in \mc{I}_k(G)$, then move to $I'$; else, stay at $I$. 

For matchings, a polynomial time algorithm for approximately sampling from the uniform distribution on matchings of size $k$, for all $1\leq k \leq (1-\delta)m^*(G)$, is present in the work of Jerrum and Sinclair \cite{jerrum1989approximating}. The running time of their algorithm scales at least as $n^{7/2}$. In the case of graphs of maximum degree at most $\Delta$, a recent result of Chen, Liu, and Vigoda \cite{CLV20}  combined with a rejection sampling procedure (and \cref{lem:lambda-determine} below) provides an algorithm for this task running in time $\tilde{O}_{\delta, \Delta}(n^{3/2})$. 

It was conjectured in \cite[Conjecture~2]{DP21} that the down-up walk on $\mc{I}_k(G)$ mixes rapidly for all $1\leq k \leq (1-\delta)n\alpha_c(\Delta)$ on all graphs on $n$ vertices of maximum degree at most $\Delta$. A stronger conjecture is that the $\epsilon$-mixing time of the chain for this range of $k$ is $O_{\delta, \Delta}(n\log(n/\epsilon))$. While not resolving this specific conjecture, our next main result provides an approximate sampling algorithm for $1\leq k \leq (1-\delta)n\alpha_c(\Delta)$ (and $1\leq k \leq (1-\delta)m^*(G)$ in the case of matchings) running in quasi-linear time, which matches (up to a small polylogarithmic factor) the conjectured mixing time of the down-up walk. 
\begin{theorem}
\label{thm:faster-sampling}
Let $\Delta \geq 3$ and $\delta \in (0,1)$. There is a randomized algorithm which, on input a graph $G = (V,E)$ on $n$ vertices of maximum degree at most $\Delta$, an integer $1 \leq k \leq (1-\delta)n\alpha_c(\Delta)$, and an error parameter $\epsilon \in (0,1)$ outputs a random independent set $I \in \mc{I}_k(G)$ such that the total variation distance between the distribution of $I$ and the uniform distribution on $\mc{I}_k(G)$ is at most $\epsilon$. The running time of the algorithm is $O_{\delta, \Delta}(n\log(n/\epsilon)(\log n)^3 + n\log(n/\epsilon)\log n\log(1/\epsilon)^{3/2})$. 

There is also a randomized algorithm with the same guarantee and running time for matchings of size $k$ for all $1\leq k \leq (1-\delta)m^*(G)$. 
\end{theorem}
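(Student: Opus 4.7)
The plan is to reduce uniform sampling from $\mc{I}_k(G)$ to approximate sampling from the hard-core model $\mu_{G,\lam^\ast}$ at a carefully tuned subcritical fugacity $\lam^\ast < \lam_c(\Delta)$, and then to use the new local central limit theorem (LCLT) for the hard-core model to condition on the event $|I|=k$. The matching analogue proceeds identically, with the hard-core model replaced by the monomer--dimer model at a subcritical edge activity, using Heilmann--Lieb zero-freeness together with Fourier inversion to supply the corresponding LCLT.

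The first step is to use the FPTAS from \cref{thm:independent-fptas} together with binary search (encapsulated in \cref{lem:lambda-determine}) to find $\lam^\ast$, bounded away from $\lam_c(\Delta)$ by a constant depending only on $\delta$ and $\Delta$, such that $\E_{\mu_{G,\lam^\ast}}[|I|]$ equals $k$ up to $o(1)$; this is a one-time polynomial-time setup whose cost is absorbed into the $O_{\delta,\Delta}(\cdot)$. The second step is to approximately sample $I \sim \mu_{G,\lam^\ast}$ in quasi-linear time via the Glauber dynamics of Chen--Liu--Vigoda, whose mixing time below the uniqueness threshold is $\tilde{O}_\Delta(n\log(n/\epsilon))$. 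Since $\mu_{G,\lam^\ast}$ conditioned on $|I|=k$ is precisely uniform on $\mc{I}_k(G)$, the third step is to condition on the size.

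The main obstacle is achieving quasi-linear rather than $\tilde{O}(n^{3/2})$ running time: the LCLT delivers $\Pr_{\mu_{G,\lam^\ast}}[|I|=k] = \Theta(1/\sqrt{n})$, so pure rejection is too slow. I would split the conditioning into two stages. Stage (a) accepts any sample whose size lies in a window of radius $W = O(\sqrt{n})$ around $k$; by the LCLT this window carries constant mass, so $O(1)$ calls to the CLV sampler suffice to obtain such an $I$. Stage (b) converts $I$ to a sample of size exactly $k$ by running a short local-move chain (for instance a down-up walk restricted to sizes within a slightly larger window), whose mixing time to the uniform distribution on $\mc{I}_k(G)$ is $\tilde{O}(W \cdot \mathrm{poly}(\log(n/\epsilon)))$, accounting for the various polylogarithmic factors in the stated running time. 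The hardest ingredient is the LCLT itself: it must hold uniformly for all $\lam \le \lam_c(\Delta) - \kappa(\delta,\Delta)$, and is proved by Fourier inversion, combining deterministic approximation of $Z_G(\lam e^{i\theta})$ for small $\theta$ (via a zero-free region in a complex neighborhood of $[0,\lam_c(\Delta))$) with probabilistic cancellation bounds on the characteristic function for $\theta$ bounded away from $0$.
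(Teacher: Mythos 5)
Your overall framing — tune $\lambda^*$, sample from the hard-core model via Chen--Liu--Vigoda Glauber dynamics, then condition on $|I|=k$ using the LCLT to beat the $\tilde{O}(n^{3/2})$ barrier of naive rejection — matches the paper's strategy through the first two steps. However, your Stage~(b) has a genuine gap that the paper goes to considerable lengths to avoid.

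You propose to first land in a window of size $W=O(\sqrt{n})$ around $k$ (fine, constant acceptance probability by the LCLT), and then ``convert $I$ to a sample of size exactly $k$ by running a short local-move chain (for instance a down-up walk restricted to sizes within a slightly larger window), whose mixing time to the uniform distribution on $\mc{I}_k(G)$ is $\tilde{O}(W\cdot\mathrm{poly}(\log(n/\epsilon)))$.'' No such mixing bound is available. Rapid mixing of the down-up walk on $\mc{I}_k(G)$ for $k$ up to $(1-\delta)\alpha_c(\Delta)n$ is precisely \cite[Conjecture~2]{DP21}, which the paper explicitly states it does \emph{not} resolve; the Bubley--Dyer bound applies only for $k\le(1-\delta)n/(2(\Delta+1))$, a strictly smaller range. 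Restricting to a window of sizes does not help: you would still need to argue both that the restricted chain mixes in $\tilde{O}(\sqrt n)$ steps and that the conditional distribution on the exact-size slice converges to uniform, neither of which follows from known results. Without this ingredient the claimed running time has no justification.

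The paper's actual route sidesteps local-move dynamics on fixed-size configurations entirely. After finding $\lambda$ via \cref{lem:lambda-determine}, it performs rejection sampling where the base distribution is (effectively) $\mu_{G,\lambda}$ conditioned on $|I|\equiv k\bmod p$ with $p=\tilde\Theta(\sqrt{\on{Var}_\lambda Y})$; the LCLT then gives acceptance probability $\tilde\Omega(1)$. The technical heart is showing one can sample from this modularly-conditioned distribution in $\tilde O(n)$ time: one samples a ``core'' $J=I\cap T$ by Glauber dynamics, exploits the conditional independence of the neighborhoods $I\cap N(v)$ for $v$ in a $4$-separated set $S$, and uses a coupling (the $W_v$ variables taking value $?$ with probability $2q$) to expose a large subset $S^*$ whose occupancies are conditionally i.i.d.\ $\mathrm{Ber}(1/2)$, so that the residual size constraint reduces to a hypergeometric sampling problem solvable exactly in near-linear time. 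This multi-stage decomposition is the missing idea in your proposal. (A minor additional point: you suggest using the FPTAS of \cref{thm:independent-fptas} to find $\lambda^*$, which would be too slow for the stated running time; the paper uses the randomized \cref{lem:lambda-determine}, which you also cite, so this is only a confusion of attribution rather than a gap.)
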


A natural extension of the unresolved conjecture of~\cite{DP21} is that the down--up walk for matchings of size $k$ is rapidly mixing in the setting of Theorem~\ref{thm:faster-sampling}. 
\begin{conjecture}
The down--up walk for matchings of size $k$ mixes in time $O_{\Delta,\delta}(n \log(n/\epsilon))$ for graphs $G$ of maximum degree $\Delta$ and  $1\leq k \leq (1-\delta)m^*(G)$.
\end{conjecture}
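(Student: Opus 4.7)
My plan is to follow the spectral and entropic independence framework that has driven the recent wave of optimal mixing results for spin systems. The natural starting point is the Heilmann--Lieb theorem, which states that the multivariate matching generating polynomial $p_G(\mbf{x}) = \sum_{M \in \mc{M}(G)} \prod_{e \in M} x_e$ is real stable in all edge variables. Via the Borcea--Br\"and\'en correspondence and the Br\"and\'en--Huh theory of Lorentzian polynomials, stability implies that the homogenized degree-$k$ polynomial is completely log-concave, from which one should be able to extract pairwise spectral independence for the uniform measure on $\mc{M}_k(G)$: the $\ell^\infty \to \ell^\infty$ norm of the edge influence matrix $\Psi_{e,f} = \Pr[e \in M \mid f \in M] - \Pr[e \in M]$ should be bounded by some $\eta = \eta(\Delta, \delta)$, uniformly over conditionings by partial matchings.

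Granted such a uniform spectral independence bound, the standard recipe is to combine it with a uniform lower bound on edge marginals and invoke the local-to-global theorems (Anari--Liu--Oveis Gharan; Chen--Liu--Vigoda; Anari--Jain--Koehler--Liu--Vuong) to upgrade spectral independence to a modified log-Sobolev or entropy-factorization inequality for the down--up walk on $\mc{M}_k(G)$, which directly yields the desired $O_{\Delta, \delta}(n\log(n/\epsilon))$ mixing rate. The required lower bound on marginals should follow from the $(1-\delta)$ slack together with zero-freeness of the matching polynomial along a suitable interval, in the spirit of the polynomial interpolation estimates used in this paper.

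The main obstacle is the first step: extracting a \emph{uniform} spectral independence constant from Heilmann--Lieb that degrades only with $\Delta$ and $\delta$, not with $n$. Near the maximum matching (that is, as $k \to m^*(G)$), pairwise influences can blow up, since fixing one edge can fundamentally constrain which other edges are available in a near-maximum matching; the threshold $k \le (1-\delta) m^*(G)$ is precisely what should rule this out, but leveraging it quantitatively is delicate. A promising route is to first work at the fugacity $\lam_k$ at which the monomer--dimer distribution has expected matching size $k$ (produced by \cref{lem:lambda-determine}); at this fugacity, optimal mixing of Glauber dynamics on \emph{all} matchings is already within reach via Heilmann--Lieb combined with Chen--Liu--Vigoda. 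One would then transfer rapid mixing to the fixed-size down--up walk by a size-conditioning argument, exploiting the local central limit theorem for the monomer--dimer model proved in this paper to show that the size-$k$ event has Gaussian-like probability $\Theta(1/\sqrt{n})$ on the relevant window. Carrying out this transfer \emph{without} losing the optimal $n \log n$ rate, rather than incurring extra factors such as $\sqrt{n}$ or $n$ from coupling or comparison arguments, is where I expect the real technical difficulty to lie.
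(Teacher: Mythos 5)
This statement is \cref{thm:faster-sampling}'s companion \emph{conjecture}, which the paper explicitly leaves open; the authors deliberately sidestep it by designing a different quasi-linear sampling algorithm (\cref{thm:faster-sampling}) rather than analyzing the down--up walk. So there is no proof in the paper to compare against, and your proposal is, by your own account, a plan with acknowledged gaps rather than a proof.

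Beyond the acknowledged gaps, your first route contains a concrete error. You assert that Heilmann--Lieb stability implies the ``homogenized degree-$k$ polynomial is completely log-concave.'' This cannot be right: a Lorentzian (equivalently, completely log-concave) polynomial has M-convex support by Br\"and\'en--Huh, and the set $\mc{M}_k(G)$ is not M-convex even far below the maximum matching. Take $G$ to be two disjoint copies of $K_4$ and $k=2$, so $m^*(G)=4$ and $k = m^*(G)/2$. The two matchings $M_1 = \{12,34\}$ and $M_2 = \{13,24\}$ inside the first $K_4$ admit no basis exchange: removing $12$ from $M_1$ leaves $\{34\}$, and adding either $13$ or $24$ produces a non-matching. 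Hence $\sum_{|M|=k}\prod_{e\in M}x_e$ is not Lorentzian, and there is no ``homogenization'' fix available because truncating a strongly Rayleigh measure to a fixed size does not preserve strong Rayleighness (it cannot, by the same M-convexity obstruction). So the clean ``real stability $\Rightarrow$ complete log-concavity $\Rightarrow$ spectral independence'' chain breaks at the first link; if spectral independence for $\mc{M}_k(G)$ holds, it must be extracted by some other means (e.g., directly from zero-freeness of the univariate matching polynomial, in the spirit of \cref{thm:zero-free}, or from correlation-decay arguments), and that extraction is exactly the open part.

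Your second route --- transferring optimal mixing from the Glauber dynamics at fugacity $\lambda_k$ to the fixed-size down--up walk using the LCLT --- is the morally correct intuition for \emph{why} the conjecture should be true, and it is the same intuition behind the paper's sampling algorithm. But, as you note, no known comparison or conditioning argument converts rapid mixing of one chain into $O(n\log(n/\epsilon))$ mixing of a structurally different chain without losing polynomial factors; the LCLT controls a single probability $\mb{P}_\lam[|M|=k]$, not the full spectral or entropic structure of the conditioned chain. This is precisely why the paper resorts to rejection sampling with a modular trick rather than proving mixing of the down--up walk, and why the statement remains a conjecture.
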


\subsection{Local central limit theorems} In previous works on approximately counting matchings \cite{jerrum1989approximating} and independent sets \cite{DP21} of a given size, a common approach is followed: sample a matching or independent set from the monomer-dimer model (i.e.~hard-core model on the line graph) or hard-core model where the fugacity $\lam$ is chosen so that the average size is close to the desired size $k$. If the fugacity $\lambda$ is such that sampling from the corresponding monomer-dimer or hard-core model can be done efficiently, and if one can further show that the probability of obtaining a matching or independent set of size exactly $k$ is only polynomially small, then na\"ive rejection sampling gives an efficient sampling algorithm (which can be converted into an FPRAS for $m_k(G)$ or $i_k(G)$ by standard self-reducibility techniques). In the proofs of \cref{thm:matching-fptas,thm:independent-fptas}, we show how to implement a version of this idea deterministically.   

Recall that a sequence of random variables $X_n$ with mean $\mu_n$ and variance $\sigma^2_n$ is said to satisfy a central limit theorem (CLT) if for all $a \leq b \in \mb{R}$,
\[\mb{P}\left[a \leq \frac{X_n - \mu_n}{\sigma_n} \leq b\right] = \frac{1}{\sqrt{2\pi}}\int_{a}^{b}e^{-x^2/2}dx + o_n(1).\]
In particular, central limit theorems provide control on the probability that $X_n$ lies in an interval of length $\Theta(\sigma_n)$. A much more precise notion is that of a local central limit theorem (LCLT). We say that a sequence of integer-valued random variables $X_n$ with mean $\mu_n$ and variance $\sigma^2_n$ satisfies an LCLT if for all integers $k$, 
\[ \mathbb{P}[X_n=k] = \frac{1}{\sqrt{2 \pi} \sigma_n} e^{ -(k-\mu_n)^2/(2 \sigma_n^2)}  + o_n \left( \sigma_n^{-1}  \right).\]
Returning to the discussion in the previous paragraph, suppose we could deterministically find a fugacity $\lambda$ such that:
\begin{enumerate}[(a)]
\item The expected size (to the nearest integer) of a matching or independent set drawn from the  monomer-dimer or hard-core model is $k$.
\item There is an FPTAS for the the partition function, expectation, and variance of the monomer-dimer or hard-core model at $\lambda$.
\item The size of a matching or independent set drawn from the monomer-dimer or hard-core model at $\lambda$ satisfies an LCLT.
\end{enumerate}
Then, from (a) and (c), we have that (as long as $\sigma_n = \omega_n(1)$)
\[\frac{m_k(G_n)\lambda^k}{\sum_{j=0}^{n}m_j(G_n)\lambda^j}=\mb{P}[Y_n = k] = (1+o_n(1))\frac{1}{\sqrt{2\pi}\sigma_n}e^{-(k-\mu_n)^2/(2\sigma_n^2)},\]
and similarly for independent sets. 
Together with (b), this immediately gives a deterministic algorithm for approximating $m_k(G_n)$ or $i_k(G_n)$ to within a factor of $(1+\epsilon)(1+o_n(1))$ in time which is polynomial in $n$ and $1/\epsilon$.

For the range of $k$ covered by \cref{thm:matching-fptas,thm:independent-fptas}, a fugacity $\lambda$ satisfying (a) and (b) does indeed exist and can be found deterministically using (b) along with a binary search procedure. In particular, for the hard-core model, such a $\lambda$ satisfies $\lambda < \lambda_c(\Delta)$. Moreover, by the Heilman--Lieb theorem~\cite{heilmann1972theory}, the roots of the partition function of the monomer-dimer model are restricted to the negative real line, from which it follows that the size of a random matching is distributed as the sum of independent Bernoulli random variables and thus, for $\lambda$ not too small, satisfies an LCLT~\cite{godsil1981matching}. However, for the hard-core model, the corresponding LCLT for all $\lambda < \lambda_{c}(\Delta)$  was not known. In fact, even the much weaker statement that $\mb{P}_{\lambda}[|I| = \lfloor \mu_\lambda \rfloor] = \Omega(\sigma_\lambda^{-1})$ (here, the probability $\mb{P}_{\lambda}$, the mean $\mu_{\lambda}$, and the variance $\sigma_{\lambda}^2$ are with respect to the hard-core model with fugacity $\lambda$) was unavailable for all $\lambda < \lambda_c(\Delta)$ and is precisely the content of \cite[Conjecture~3]{DP21}. A key step in our proof is the resolution of this conjecture in the much stronger form of an LCLT.

\begin{theorem}
\label{thm:independent-lclt}
Fix $\Delta\ge 3$ and $\delta \in (0,1)$. Then for any sequence of graphs $G_n$ on $n$ vertices of maximum degree $\Delta$, and any sequence $\lam_n \in \mb{R}^+$ so that $n \lam_n \to \infty$ and $\lam_n \le (1-\delta) \lam_c(\Delta)$, the size of a random independent set drawn from the hard-core model on $G_n$ at fugacity $\lam_n$ satisfies a local central limit theorem. 
\end{theorem}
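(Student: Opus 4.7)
The plan is a Fourier-inversion LCLT, combining a zero-free region for $Z_{G_n}$ in the subcritical regime (for the Gaussian approximation near $t=0$) with a structural product decomposition of the characteristic function (for the bound on $|\phi_n(t)|$ away from $t=0$). Writing $\phi_n(t) = \mb{E}[e^{itX_n}] = Z_{G_n}(\lam_n e^{it})/Z_{G_n}(\lam_n)$, Fourier inversion gives
\[
\mb{P}[X_n = k] = \frac{1}{2\pi}\int_{-\pi}^\pi \phi_n(t)\,e^{-itk}\,dt,
\]
and the LCLT reduces to showing $\int_{-\pi}^{\pi}\bigl|\phi_n(t) - e^{i\mu_n t - \sigma_n^2 t^2/2}\bigr|\,dt = o(\sigma_n^{-1})$. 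As a preliminary I would verify that $\sigma_n^2 \asymp n\lam_n/(1+\lam_n)$ using strong spatial mixing at subcriticality, so the hypothesis $n\lam_n \to \infty$ ensures $\sigma_n \to \infty$.

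For $|t| \le (n\lam_n)^{-1/3-\epsilon_0}$ with a fixed small $\epsilon_0 > 0$, I would invoke Peters--Regts-style zero-freeness: there is $c_0 = c_0(\delta,\Delta) > 0$ such that $Z_{G_n}$ has no zeros in the disk $\{z : |z - \lam_n| \le c_0\,\lam_n\}$. This makes $\log Z_{G_n}(\lam_n e^z)$ analytic near $z=0$, and Cauchy's formula combined with $|\log Z_{G_n}| = O(n)$ yields cumulant bounds $|\kappa_j(X_n)| = O_{j,\delta,\Delta}(n\lam_n)$. A Taylor expansion then gives
\[
\log \phi_n(t) = i\mu_n t - \tfrac{1}{2}\sigma_n^2 t^2 + O\bigl(n\lam_n|t|^3\bigr) = i\mu_n t - \tfrac{1}{2}\sigma_n^2 t^2 + o(1),
\]
so $\phi_n$ matches the Gaussian up to a multiplicative $1+o(1)$ factor on this range, contributing the right Gaussian density to the Fourier integral.

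The main obstacle is the complementary regime $|t| \ge (n\lam_n)^{-1/3-\epsilon_0}$, where I need a uniform bound of the form $|\phi_n(t)| \le \exp\bigl(-c_\delta\, n\lam_n(1-\cos t)\bigr)$. My plan is a telescoping product decomposition along a vertex-elimination order $v_1,\dots,v_n$, writing
\[
\frac{Z_{G_n}(\lam_n e^{it})}{Z_{G_n}(\lam_n)} = \prod_{i=1}^n \frac{1 + \lam_n e^{it}\, q_i(\lam_n e^{it})}{1 + \lam_n\, q_i(\lam_n)},
\]
where $q_i(z) = Z_{G_{i-1}\setminus N[v_i]}(z)/Z_{G_{i-1}\setminus v_i}(z)$ is the ``unoccupied-neighbor'' ratio at $v_i$ in the residual graph $G_{i-1}$, computable via Weitz's self-avoiding-walk tree. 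Subcriticality makes the tree recursion strictly contractive on a complex neighborhood of $\lam_n$, forcing $q_i(\lam_n e^{it})$ to stay close to its real value $q_i(\lam_n) \in [c_1,c_2]$, and the per-factor bound
\[
\left|\frac{1 + \lam_n e^{it}\, q_i(\lam_n e^{it})}{1 + \lam_n\, q_i(\lam_n)}\right|^2 \le 1 - c_\delta\,\lam_n(1-\cos t)
\]
multiplies over the $n$ vertices to $|\phi_n(t)|^2 \le \exp(-c_\delta n\lam_n(1-\cos t))$, dominating $\sigma_n^{-1}$ by a wide margin.

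The hardest step will be this per-factor estimate and its uniform aggregation: the marginals $q_i(z)$ are complex-valued and can rotate with $t$, so I would need to quantify contractivity of the Weitz recursion on complex activities (rather than just real), ruling out phase conspiracies across the $n$ factors. Handling the regime $\lam_n \to 0$ --- where the distribution is nearly $\mathrm{Poisson}(\Theta(n\lam_n))$ and a direct Chen--Stein or total-variation comparison might be simpler --- is a subtlety I would address in parallel with the Fourier approach.
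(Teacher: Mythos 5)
Your high-level framework (Fourier inversion, split the characteristic function into a low-frequency Gaussian regime and a high-frequency tail regime) matches the paper's, and your low-frequency treatment via cumulant bounds from zero-freeness and a Taylor expansion of $\log\phi_n$ is a reasonable, if different, route to what the paper does (the paper instead feeds the zero-free region into the Michelen--Sahasrabudhe central limit theorem and converts the resulting Kolmogorov-distance bound into a characteristic-function bound by integration by parts). The genuine gap is in your high-frequency estimate, and you have correctly flagged where the danger lies but not appreciated how serious it is.

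Your telescoping decomposition
\[
\frac{Z_{G_n}(\lam_n e^{it})}{Z_{G_n}(\lam_n)} = \prod_{i=1}^n \frac{1 + \lam_n e^{it}\, q_i(\lam_n e^{it})}{1 + \lam_n\, q_i(\lam_n)},\qquad q_i(z)=\frac{Z_{G_{i-1}\setminus N[v_i]}(z)}{Z_{G_{i-1}\setminus v_i}(z)},
\]
requires evaluating all the intermediate partition functions $Z_{G_i}$ at points $\lam_n e^{it}$ on the full circle $|z|=\lam_n$. But the zero-free region guaranteed in the subcritical regime (Peters--Regts, building on Shearer) contains the full circle $|z|=\lam_n$ only when $\lam_n$ is below Shearer's radius $(\Delta-1)^{\Delta-1}/\Delta^\Delta$, which is a vanishingly small fraction of $\lam_c(\Delta)=(\Delta-1)^{\Delta-1}/(\Delta-2)^\Delta$ (for $\Delta=3$, Shearer's radius is $4/27\approx 0.15$ while $\lam_c=4$). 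For $\lam_n$ between Shearer's radius and $(1-\delta)\lam_c(\Delta)$ --- which is nearly the entire target range of the theorem --- nothing prevents some $Z_{G_i}$ from vanishing at a point $\lam_n e^{it_0}$, and at such a point the corresponding factor in your product is a pole. No per-factor inequality of the form $\left|\tfrac{1+\lam e^{it}q_i(\lam e^{it})}{1+\lam q_i(\lam)}\right|^2 \le 1-c_\delta\lam(1-\cos t)$ can hold uniformly in $t$ then, so the argument cannot close. Even below Shearer's radius, you would still need to control the complex rotation of $q_i$ around the whole circle; the Weitz-tree contractivity available in the literature is established on a bounded complex neighborhood of the real interval, not on a full circle of radius $\lam_n$, and you have not supplied the missing estimate.

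The paper's high-frequency bound (Lemma 3.5) sidesteps all of this by working entirely with the real measure rather than complex evaluations of partition functions. It picks a $4$-separated set $S$ of $\Omega(n/\Delta^3)$ vertices, conditions the hard-core measure on $I\cap T$ where $T$ is the set of vertices at distance $\ge 2$ from $S$, and observes that the conditional configurations on the disjoint balls $\{v\}\cup N(v)$, $v\in S$, are \emph{mutually independent}. This writes $Y$ as $|J|$ plus a sum of $s=\Omega(n/\Delta^3)$ genuinely independent bounded integer random variables, each with mass $\Omega(1)$ at $0$ and $\Omega(\lam)$ at $1$, and a routine characteristic-function bound for such variables gives $|\mb{E}e^{-itY}|\le \exp(-c n\lam t^2)$ for $|t|\le\pi$. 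There are no poles and no phase conspiracies to rule out, because nothing complex-analytic is being evaluated off the real axis. If you want to keep your low-frequency treatment, replacing your telescoping step with this conditional-independence decomposition would repair the proof.
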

In \cref{secLCLT}, we state and prove a quantitative version of this result (\cref{thm:independent-lclt2}).  A critical part in the proof of the LCLT is the existence of a suitable zero-free region (in the complex plane) for the independence polynomial, which has previously been exploited using Barvinok's method (cf.~\cite{barvinok2016combinatorics}) to devise an FPTAS for the independence polynomial evaluated at $\lambda$ in a certain complex region containing the interval $[0, \lambda_c(\Delta))$ 
on graphs of maximum degree $\Delta$ \cite{patel2017deterministic,PR19}.  

Given the LCLT for $\lambda$ satisfying (a) and (b), the above discussion immediately leads to a $(1\pm \epsilon)(1\pm o_n(1))$-factor approximation of $m_k(G)$ or $i_k(G)$ in time which is polynomial in $n$ and $1/\epsilon$, where the degree of the polynomial is allowed to depend on $\Delta$ and $\delta$. As such, this is only an EPTAS (efficient polynomial-time approximation scheme) since the finest approximation one can achieve with this method is $(1\pm o_n(1))$ (see \cref{sub:EPTAS} for a more detailed discussion). Nevertheless, we show that the \emph{proof} of the LCLT using Fourier inversion can be converted into an FPTAS, thereby providing a (perhaps surprising) connection between the deterministic approximation of the matching polynomial or independence polynomial at \emph{complex} fugacities and the deterministic approximation of suitable coefficients of the matching polynomial or independence polynomial. We note that the computational complexity of evaluating partition functions at complex parameters has received much recent attention \cite{harvey2018computing,PR19,bezakova2019inapproximability,liu2019ising,buys2021lee}. 

In a nutshell, the idea is the following: given a graph $G$ with maximum degree $\Delta$, an error parameter $\epsilon \in (0,1)$, and $k$ as in \cref{thm:independent-fptas}, consider $\lambda < \lambda_{c}(\Delta)$ satisfying (a) and (b). Let $Y=|I|$ denote the size of an independent set drawn from the hard-core model on $G$ at fugacity $\lambda$, let $\mu$ and $\sigma^2$ denote the mean and variance of $Y$, and let $X = (Y-\mu)/\sigma$. By the Fourier inversion formula for lattices, for all $x \in \sigma^{-1}\cdot \mb{Z} - \sigma^{-1}\mu$, 
\[\mb{P}[X = x] = \frac{1}{2\pi \sigma}\int_{-\pi \sigma}^{\pi \sigma} \mb{E}[e^{itX}]e^{-itx}dt,\]
where the probability and expectations are with respect to the hard-core model with fugacity $\lambda$.

In order to approximate $\mb{P}[X = x]$ we consider $\mb{E}[e^{itX}]$. First, observe that
\[\mb{E}[e^{itX}] = e^{-it\mu/\sigma}\cdot \frac{Z_G(\lambda e^{it/\sigma})}{Z_G(\lambda)}.\]
If $t/\sigma$ is sufficiently small so that $\lambda e^{-it/\sigma}$ is in the zero-free region of $Z_G(\lambda)$ (viewed as a univariate polynomial of a single complex variable), then there is an FPTAS for $Z_G(\lambda e^{it/\sigma})$ by Barvinok's method \cite{barvinok2016combinatorics, PR19, patel2017deterministic}. However this method does not handle all $t$. But over the range of $t$ sufficiently large, our proof of the LCLT shows that the integral is bounded by $(\epsilon/2)\mb{P}[X=x]$. It turns out that these two regimes overlap. Therefore using a Riemann sum approximation for the small regime gives an FPTAS for $i_k(G)$.

Towards the proof of \cref{thm:faster-sampling}, given $k \leq (1-\delta)\alpha_c(\Delta)n$, we provide in \cref{lem:lambda-determine} a $\tilde{O}_{\delta, \Delta}(n)$ time randomized algorithm for finding $\lambda$ satisfying $|\mb{E}_{\lambda}Y - k| \leq \sqrt{\on{Var}_{\lambda}Y}$. As mentioned earlier, this can be combined with the $\tilde{O}_{\Delta, \delta}(n)$ mixing of the Glauber dynamics for the hard-core model at fugacity $\lambda$ \cite{CLV20} and rejection sampling to approximately sample from the uniform distribution on $\mc{I}_k(G)$ in time $\tilde{O}_{\Delta, \delta}(n^{3/2})$, since the acceptance probability is $\tilde{\Omega}_{\Delta, \delta}(1/\sqrt{\on{Var}_{\lambda}|I|})$. The main idea underlying our algorithm is that the one may instead perform rejection sampling with the base distribution (effectively) being the hard-core distribution conditioned on $Y\equiv k \bmod p$, with $p = \tilde{\Theta}(\sqrt{\on{Var}_{\lambda}Y})$, while still ensuring that samples from the base distribution can be obtained in time $\tilde{O}(n)$. Given this, the assertion of \cref{thm:faster-sampling} follows since by the LCLT, the acceptance probability is now $\mb{P}_{\lambda}[Y=k]/\mb{P}_{\lambda}[Y\equiv k \bmod p] = \tilde{\Omega}_{\Delta, \delta}(1)$. Thus, the key step is to show that samples from the hard-core distribution conditioned on $Y\equiv k \bmod p$ may still be obtained in time $\tilde{O}(n)$. For this, we use a multi-stage view of sampling from the hard-core model, motivated directly by the proof of the LCLT.

\subsection{Additional results}
In \cref{thm:independent-fptas,thm:matching-fptas}, we gave an FPTAS for $i_k(G)$ and $m_k(G)$ with running times of the form $(n/\epsilon)^{O_{\Delta, \delta}(1)}$. Now, we provide substantially faster randomized algorithms for the same problems. 

In \cite{DP21}, an FPRAS (fully polynomial-time randomized approximation scheme) for $i_k(G)$ was given, for all $1\leq k \leq (1-\delta)n\alpha_c(\Delta)$, with running time $\tilde{O}_{\Delta, \delta}(n^6\epsilon^{-2})$. Combining this algorithm with our near-optimal sampling algorithm improves the running time to $\tilde{O}_{\Delta, \delta}(n^{4}\epsilon^{-2})$. In contrast, it is known that there is an FPRAS for the independence polynomial $Z_G(\lambda)$, for all $0 \leq \lambda \leq (1-\delta)\lambda_c(\Delta)$, with running time $\tilde{O}(n^2\epsilon^{-2})$ (cf.~\cite{vstefankovivc2009adaptive}). Our next result provides an FPRAS for $i_k(G)$ and $m_k(G)$ whose running time exceeds the best-known running time for approximating the independence polynomial or matching polynomial only by a lower order term.   

\begin{theorem}
\label{thm:faster-fpras}
Let $\Delta \geq 3$ and $\delta \in (0,1)$. There is a randomized algorithm which, on input a graph $G = (V,E)$ on $n$ vertices of maximum degree at most $\Delta$, an integer $1\leq k \leq (1-\delta)n\alpha_c(\Delta)$ and an error parameter $\epsilon \in (0,1)$ outputs an $\epsilon$-relative approximation to $i_k(G)$ with probability $3/4$ in time
\[T + O_{\Delta, \delta}(n^{3/2}\log n\log (n/\epsilon)\epsilon^{-2}),\]
where $T$ is the time to find an $\epsilon/2$-relative approximation to $Z_G(\lambda)$ for a given $\lambda \in [0, (1-\delta)\lambda_c(\Delta)]$.

Moreover, there exists a constant $C_{\Delta, \delta} > 0$ such that the same conclusion holds for $m_k(G)$ for all $1 \leq k \leq (1-\delta)m^*(G)$, where $T$ is the time to find an $\epsilon/2$-relative approximation to the matching polynomial $Z_G(\lambda)$ for a given $\lambda \in [0, C_{\Delta, \delta}]$. 
\end{theorem}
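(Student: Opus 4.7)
The plan is to exploit the identity
\[ i_k(G) \;=\; \lambda^{-k}\, Z_G(\lambda)\cdot \mb{P}_\lambda\!\left[\abs{I} = k\right], \]
where $\mb{P}_\lambda$ denotes probability with respect to the hard-core distribution $\mu_{G,\lambda}$. This reduces the approximation of $i_k(G)$ to (i) computing an approximation $\wh Z$ to the real-fugacity partition function $Z_G(\lambda)$ (cost $T$ by hypothesis), and (ii) Monte-Carlo-estimating the point probability $p := \mb{P}_\lambda[\abs{I} = k]$. The fugacity $\lambda$ is chosen via \cref{lem:lambda-determine}: in randomized time $\tilde{O}_{\Delta,\delta}(n)$ we find $\lambda \in [0,(1-\delta')\lam_c(\Delta)]$ (for some $\delta' = \delta'(\delta,\Delta) > 0$) with $\abs{\mb{E}_\lambda \abs{I} - k} \le \sqrt{\var_\lambda \abs{I}}$. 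The decisive input is that the quantitative LCLT (\cref{thm:independent-lclt}, in the form \cref{thm:independent-lclt2}) guarantees $p = \Theta_{\Delta,\delta}(1/\sqrt n)$ at such a $\lambda$, which makes a direct Monte Carlo estimator of $p$ polynomially efficient.

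Concretely, the algorithm executes: (i) call \cref{lem:lambda-determine} to obtain $\lambda$; (ii) use time $T$ to compute $\wh Z$ with $e^{-\epsilon/2}Z_G(\lambda) \le \wh Z \le e^{\epsilon/2}Z_G(\lambda)$; (iii) using Chen--Liu--Vigoda's $\tilde{O}_{\Delta,\delta}(n)$-mixing Glauber-dynamics sampler for $\mu_{G,\lambda}$ at $\lambda < \lam_c(\Delta)$, draw $N = C_{\Delta,\delta}\sqrt n\,\epsilon^{-2}$ independent samples $I_1,\dots,I_N$, each to total-variation distance $\eta = c_{\Delta,\delta}\,\epsilon/\sqrt n$ from $\mu_{G,\lambda}$ (cost $O_{\Delta,\delta}(n\log n\log(n/\epsilon))$ per sample); and (iv) output $\wh p\, \wh Z\, \lambda^{-k}$, where $\wh p := N^{-1}\sum_i \one[\abs{I_i} = k]$.

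The analysis is a short second-moment computation. Chebyshev gives $\mb{P}[\abs{\wh p - p} > (\epsilon/4)p] \le 16/(N p \epsilon^2) \le 1/8$ for $N$ as chosen, while the $\eta$-TV-bias per sample shifts the mean of $\wh p$ by at most $\eta$, comfortably within the remaining $(\epsilon/4)p = \Theta_{\Delta,\delta}(\epsilon/\sqrt n)$ budget. Combining with the $(\epsilon/2)$-approximation $\wh Z$ yields an $e^{\pm\epsilon}$-relative approximation of $i_k(G)$ with probability at least $3/4$, in total time $T + O_{\Delta,\delta}(n^{3/2}\log n\log(n/\epsilon)\epsilon^{-2})$. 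The matchings statement proceeds identically, substituting $\mu_{G,\lambda}$ by the monomer-dimer measure at activity $\lambda$ on $G$ — equivalently the hard-core model on the line graph $L(G)$, which has maximum degree at most $2(\Delta-1)$; the needed LCLT here is classical, following from the Heilmann--Lieb real-rootedness theorem, which writes $\abs{M}$ under the monomer-dimer model as a sum of independent Bernoullis and hence satisfies an LCLT once the variance diverges. The principal obstacle — and the reason this theorem is attainable only now — is the lower bound $p = \Omega_{\Delta,\delta}(1/\sqrt n)$ for the hard-core model throughout the uniqueness regime, previously open as \cite[Conjecture~3]{DP21} and resolved as a corollary of \cref{thm:independent-lclt}.
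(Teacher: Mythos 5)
Your overall plan — choose $\lambda$ via \cref{lem:lambda-determine}, estimate $p = \mb{P}_\lambda[\abs{I}=k]$ by Monte Carlo using the fast Glauber sampler, and multiply by $\lambda^{-k}\wh{Z}$ — matches the core of the paper's argument, and your sample-complexity and running-time bookkeeping is correct in the regime where $p = \Omega_{\Delta,\delta}(1/\sqrt n)$ actually follows from the LCLT.

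The gap is that \cref{thm:independent-lclt2} does \emph{not} give $p = \Omega_{\Delta,\delta}(1/\sqrt n)$ for all $1 \le k \le (1-\delta)n\alpha_c(\Delta)$; you assert this as a blanket consequence of the LCLT, but for small $k$ the quantitative LCLT is vacuous. When $k$ is small, the fugacity produced by \cref{lem:lambda-determine} is $\lambda \approx k/n$, so by \cref{lem:var-bound} one has $\sigma^2 = \Theta_\Delta(\lambda n) = \Theta_\Delta(k)$. The LCLT error bound is $O_{\Delta,\delta}\bigl(\min\bigl((\log n)^{5/2}/\sigma^2,\, 1/\sigma^2 + \sigma^6(\log n)^2/n\bigr)\bigr)$, and for $\sigma = O_{\Delta,\delta}(1)$ — i.e.\ $k = O_{\Delta,\delta}(1)$ — the $1/\sigma^2$ term is of the same order as the main term $1/\sigma$, so the LCLT gives no lower bound on $p$ at all. (And the qualitative \cref{thm:independent-lclt} explicitly requires $n\lambda_n \to\infty$, which fails when $k = O(1)$.) The same issue arises for matchings: the Heilmann--Lieb LCLT for sums of independent Bernoullis also carries an $O(1/\sigma^2)$ error that is useless when the variance is bounded. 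Your $\Theta_{\Delta,\delta}(1/\sqrt n)$ claim is also imprecise even where the LCLT applies — the correct statement is $p = \Theta(1/\sigma)$ with $\sigma = \Theta(\sqrt{\lambda n})$, which is $\Omega(1/\sqrt n)$ but may be much larger for small $k$; only the lower bound matters for your Chebyshev step, so this is cosmetic, but it signals that the small-$k$ regime was not examined.

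The paper closes this hole with a separate elementary argument for $1 \le k \le c_\Delta\sqrt{n}$: a uniformly random $k$-subset of $V$ is independent with probability $p_k = 1 - O(\Delta k^2/n) \ge 1/2$ by a union bound, so $i_k(G) = \binom{n}{k}p_k$ and $p_k$ can be estimated by direct rejection sampling with only $O(\epsilon^{-2})$ samples, in total time $O(k\log n\log\log n) + O_\Delta(k\epsilon^{-2}\log n)$. Your argument, restricted to $k \ge c_{\Delta,\delta}\sqrt n$ (where $\sigma = \Omega(n^{1/4})$ makes both LCLT error terms small relative to $1/\sigma$), together with this easy case for smaller $k$, would complete the proof.
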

\begin{remark}
For $1 \leq k \leq c_{\Delta}\sqrt{n}$ (where $c_{\Delta}$ is a sufficiently small constant), the running time may be improved to
\[O(k\log{n}\log\log{n}) + O_{\Delta}(k\epsilon^{-2}\log{n}).\]
Moreover, the term $n^{3/2}$ in \cref{thm:faster-fpras} may be replaced by $\tilde{O}(n)$ by using similar ideas as in the proof of \cref{thm:faster-sampling}. However, since the current bounds on $T$ are $\Omega(n^2)$, we have not pursued this improvement.
\end{remark}

Finally, as an intermediate step in obtaining deterministic approximate counting algorithms, we need to deterministically approximate the mean and variance of the size of an independent set or matching drawn from the hard-core or monomer-dimer model respectively.  While such algorithms are obtainable by applying algorithms based on the correlation decay method of Weitz~\cite{weitz2006counting,bayati2007simple} to approximate marginals and joint marginals, we instead provide faster deterministic algorithms by adapting the method of Barvinok~\cite{barvinok2016combinatorics} and Patel--Regts~\cite{patel2017deterministic} to approximate the $k$th cumulant of the size of the random independent set or matching.

 The $k$th cumulant of a random variable $Y$ is defined in terms of the coefficients of the cumulant generating function $K_Y(t) =  \log \E e^{tY}$ (when this expectation exists in a neighborhood of $0$).  In particular, the $k$th cumulant is 
\[ \kappa_k(Y) = K^{(k)}_Y(0) \,. \]
The first and second cumulants are the mean and variance respectively. 
\begin{theorem}
\label{thm:linear-time-cumulants}
Fix $\Delta \ge 3$, $k\ge 1$, and $\delta \in(0,1)$.  There is an algorithm which, on input a graph $G = (V,E)$ on $n$ vertices of maximum degree at most $\Delta$, $0 < \lambda \le (1- \delta) \lambda_c(\Delta)$, and an error parameter $\epsilon \in (0,1)$ outputs an $\epsilon \lambda n$ additive approximation to $\kappa_k(Y)$, where $Y$ is the size of an independent set drawn from the hard-core model on $G$ at fugacity $\lambda$. The algorithm runs in time $O_{\Delta, \delta, k}(n (1/\epsilon)^{O_{\Delta,\delta}(1)})$.  In particular,  this provides an FPTAS for $\E_{\lambda} Y$ and $\var_{\lambda}Y$ running in time linear in $n$.

For claw-free graphs (hence for the size of a random matching $Y$ drawn from the monomer-dimer model), the same holds for all bounded $\lambda > 0$. 

\end{theorem}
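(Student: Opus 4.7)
The plan adapts Barvinok's polynomial interpolation method (in the Patel--Regts form) from approximating $Z_G$ itself to approximating cumulants of the size of a random independent set. The key identity is that the moment generating function satisfies $\E_\lambda[e^{tY}] = Z_G(\lambda e^t)/Z_G(\lambda)$, giving
$$\kappa_k(Y) \;=\; \left.\frac{d^k}{dt^k}\right|_{t=0} \log Z_G(\lambda e^t).$$
By the zero-free region results of Peters--Regts for independence polynomials of graphs of maximum degree $\Delta$ (see \cite{PR19}), there exists $\rho = \rho(\Delta,\delta)>0$ such that $Z_G$ is nonvanishing on a simply connected neighborhood $U$ containing the disks of radius $\rho$ around every point of $[0,(1-\delta)\lambda_c(\Delta)]$. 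I would choose $r=r(\Delta,\delta)>0$ so small that $\{\lambda e^t : |t|\le r\}\subset U$ uniformly in $\lambda\in(0,(1-\delta)\lambda_c(\Delta)]$. Then $\log Z_G(\lambda e^t)$ extends to a single-valued analytic function on $|t|\le r$ (with the branch fixed by $\log Z_G(\lambda)\in\mathbb{R}$), and Cauchy's integral formula yields
$$\kappa_k(Y) \;=\; \frac{k!}{2\pi r^k}\int_0^{2\pi} \log Z_G\bigl(\lambda e^{r e^{i\theta}}\bigr)\, e^{-ik\theta}\, d\theta.$$

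I would approximate this integral by the trapezoidal rule at $N$ equispaced angles. Since the integrand is analytic in a horizontal strip of $\theta$, trapezoidal quadrature converges geometrically in $N$, so $N=O_{\Delta,\delta,k}(\log(1/\epsilon))$ nodes suffice to reduce the quadrature error to $\epsilon\lambda n/3$. At each node $\theta_j$ I would invoke the Patel--Regts algorithm to approximate $\log Z_G(\lambda e^{r e^{i\theta_j}})$ with additive error $\eta=c_{\Delta,\delta,k}\,\epsilon\lambda n$; since $|\log Z_G(z)|\le n|z|\le \lambda n e^r$ on the relevant disk (using $i_j\le \binom{n}{j}$), this is relative error $\Theta_{\Delta,\delta,k}(\epsilon)$ on $Z_G$ itself, and each such evaluation runs in time $O_{\Delta,\delta}(n(1/\epsilon)^{O_{\Delta,\delta}(1)})$. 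Combining the two error sources and summing over the $N$ nodes yields the claimed $\epsilon\lambda n$ additive approximation in total time $O_{\Delta,\delta,k}(n(1/\epsilon)^{O_{\Delta,\delta}(1)})$.

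The main technical point is engineering the scheme so that the running time has a clean $n(1/\epsilon)^{O_{\Delta,\delta}(1)}$ form, independent of $\lambda$, even when $\lambda$ is very close to $0$: the bound $|\log Z_G|=O(\lambda n)$ on the contour is what ensures that the required \emph{relative} error on $Z_G$ at each node is $\Theta_{\Delta,\delta,k}(\epsilon)$ rather than something that degrades with $\lambda$, but one has to verify that the truncation order in the Patel--Regts interpolation can be chosen uniformly in $\lambda$ and that connected-subgraph enumeration at that order indeed runs in time linear in $n$. For claw-free graphs (which, via the line graph, covers the monomer--dimer model on matchings), the Chudnovsky--Seymour theorem implies that $Z_G$ has only real negative roots, hence is zero-free on $\mathbb{C}\setminus(-\infty,0]$; for any fixed $\lambda_{\max}$ one can pick $r$ so that $\{\lambda e^t:|t|\le r\}$ stays in a fixed compact subset of this region for all $\lambda\in(0,\lambda_{\max}]$, and the same contour-integral plus Patel--Regts strategy applies.
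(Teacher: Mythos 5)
Your Cauchy‐integral/quadrature strategy is a genuine alternative to the paper's approach (the paper truncates a power series for the derivatives of $\log \hat Z$ directly, splitting into a small-$\lambda$ regime handled by the cluster expansion and a large-$\lambda$ regime handled by Barvinok's change of variables), and the quadrature part of your argument is sound. However, there are two problems, one minor and one that leaves a real gap exactly where the theorem's content lies.

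The minor issue: the inequality $|\log Z_G(z)|\le n|z|$ is not justified by $i_j\le\binom{n}{j}$, which only gives $\log|Z_G(z)|\le n\log(1+|z|)$, an \emph{upper} bound on the real part. For complex $z$, $|Z_G(z)|$ could a priori be small (making $|\log Z_G(z)|$ large) and $\arg Z_G(z)$ nonzero. The conclusion you want, $|\log Z_G(z)|=O_{\Delta,\delta}(\lambda n)$ on the contour $\{\lambda e^{re^{i\theta}}\}$, is nonetheless true, but you need to argue it via the factorization $Z_G(z)=\prod_j(1-r_jz)$: by Shearer's bound $|r_j|=O(\Delta)$, so for $\lambda$ small one has $|zr_j|<1/2$ and $|\log(1-zr_j)|\le 2|zr_j|=O_\Delta(\lambda)$; for $\lambda$ bounded below (so $\lambda=\Omega_\Delta(1)$), the zero-free region gives $|1-zr_j|$ uniformly bounded above and below, hence $|\log(1-zr_j)|=O_{\Delta,\delta}(1)=O_{\Delta,\delta}(\lambda)$. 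Either way one sums over $j\le n$ to get $O_{\Delta,\delta}(\lambda n)$.

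The substantive gap is the running time at each quadrature node. You invoke Patel--Regts to produce an $\epsilon'$-relative approximation of $Z_G(\lambda e^{re^{i\theta_j}})$ and assert each call runs in time $O_{\Delta,\delta}(n(1/\epsilon)^{O_{\Delta,\delta}(1)})$. But the black-box Patel--Regts result (Theorem 2.3 in the paper, with relative error $\epsilon'$) runs in time $(n/\epsilon')^{O_{\Delta,\delta}(1)}$, which is polynomial but \emph{not} linear in $n$: to obtain a relative approximation to $Z_G$ one must drive the truncation error of the log-Taylor series below $\epsilon'$, and since the untruncated series has $O(n)$ mass this forces a truncation level $t=\Omega(\log(n/\epsilon'))$ and hence time $n\cdot e^{O(t)}=n\cdot(n/\epsilon')^{O(1)}$. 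To recover a genuinely linear running time, one has to exploit the fact that you need only an \emph{additive} error $\Theta_{\Delta,\delta,k}(\epsilon\lambda n)$ on $\log Z_G$, not a $\Theta(\epsilon)$-relative error, so that a truncation level $t=O_{\Delta,\delta,k}(\log(1/\epsilon))$ (uniformly in $\lambda$) suffices; making this work for all $\lambda\in(0,(1-\delta)\lambda_c(\Delta)]$ also requires tracking how the contraction factor of the change of variables depends on $\lambda$ (or, as the paper does, switching to the cluster expansion when $\lambda$ is small, which has a built-in $\lambda^t$ decay). You flag this as the "main technical point" to verify, but it is precisely this verification that constitutes the paper's own proof; as written, your argument replaces the key step with an unsupported assertion about the running time.
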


\subsection{Outline}
We prove \cref{thm:matching-fptas,thm:independent-fptas} in essentially the same way, given as an input the existence of a zero-free region for the corresponding partition function in the complex plane.  For matchings, this is provided by the classical Heilmann--Lieb theorem~\cite{heilmann1972theory}, while for independent sets this is provided by the theorems of Shearer~\cite{shearer1985problem} and Peters and Regts~\cite{PR19}. To avoid excessive repetition (and to gain a slight amount of generality) we work with the larger class of independent sets in claw-free graphs (instead of matchings, which are independent sets in line graphs). The generalization of the Heilmann--Lieb theorem to claw-free graphs is due to Chudnovsky and Seymour~\cite{chudnovsky2007roots}. 

We record these zero-freeness results in \cref{sec:prelims}, along with some results from the geometry of polynomials on the consequences of zero-freeness, namely a central limit theorem of Michelen and Sahasradbudhe~\cite{MS19} and a deterministic approximation algorithm for $Z_G(\lam)$ (with $\lam$ possibly complex) due to Barvinok~\cite{barvinok2016combinatorics} and Patel and Regts~\cite{patel2017deterministic}. We also record a recent result of Chen, Liu, and Vigoda~\cite{CLV20} on the optimal mixing of Glauber dynamics for bounded marginal spin-systems on bounded degree graphs which is used in our randomized algorithms.  

In \cref{secLCLT}, we prove the local central limit theorem for the hard-core model (\cref{thm:independent-lclt}).  Our proof uses both the aforementioned central limit theorem and a variant of the technique of Dobrushin and Tirozzi~\cite{dobrushin1977central} who proved local central limit theorems for spin models on the integer lattice $\mathbb{Z}^d$.

In \cref{sec:DetAlgorithms}, we prove our deterministic algorithmic results \cref{thm:matching-fptas,thm:independent-fptas}. 

In \cref{sec:RandAlgorithms}, we prove our randomized algorithmic results \cref{thm:faster-fpras,thm:faster-sampling}. 

In \cref{secCluster}, we provide a proof of \cref{thm:independent-lclt} when $\lam$ is sufficiently small as a function of $\Delta$ using the cluster expansion, a tool from classical statistical physics. While not necessary for the main results, we include this since the proof is simpler and perhaps more intuitive. 

Finally, in \cref{secCumulants}, we prove \cref{thm:linear-time-cumulants}.

\subsection{Notation}\label{sub:notation}
Throughout, we reserve the random variable $Y$ for the size of a random independent set drawn from the hard-core model on a graph $G$.  We use subscripts to indicate the fugacity, e.g. $\E_\lam Y$ and $\var_\lam Y$.  We let $\alpha_G(\lam)$ denote the occupancy fraction of the hard-core model on $G$ at fugacity $\lam$; that is, $\alpha_G(\lam) = \frac{\E_{\lam} Y }{|V(G)| }$.     We let $\mc{Z}$ be a standard normal random variable and $\mc{N}(x) = e^{-x^2/2}/\sqrt{2\pi}$ denote its density. 

Dependence of various constants on input parameters will often be important; we will write $f(n) = O_\Delta(1)$, for instance, to mean that $f$ is bounded by a constant that depends only on $\Delta$.

\section{Preliminaries}\label{sec:prelims}

\begin{definition}
\label{def:approximation}
Let $z_1, z_2 \in \mb{C}$. 
We say that $z_1$ is a $\delta$-additive, $\epsilon$-relative approximation of $z_2$ if
$z_1 = re^{i\theta}z_2 + z_3$
for some $e^{-\epsilon} \leq r \leq e^{\epsilon}$, $\theta \in \mb{R}$, $|\theta| \leq \epsilon$ and $z_3 \in \mb{C}, |z_3| \leq \delta$.
When $\delta = 0$, we simply say that $z_1$ is an $\epsilon$-relative approximation of $z_2$. 
\end{definition}

The following theorem combines results of Shearer~\cite{shearer1985problem} on the non-vanishing of the independence polynomial in a complex disk, Peters and Regts~\cite{PR19} on the non-vanishing of the independence polynomial in a complex neighborhood of $(0,\lambda_c(\Delta))$, and Chudnovsky and Seymour~\cite{chudnovsky2007roots} on the real-rootedness of the independence polynomial of claw-free graphs (extending the Heilmann--Lieb theorem on the real-rootedness of the matching polynomial~\cite{heilmann1972theory}).
\begin{theorem}[\cite{shearer1985problem,PR19,chudnovsky2007roots,heilmann1972theory}]
\label{thm:zero-free}
Let $\Delta \geq 3$ and $\delta \in (0,1)$. There exists $c_{\ref{thm:zero-free}} = c_{\delta, \Delta} > 0$ such that for any graph $G = (V,E)$ with maximum degree at most $\Delta$, the partition function $Z_G(\lambda)$ of the hard-core model does not vanish on the region
\[\mc{R}_{\delta, \Delta} := \{z \in \mb{C}: 0\leq \Re(z) \leq (1-\delta)\lambda_{c}(\Delta), |\Im(z)| \leq c_{\delta, \Delta}\} \cup \{z \in \mb{C} : |z| < (\Delta-1)^{\Delta-1}/\Delta^\Delta\}.\]
Moreover, if $G$ is claw-free, then all of the roots of $Z_G(\lambda)$ are on the negative real axis below $-e/(\Delta + 1)$. 
In particular, $Z_G(\lambda)$ does not vanish on the region
\[\mc{C}_{\Delta} := \mb{C} \setminus \{z \in \mb{C}: \Re(z) \leq -e/(\Delta+1)\}.\]
\end{theorem}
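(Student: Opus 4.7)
The theorem is a compilation of zero-freeness results already in the literature, so the plan is to assemble them cleanly into the stated regions $\mc{R}_{\delta,\Delta}$ and $\mc{C}_\Delta$. For the first assertion, the region $\mc{R}_{\delta,\Delta}$ is a union of two pieces and each is handled by one of the cited theorems. Shearer's theorem directly asserts that $Z_G$ has no roots with $|z| < (\Delta-1)^{\Delta-1}/\Delta^\Delta$ for any graph $G$ of maximum degree at most $\Delta$, giving the disk component with no additional work. For the strip $\{0 \leq \Re(z) \leq (1-\delta)\lambda_c(\Delta),\ |\Im(z)| \leq c_{\delta,\Delta}\}$, Peters--Regts provides, for each compact subinterval of $[0,\lambda_c(\Delta))$, an open complex neighborhood on which $Z_G$ is nonvanishing, uniformly over graphs of max degree at most $\Delta$. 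Applied to the subinterval $[0,(1-\delta)\lambda_c(\Delta)]$ and shrunk (by compactness) to a rectangular neighborhood, this produces the required constant $c_{\delta,\Delta} > 0$ and the strip component. The union is $\mc{R}_{\delta,\Delta}$.

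For the claw-free case, I would first invoke Chudnovsky--Seymour (extending Heilmann--Lieb) to deduce that $Z_G(\lambda)$ is real-rooted. Since the polynomial has nonnegative coefficients and $Z_G(0)=1$, all roots are strictly negative reals. The quantitative bound that every root is at most $-e/(\Delta+1)$ then follows from a standard zero-free-radius estimate for claw-free graphs of max degree $\Delta$, arising for instance from the explicit product/interlacing structure of independence polynomials in the Chudnovsky--Seymour framework, or from a direct cluster-expansion computation valid in this regime. Since every root is a real number $\leq -e/(\Delta+1)$, the root set is contained in the closed half-plane $\{\Re(z) \leq -e/(\Delta+1)\}$; hence $Z_G$ is nonvanishing on its complement $\mc{C}_\Delta$, as claimed.

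I expect the main obstacle to be bookkeeping rather than mathematical: aligning the exact constants $c_{\delta,\Delta}$ and $e/(\Delta+1)$ with the statements in the cited papers, and reconciling normalization conventions (most notably between the matching polynomial of a graph and the independence polynomial of its line graph, and between the hard-core fugacity $\lambda$ and the dual parametrizations that sometimes appear in the matching literature). No substantively new analytic estimate is required, since the needed zero-free regions are already on the shelf; the contribution of this preliminary section is simply to package them in a form directly usable by the later Fourier-inversion and Barvinok-method arguments.
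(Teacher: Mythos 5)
Your assembly is correct in structure and matches what the citations support: Shearer gives the disk $\{|z| < (\Delta-1)^{\Delta-1}/\Delta^\Delta\}$ outright; Peters--Regts gives a uniform open complex neighborhood of each compact subinterval of $[0,\lambda_c(\Delta))$, which by compactness can be shrunk to the rectangular strip with some constant imaginary width $c_{\delta,\Delta}$; and Chudnovsky--Seymour (extending Heilmann--Lieb) gives real-rootedness for claw-free graphs, which combined with nonnegative coefficients and $Z_G(0)=1$ forces all roots onto the strictly negative real axis.

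The quantitative claw-free bound should give you pause, though, and you accepted it without scrutiny. The complete graph $K_{\Delta+1}$ is claw-free of maximum degree $\Delta$, its independence polynomial is $1 + (\Delta+1)\lambda$, and the unique root $-1/(\Delta+1)$ lies strictly above $-e/(\Delta+1)$; so the stated constant cannot be literally correct. The intended value is almost certainly $-1/(e(\Delta+1))$: the same $e \leftrightarrow 1/e$ slip appears in \cref{lemKPhardcore}, whose hypothesis should read $0<\lambda < 1/(e(\Delta+1))$ for the Koteck\'{y}--Preiss condition $\lambda e(\Delta+1)\le 1$ to be nontrivial. Once the constant is corrected, your appeal to a ``standard claw-free zero-free-radius estimate'' or a cluster-expansion computation is unnecessary, since Shearer's general disk already suffices: its radius satisfies
\[
\frac{(\Delta-1)^{\Delta-1}}{\Delta^{\Delta}} \;=\; \frac{1}{\Delta}\bigl(1-\tfrac{1}{\Delta}\bigr)^{\Delta-1} \;\ge\; \frac{1}{e\Delta} \;>\; \frac{1}{e(\Delta+1)},
\]
so real-rootedness (Chudnovsky--Seymour) together with Shearer's disk places every root of a claw-free $Z_G$ in $(-\infty,-1/(e(\Delta+1))]$, whose complement is exactly the corrected $\mc{C}_\Delta$. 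A cluster-expansion bound would only reproduce a disk of radius about $1/(e\Delta)$ and adds nothing here; you can drop that clause.
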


Next, we record two consequences of the above zero-free regions of the partition function. The first is an FPTAS for $Z_G(\lambda)$, provided that $\lambda \in \mb{C}$ lies in the zero-free region, and follows by using Barvinok's method \cite{barvinok2016combinatorics} and the work of Patel and Regts \cite{patel2017deterministic}.

\begin{theorem}[\cite{barvinok2016combinatorics, patel2017deterministic}]
\label{thm:patel-regts}
Let $\Delta \geq 3$ and $\delta \in (0,1)$. There exists a deterministic algorithm which, on input a graph $G = (V,E)$ on $n$ vertices with maximum degree at most $\Delta$, a (possibly complex) fugacity $\lambda \in \mc{R}_{\delta, \Delta}$, and an approximation parameter $\epsilon \in (0,1)$, returns an $\epsilon$-relative approximation of $Z_G(\lambda)$ in time $(n/\epsilon)^{O_{\delta, \Delta}(1)}$. 

Moreover, if $G$ is claw-free, then the same conclusion holds for any $\lambda \in \mc{C}_{\Delta}$ with running time $(n/\epsilon)^{O_{\Delta, |\lambda|}(1)}$. 
\end{theorem}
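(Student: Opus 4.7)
The plan is to implement Barvinok's polynomial interpolation method \cite{barvinok2016combinatorics} with the graph-theoretic acceleration of Patel--Regts \cite{patel2017deterministic}. Since $Z_G(0) = 1$ and $Z_G$ does not vanish on $\mathcal{R}_{\delta,\Delta}$, we may fix an analytic branch $F_G(z) := \log Z_G(z)$ on a simply connected open neighborhood of $\mathcal{R}_{\delta,\Delta}$. Exponentiating an additive $\epsilon$-approximation of $F_G(\lambda)$ yields an $\epsilon$-relative approximation of $Z_G(\lambda)$, so it suffices to approximate $F_G(\lambda)$ to additive error $\epsilon$.

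The first step is to construct a polynomial ``lift'' $\phi : \mathbb{D} \to \mathbb{C}$ of bounded degree $D = D_{\delta,\Delta}$ such that $\phi(\mathbb{D}) \subset \mathcal{R}_{\delta,\Delta}$, $\phi(0) = 0$, and $\phi(\zeta_\lambda) = \lambda$ for some $|\zeta_\lambda| \le r < 1$ uniform in $\lambda \in \mathcal{R}_{\delta,\Delta}$. For the tube-shaped region of \cref{thm:zero-free} this is the content of \cite{PR19}; for the slit plane $\mathcal{C}_\Delta$ in the claw-free case a Joukowski-type construction suffices, with the degree of $\phi$ scaling in $|\lambda|$. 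The pullback $g(\zeta) := F_G(\phi(\zeta))$ is then analytic on $\mathbb{D}$ and satisfies $\sup_{\mathbb{D}} |g| = O_{\delta,\Delta}(n)$ (using $|Z_G(z)| \le (1+|z|)^n$ and standard estimates for branches of $\log$ of a non-vanishing polynomial), so Cauchy's inequalities give that truncating the Taylor expansion of $g$ at $m = C_{\delta,\Delta}\log(n/\epsilon)$ terms approximates $g(\zeta_\lambda) = F_G(\lambda)$ within $\epsilon$.

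It remains to compute the first $m$ Taylor coefficients of $g$ at $0$ within the claimed runtime. By the chain rule this reduces to computing the first $m$ Taylor coefficients of $F_G$ at $0$ together with the coefficients of $\phi$, and Newton's identities in turn express the coefficients of $F_G$ in terms of $i_1(G), \ldots, i_m(G)$. The Patel--Regts observation is that the particular linear combinations arising from $\log Z_G$ collapse into a sum, over \emph{connected} induced subgraphs of order at most $m$, of quantities that depend only on the isomorphism type of the subgraph. The number of such subgraphs containing a fixed vertex in a graph of maximum degree $\Delta$ is at most $(e\Delta)^m$, so an exhaustive enumeration (with rooting at each vertex and de-duplication) takes time $n \cdot (e\Delta)^{O(m)} = (n/\epsilon)^{O_{\delta,\Delta}(1)}$. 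The claw-free case is identical, with the constants $D$ and $C$ now depending on $|\lambda|$.

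The main obstacle is constructing the polynomial lift $\phi$ in the non-disk region $\mathcal{R}_{\delta,\Delta}$: producing a low-degree polynomial map from $\mathbb{D}$ whose image lies inside a prescribed tube around $[0,(1-\delta)\lambda_c(\Delta)]$ while witnessing every target $\lambda$ at some point of modulus bounded away from $1$ is genuinely delicate, and is the core complex-analytic input from \cite{PR19}. Once $\phi$ is in hand, the truncation error bound and the connected-subgraph enumeration are routine.
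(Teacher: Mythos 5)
Your proposal correctly reconstructs the Barvinok--Patel--Regts interpolation method; the paper states this theorem as a citation to \cite{barvinok2016combinatorics, patel2017deterministic} (with the zero-free region coming from \cref{thm:zero-free}) rather than proving it. Your sketch---lifting to the unit disk via a bounded-degree polynomial $\phi$ as in \cite{PR19}, bounding $\log Z_G \circ \phi$ by $O_{\delta,\Delta}(n)$ so that an $O_{\delta,\Delta}(\log(n/\epsilon))$-term Taylor truncation suffices, and computing those coefficients via connected induced-subgraph enumeration in bounded-degree graphs---is precisely the argument in the cited works.
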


The second is a result of Michelen and Sahasradbudhe \cite{MS19} on converting zero-free regions of probability generating functions into central limit theorems. We note that in order to establish our results with slightly worse quantitative dependencies, earlier results of Lebowitz, Pittel, Ruelle, and Speer \cite{LPRS16} are sufficient. 
\begin{theorem}[{\cite[Theorem~1.2]{MS19}}]\label{thm:clt-gen}
Let $X$ be a random variable taking values in $\{0,1,\dots,n\}$ with mean $\mu$ and variance $\sigma^2$ and let
$f_X(z) = \sum_{k=0}^{n}\mb{P}[X=k]z^k$ denote its probability generating function. Let $\delta = \min_{\zeta}|\zeta-1|$, where $\zeta$ ranges over the (complex) roots of $f_X$. Then, 
\[\sup_{t\in\mb{R}}|\mb{P}[(X-\mu)/\sigma\le t]-\mb{P}[\mc{Z}\le t]| = O\bigg(\frac{\log n}{\delta\sigma}\bigg).\]
\end{theorem}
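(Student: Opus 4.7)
The plan is to apply Esseen's smoothing inequality to reduce to bounding the characteristic function of $X' := (X-\mu)/\sigma$, and then exploit the product factorization of the probability generating function over its complex roots. Since $X$ takes integer values in $\{0,\dots,n\}$, one has $\phi_{X'}(\tau) = e^{-i\tau\mu/\sigma}f_X(e^{i\tau/\sigma})$, and Esseen's inequality gives
\[\sup_{t\in\mb{R}}|\mb{P}[X'\le t]-\mb{P}[\mc{Z}\le t]| \le \frac{1}{\pi}\int_{-T}^{T}\left|\frac{\phi_{X'}(\tau)-e^{-\tau^2/2}}{\tau}\right|d\tau + \frac{C}{T}\]
for any $T>0$. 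Taking $T\asymp \delta\sigma/\log n$ matches the target error, provided the integrand can be controlled on $[-T,T]$. Using $f_X(1)=1$ together with the roots $\zeta_j$ of $f_X$, I would factor $f_X(z)=\prod_j (z-\zeta_j)/(1-\zeta_j)$ and work with the resulting product expansion of $\phi_{X'}$.

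In the small-$\tau$ regime $|\tau|\le c\delta\sigma$, each summand in $\log\phi_{X'}(\tau) = -i\tau\mu/\sigma + \sum_j \log(1 + (e^{i\tau/\sigma}-1)/(1-\zeta_j))$ admits a convergent Taylor expansion, since $|e^{i\tau/\sigma}-1|/|1-\zeta_j|<1$. Differentiating $K(s):=\log f_X(e^s)$ at $s=0$ yields the identities $\mu = \sum_j 1/(1-\zeta_j)$ and $\sigma^2 = -\sum_j \zeta_j/(1-\zeta_j)^2$, and the constant, linear, and quadratic contributions to $\log\phi_{X'}(\tau)$ collapse exactly to $-\tau^2/2$. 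The higher cumulants $\kappa_k = K^{(k)}(0)$ for $k\ge 3$ are termwise bounded by $\sum_j O(1/|1-\zeta_j|^k) \le n/\delta^k$; after telescoping the series and invoking the convergence radius $|\tau|\le c\delta\sigma$ one obtains an estimate of the form $|\phi_{X'}(\tau) - e^{-\tau^2/2}|/|\tau| \lesssim e^{-\tau^2/4}\cdot|\tau|/(\delta\sigma)$, whose integral over this regime contributes $O(1/(\delta\sigma))$.

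The main obstacle is the moderate regime $c\delta\sigma\le|\tau|\le T$, where the Taylor series above no longer converges. Here I would bound $|\phi_{X'}(\tau)|$ directly from the product. The identity $|e^{i\theta}-\zeta|^2/|1-\zeta|^2 = 1 - 2\Re(\zeta(e^{-i\theta}-1))/|1-\zeta|^2$ combined with $\log(1+x)\le x$ yields, after summing over roots (which are real or occur in conjugate pairs since $f_X$ has real coefficients), an inequality of the form $|f_X(e^{i\theta})|\le \exp(-c'\theta^2\sigma^2)$ up to lower-order corrections; substituting $\theta=\tau/\sigma$ provides uniform Gaussian-type decay of $|\phi_{X'}(\tau)|$ on this range. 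The $\log n$ factor in the final bound then emerges from balancing the Esseen tail $C/T$ against the constants in the moderate-regime estimate. The technical heart is precisely this moderate regime: one must uniformly control the product $\prod_j (e^{i\theta}-\zeta_j)/(1-\zeta_j)$ across all configurations of complex roots satisfying only $|1-\zeta_j|\ge\delta$, which requires pushing the naive first-order computation above to higher order and carefully exploiting the fact that root clusters near the critical configuration must contribute correspondingly to $\sigma^2$.
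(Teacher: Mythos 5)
The paper does not prove this statement; it is imported verbatim from \cite{MS19}, so there is no internal proof to compare against. Your high-level outline --- Esseen smoothing, factoring $f_X$ over its roots, then splitting the Fourier integral into a near-origin regime handled by cumulants and a larger-$\tau$ regime handled by a direct decay bound --- is indeed in the spirit of \cite{MS19}. But both pieces of your sketch, as written, have genuine gaps.

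In the small regime you assert $|\phi_{X'}(\tau)-e^{-\tau^2/2}|/|\tau| \lesssim e^{-\tau^2/4}|\tau|/(\delta\sigma)$ for all $|\tau|\le c\delta\sigma$ starting from the cumulant estimate $|\kappa_k|\lesssim n/\delta^k$. That estimate only yields $|\log\phi_{X'}(\tau)+\tau^2/2|\lesssim n(|\tau|/(\delta\sigma))^3$; the factor of $n$ has been dropped. For your claimed integrand bound you would need $n(|\tau|/(\delta\sigma))^3 \lesssim \tau^2/(\delta\sigma)$, i.e.\ $|\tau|\lesssim (\delta\sigma)^2/n$, which is typically far smaller than $c\delta\sigma$. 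A sharper cumulant bound of the shape $|\kappa_k|\lesssim k!\,\sigma^2/\delta^{k-2}$ would rescue the argument, but this is \emph{not} a termwise consequence of $|1-\zeta_j|\ge\delta$: $\sigma^2=-\sum_j\Re\big(\zeta_j/(1-\zeta_j)^2\big)$ is a signed sum with cancellation, and relating $\sum_j|1-\zeta_j|^{-k}$ to $\sigma^2$ is exactly the hard part. A related inconsistency: with $T\asymp\delta\sigma/\log n$, your ``moderate regime'' $c\delta\sigma\le|\tau|\le T$ is empty once $n$ is large, yet you call it the main obstacle --- a sign the regime boundary must be pushed well below $c\delta\sigma$, at which point a genuine moderate regime reappears. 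And in that regime, the claimed decay $|f_X(e^{i\theta})|\le\exp(-c'\theta^2\sigma^2)$ does not follow from the identity plus $\log(1+x)\le x$: expanding to second order the right-hand side gives $\tfrac{\theta^2}{2}\sum_j\Re(\zeta_j)/|1-\zeta_j|^2$, which is not $-\sigma^2\theta^2/2$. For instance $f_X(z)=(z^2+1)/2$ has roots $\pm i$, so $\sum_j\Re(\zeta_j)/|1-\zeta_j|^2=0$ and the $\log(1+x)\le x$ bound yields no quadratic decay at all, even though $|f_X(e^{i\theta})|=|\cos\theta|$ really does decay like $e^{-\theta^2/2}$ with $\sigma^2=1$. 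You flag at the end that the real work is to ``push the naive first-order computation to higher order'' exploiting the structure of the roots, but this is the proof, not a remark; as written the proposal does not establish the theorem.
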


For our proof of the LCLT for the hard-core model, we will need the following simple lemma. The precise version we state here appears in work of Berkowitz \cite{Ber16} on quantitative local central limit theorems for triangle counts in $\mb{G}(n,p)$ and has been used, for instance, in further work  on local central limit theorems for general subgraph counts in random graphs~\cite{SS20}; we include the short proof for the reader's convenience.
\begin{lemma}[{\cite[Lemma~3]{Ber16}}]\label{lem:fourier-convert}
Let $X$ be a random variable supported on the lattice $\mc{L} = \alpha + \beta \mb{Z}$ and let $\mc{N}(x) = e^{-x^2/2}/\sqrt{2\pi}$ denote the density of the standard normal distribution. Then 
\[\sup_{x\in \mc{L}}|\beta\mc{N}(x) - \mb{P}[X=x]|\le \beta\int_{-\pi/\beta}^{\pi/\beta}\big|\mb{E}[e^{itX}]-\mb{E}[e^{it\mc{Z}}]\big|dt + e^{-\pi^2/(2\beta^2)}.\]
\end{lemma}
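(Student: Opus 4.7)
The plan is to prove this by direct Fourier inversion, comparing the discrete density of $X$ with the continuous Gaussian density using their common characteristic-function representation, and separately controlling the Gaussian tail outside $[-\pi/\beta,\pi/\beta]$.

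First I would establish Fourier inversion on the lattice: writing $Y=(X-\alpha)/\beta\in\mb{Z}$ and applying the standard integer Fourier inversion $\mb{P}[Y=k] = \frac{1}{2\pi}\int_{-\pi}^{\pi}\mb{E}[e^{isY}]e^{-isk}\,ds$, then changing variables $s=\beta t$, yields
\[\mb{P}[X=x] = \frac{\beta}{2\pi}\int_{-\pi/\beta}^{\pi/\beta}\mb{E}[e^{itX}]e^{-itx}\,dt\qquad (x\in\mc{L}).\]
For the standard normal, the usual Fourier inversion gives
\[\beta\mc{N}(x) = \frac{\beta}{2\pi}\int_{-\infty}^{\infty}\mb{E}[e^{it\mc{Z}}]e^{-itx}\,dt.\]

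Next I would subtract the two identities, split the Gaussian integral at $\pm\pi/\beta$, and use $|e^{-itx}|=1$ to obtain
\[\bigl|\mb{P}[X=x]-\beta\mc{N}(x)\bigr|\ \le\ \frac{\beta}{2\pi}\int_{-\pi/\beta}^{\pi/\beta}\bigl|\mb{E}[e^{itX}]-\mb{E}[e^{it\mc{Z}}]\bigr|\,dt\ +\ \frac{\beta}{2\pi}\int_{|t|>\pi/\beta}e^{-t^2/2}\,dt.\]
Since $1/(2\pi)\le 1$, the first term is at most $\beta\int_{-\pi/\beta}^{\pi/\beta}|\mb{E}[e^{itX}]-\mb{E}[e^{it\mc{Z}}]|\,dt$, which matches the first term on the right-hand side of the lemma.

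Finally, for the Gaussian tail I would apply the Mills ratio bound $\int_T^\infty e^{-t^2/2}\,dt\le T^{-1}e^{-T^2/2}$ with $T=\pi/\beta$, giving
\[\frac{\beta}{2\pi}\int_{|t|>\pi/\beta}e^{-t^2/2}\,dt \ \le\ \frac{\beta^2}{\pi^2}\,e^{-\pi^2/(2\beta^2)} \ \le\ e^{-\pi^2/(2\beta^2)},\]
where the last step uses $\beta\le\pi$ (and the bound is vacuous otherwise since the left-hand side of the lemma is at most $1$). Taking the supremum over $x\in\mc{L}$ combines these two estimates into the stated inequality. This proof is essentially a calculation rather than an argument with a serious obstacle; the only minor care needed is to absorb the factors of $1/(2\pi)$ into the slightly loose constants in the statement.
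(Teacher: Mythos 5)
Your proof takes the same route as the paper's: express $\mb{P}[X=x]$ by Fourier inversion on the lattice, express $\mc{N}(x)$ by Fourier inversion on $\mb{R}$, subtract, split the Gaussian integral at $\pm\pi/\beta$ by the triangle inequality, drop the factor $1/(2\pi)$ on the first term, and bound the Gaussian tail. The paper states the tail bound as ``a standard tail bound on Gaussian integrals'' where you supply the Mills ratio and an explicit reduction of the lattice inversion to the integer case; these are expository choices rather than a genuinely different argument. One caution on your parenthetical: it is not true that the left-hand side of the lemma is bounded by $1$ when $\beta>\pi$, since $\beta\mc{N}(x)$ can be as large as $\beta/\sqrt{2\pi}$ (e.g.\ take $X=0$ deterministically, $\beta$ large, and look at $x=0$), so the inequality is \emph{not} automatically vacuous in that regime. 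This does not affect the paper, which only invokes the lemma with $\beta=1/\sigma$ and $\sigma\ge 2$, but you should either state the restriction $\beta\le\pi$ as a hypothesis or note that the lemma is only used for small $\beta$, rather than claim the large-$\beta$ case is trivial.
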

\begin{proof}
Let $\varphi_X(t) = \mb{E}e^{itX}$ and $\varphi(t) = \mb{E}e^{it\mc{Z}}$. By Fourier inversion on lattices and by Fourier inversion on $\mb{R}$, we have for $x\in\mc{L}$ that
\[\mb{P}[X=x] = \frac{\beta}{2\pi}\int_{-\pi/\beta}^{\pi/\beta}\varphi_X(t)e^{-itx}dt,\qquad\mc{N}(x) = \frac{1}{2\pi}\int_{-\infty}^\infty\varphi(t)e^{-itx}dt.\]
Therefore
\begin{align*}
|\beta\mc{N}(x)-\mb{P}[X=x]|&=\frac{\beta}{2\pi}\bigg|\int_{-\infty}^\infty\varphi(t)e^{-itx}dt-\int_{-\pi/\beta}^{\pi/\beta}\varphi_X(t)e^{-itx}dt\bigg|\\
&\le\frac{\beta}{2\pi}\int_{-\pi/\beta}^{\pi/\beta}|\varphi(t)-\varphi_X(t)|dt+\frac{\beta}{2\pi}\bigg|\int_{|t|>\pi/\beta}e^{-itx}\varphi(t)dt\bigg|\\
&\le\beta\int_{-\pi/\beta}^{\pi/\beta}|\varphi(t)-\varphi_X(t)|dt +e^{-\pi^2/(2\beta^2)},
\end{align*}
where we used a standard tail bound on Gaussian integrals in the last line. Taking the supremum over $x \in \mc{L}$ completes the proof. 
\end{proof}

For our randomized algorithms, we will make use of recent results of Chen, Liu, and Vigoda \cite{CLV20} establishing optimal mixing of the Glauber dynamics for bounded-degree spin systems. 
\begin{theorem}[\cite{CLV20}]
\label{thm:glauber}
Let $\Delta \geq 3$ and $\delta \in (0,1)$. For every graph $G = (V,E)$ on $n$ vertices with maximum degree at most $\Delta$ and for every $0 < \lambda \leq (1-\delta)\lambda_c(\Delta)$, the $\epsilon$-mixing time of the Glauber dynamics for the hard-core model on $G$ at fugacity $\lambda$ is $O_{\Delta, \delta}(n\log(n/\epsilon))$. 

Moreover, for any $C > 0$, the same conclusion holds for all line graphs $G = (V,E)$ on $n$ vertices with maximum degree at most $\Delta$ and any $0 < \lambda \leq C$, with the implicit constant depending on $\Delta$ and $C$. 
\end{theorem}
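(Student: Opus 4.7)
The plan is to invoke the spectral-independence framework, which has become the standard route for turning correlation-decay estimates below $\lam_c(\Delta)$ into optimal mixing of the Glauber dynamics. First I would establish $\eta$-spectral independence of the hard-core distribution $\mu_{G,\lam}$ uniformly over all feasible pinnings: letting $\Psi^\tau$ denote the pairwise influence matrix of $\mu$ conditioned on a pinning $\tau$, one shows that the operator norm $\|\Psi^\tau\|_{\mathrm{op}}$ is bounded by some $\eta=\eta(\delta,\Delta)$. Via Weitz's self-avoiding-walk tree construction this reduces to proving strict contraction of the tree recursion for the hard-core model at its fixed point, which holds precisely for $\lam < \lam_c(\Delta)$ and hence, with uniform slack, for $\lam \le (1-\delta)\lam_c(\Delta)$. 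Summing the resulting geometric decay of influences over vertices of the SAW tree yields the desired uniform spectral bound.

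Spectral independence alone only gives $n^{O_{\delta,\Delta}(1)}$ mixing, so the next step is to upgrade to the optimal $O_{\delta,\Delta}(n\log(n/\epsilon))$ rate through the local-to-global / entropy-factorization route. The crucial extra input is \emph{marginal boundedness}: because $\lam$ and $\Delta$ are bounded, every conditional single-site marginal $\mu_v^\tau(1)$ lies in $[\eta',1-\eta']$ for some constant $\eta'(\delta,\Delta)>0$. Since a pinning of a hard-core measure is again a hard-core measure on a subgraph with the same $\lam$, spectral independence persists for every conditional measure. One then runs the trickle-down argument in the simplicial complex of independent sets to obtain a constant-factor approximate tensorization of entropy, equivalently a modified log-Sobolev constant $\Omega(1/n)$ for the Glauber dynamics; the stated mixing bound follows by standard entropy methods.

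For the claw-free / monomer--dimer extension, I would use the real-rootedness provided by the Heilmann--Lieb / Chudnovsky--Seymour theorem (\cref{thm:zero-free}), which yields very strong decay of correlations for all $\lam>0$ on bounded-degree claw-free graphs, with constants depending on $\Delta$ and the upper bound $C$ on $\lam$. This gives both spectral independence of every pinned measure and marginal boundedness with constants depending only on $\Delta$ and $C$, and the same local-to-global argument then produces the optimal mixing statement. Note that in the line-graph case the pinning operation is still compatible with the claw-free structure, so the induction downward through links of the complex goes through unchanged.

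The main obstacle is the local-to-global step in the second paragraph: passing from spectral independence of all pinned measures plus marginal boundedness to a \emph{uniform} constant-factor entropy factorization. The delicate point is that one must control the entropy-contraction constant of every ``link'' in the simplicial complex of independent sets, not merely the spectral gap of the top-level walk, and do so with constants independent of $n$; this is the technical heart of the Chen--Liu--Vigoda analysis and it is here that the careful interaction between $\ell_\infty$-style spectral independence and single-site marginal bounds becomes essential.
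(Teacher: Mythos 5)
This theorem is not proved in the paper at all: it is quoted verbatim from Chen--Liu--Vigoda \cite{CLV20} in the preliminaries (\cref{sec:prelims}) and invoked as a black box. There is therefore no ``paper's own proof'' to compare against; the paper simply records the result and uses it later in \cref{lem:lambda-determine} and \cref{thm:faster-sampling,thm:faster-fpras}.

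Your sketch is a reasonably faithful high-level reconstruction of the Chen--Liu--Vigoda argument: spectral independence of $\mu_{G,\lam}$ (and of all its pinnings) via the self-avoiding-walk tree and contraction of the hard-core tree recursion for $\lam \le (1-\delta)\lam_c(\Delta)$; then an upgrade from the crude $n^{O(1)}$ mixing implied by spectral independence alone to the optimal $O_{\Delta,\delta}(n\log(n/\epsilon))$ rate by combining spectral independence with the $O_{\Delta,\delta}(1)$ marginal boundedness that bounded $\lam$ and $\Delta$ guarantee, yielding approximate entropy tensorization (equivalently, a modified log-Sobolev constant $\Omega(1/n)$) through a local-to-global argument on the simplicial complex of independent sets. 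You also correctly locate the technical crux in the local-to-global step, where uniform control over the entropy contraction of every link is needed. Two small caveats worth recording. First, the spectral-independence bound in the hard-core case can be derived either via the SAW tree as you describe, or (as the same authors and Anari--Liu--Oveis Gharan do in related works) via the absence of zeros of $Z_G$ in a complex neighbourhood of $[0,\lam]$; both roads lead to the same $\eta(\delta,\Delta)$. Second, for the line-graph / monomer--dimer statement you appeal to Heilmann--Lieb real-rootedness; the standard route is again to convert this zero-freeness into spectral independence (with constants depending on $\Delta$ and the bound $C$ on $\lam$) and note that marginal boundedness is automatic since $\lam$ is bounded, after which the identical local-to-global machinery applies. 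Your observation that pinnings preserve the model class is what makes the induction through links go through.
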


\section{Proof of the local central limit theorem}
\label{secLCLT}

The goal of this section is to prove the following quantitative version of \cref{thm:independent-lclt}.
\begin{theorem}
\label{thm:independent-lclt2}
Fix $\Delta\ge 3$ and $\delta>0$. Let $\lambda\in(0,\lambda_c(\Delta)-\delta)$. Given a graph $G$ on $n$ vertices with maximum degree at most $\Delta$, draw a random independent set $I$ from the hard core model $\mu$ with fugacity $\lambda$ and let $Y = |I|$. Let $\mu = \mb{E}_{\lambda}Y$ and $\sigma^2 = \on{Var}_{\lambda}Y$. Then 
\[\sup_{t\in \mb{Z}}|\sigma^{-1}\mc{N}((t-\mu)/\sigma) - \mb{P}[Y=t]|= O_{\Delta,\delta}\bigg(\on{min}\bigg(\frac{(\log n)^{5/2}}{\sigma^2},\frac{1}{\sigma^{2}} + \frac{\sigma^6(\log n)^2}{n}\bigg)\bigg).\]
Moreover, if $G$ is claw-free, then the same conclusion holds for any $\lambda \in (0,C)$, with the implicit constant depending on $\Delta$ and $C$. 
\end{theorem}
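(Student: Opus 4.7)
The plan is to apply the Fourier inversion identity \cref{lem:fourier-convert} to the standardized variable $X = (Y-\mu)/\sigma$, which is supported on the lattice $-\mu/\sigma + \sigma^{-1}\mb{Z}$, taking $\beta = 1/\sigma$. This reduces the theorem to bounding
\[\sigma^{-1}\int_{-\pi\sigma}^{\pi\sigma}\bigl|\varphi_X(t) - e^{-t^2/2}\bigr|\,dt,\]
where $\varphi_X(t) = e^{-it\mu/\sigma}\,Z_G(\lambda e^{it/\sigma})/Z_G(\lambda)$ by the definition of the hard-core model. The Gaussian tail term $e^{-\pi^2\sigma^2/2}$ from the lemma is negligible whenever the target bound is meaningful (the right-hand side is $\Omega(1)$ otherwise, in which case the theorem is vacuous). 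I would then split $[-\pi\sigma,\pi\sigma]$ into a small range $|t|\le T_1$, a medium range $T_1\le |t|\le T_2$, and a large range $T_2\le |t|\le \pi\sigma$, with $T_2 = c_{\delta,\Delta}\sigma$ chosen so that $\lambda e^{it/\sigma}$ still lies in the zero-free region $\mc{R}_{\delta,\Delta}$ of \cref{thm:zero-free}.

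On the small range, the zero-free strip implies that the cumulant generating function $K(s):=\log\mb{E}e^{sY}$ is analytic in a complex neighborhood of $s=0$; Cauchy bounds on $K^{(k)}(0)/n$ then permit a Taylor expansion to any fixed order, yielding the pointwise comparison $|\varphi_X(t) - e^{-t^2/2}| = O\bigl(e^{-t^2/4}(|t|^3 n/\sigma^3 + |t|^4 n/\sigma^4 + \cdots)\bigr)$. Integrating this, using $\sigma^2\le n$ and the extensivity of higher cumulants, gives the second bound $1/\sigma^2 + \sigma^6(\log n)^2/n$ after optimizing the truncation order in the Taylor expansion. To obtain the sharper bound $(\log n)^{5/2}/\sigma^2$ I would combine the Taylor estimate with the Michelen--Sahasrabudhe central limit theorem (\cref{thm:clt-gen}) via a smoothing/Esseen-type argument that converts the $O(\log n/\sigma)$ Kolmogorov rate into an $L^1$ bound on $|\varphi_X(t)-e^{-t^2/2}|$ in a neighborhood of $0$.

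On the medium range, the same analyticity combined with the second-order term of the Taylor expansion (the real part of $K$ near the imaginary axis is concave up to $O(1)$ errors) produces a decay $|\varphi_X(t)|\le e^{-c t^2 n/\sigma^2}$, which, together with $|e^{-t^2/2}|$, makes the contribution exponentially small once $T_1$ exceeds $\sigma/\sqrt{n}$ by a polylogarithmic factor.

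The main obstacle lies in the large range $T_2\le |t|\le \pi\sigma$, where $\lambda e^{it/\sigma}$ leaves the zero-free region and the Barvinok machinery is no longer available. Here I would adapt the Dobrushin--Tirozzi technique: fix an independent set $U\subseteq V(G)$ with $|U|\ge n/(\Delta+1)$ and condition on $I\setminus U$. The indicators $(\mathbf{1}[v\in I])_{v\in U}$ are then conditionally independent, each equal to $0$ if $v$ has a neighbor in $I\setminus U$ and Bernoulli$(\lambda/(1+\lambda))$ otherwise. Writing $F$ for the number of ``free'' vertices in $U$, this gives
\[|\varphi_X(t)| \le \mb{E}\Bigl|\tfrac{1+\lambda e^{it/\sigma}}{1+\lambda}\Bigr|^F \le \mb{E}\exp\!\Bigl(-\tfrac{\lambda}{(1+\lambda)^2}F(1-\cos(t/\sigma))\Bigr).\]
A concentration estimate (e.g., from bounded differences applied to $I\setminus U$, whose distribution can itself be analyzed using the mixing result \cref{thm:glauber}, together with the occupancy-fraction lower bound $\mb{E}F = \Omega_{\Delta,\lambda}(n)$) shows $F\ge cn$ except on an event of superpolynomially small probability. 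Since $1-\cos(t/\sigma)\ge c'/\sigma^2$ on this range, $|\varphi_X(t)|$ is then superpolynomially small in $n$. The delicate step is calibrating $T_1,T_2$ and the cumulant bounds so that the three contributions combine to the stated $\min$; the claw-free case is handled identically using the second half of \cref{thm:zero-free}, which extends the zero-free region to all bounded $\lambda$.
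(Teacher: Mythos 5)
Your Fourier-inversion skeleton matches the paper's, but there are two genuine gaps. For the high-frequency range, fixing an independent set $U$ and conditioning on $I\setminus U$ does give conditional independence on $U$, but your bound then needs the number $F$ of free vertices to be $\Omega(n)$ except on a superpolynomially small event---a nontrivial concentration side project (McDiarmid for a non-product measure, via mixing or entropy factorization). The paper sidesteps this entirely by taking $S$ to be a $4$-\emph{separated} set of $\Omega(n/\Delta^3)$ vertices and conditioning on $I\cap T$ where $T$ is the set of vertices at distance $\ge 2$ from $S$: because no neighbor of any $v\in S$ lies in $T$ and the conditional partition function on $\{v\}\cup N(v)$ is at most $(1+\lambda)^{\Delta+1}$, for \emph{every} conditioning $J$ the conditional law puts mass $\Omega_{\Delta,C}(1)$ at $\emptyset$ and $\Omega_{\Delta,C}(\lambda)$ at $\{v\}$, giving the deterministic factor bound $|\mb{E}[e^{-itX_j}]|\le 1-c\lambda t^2$ and hence $|\varphi_Y(t)|\le\max_J\prod_j|\mb{E}e^{-itX_j}|\le\exp(-c'\lambda n t^2)$ with no concentration argument at all.

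Second, the branch $1/\sigma^2+\sigma^6(\log n)^2/n$ of the minimum does not follow from your cumulant Taylor expansion. Cauchy estimates from the zero-free strip give only $\kappa_k(Y)=O_{k,\delta,\Delta}(n)$, so the cubic correction, integrated against $\sigma^{-1}\,dt$, yields $O(n/\sigma^4)$, not $O(1/\sigma^2)$; an Edgeworth argument would need $\kappa_k=O(\sigma^2)=O(\lambda n)$, which Cauchy bounds do not give, and in any case the $\sigma^6(\log n)^2/n$ term is not an Edgeworth remainder. The paper instead observes that this branch only matters when $\sigma$ is small (so $\lambda=\Theta(\sigma^2/n)$ is tiny) and compares the hard-core model directly to the product $\on{Ber}(\lambda/(1+\lambda))^{\otimes V}$: a sample from the product is already an independent set with probability $1-O_\Delta(\lambda^2 n)=1-O_\Delta(\sigma^4/n)$, so the two measures agree pointwise up to a factor $1\pm O_\Delta(\sigma^4/n)$, and the DeMoivre--Laplace LCLT for the Binomial supplies the $O(1/\sigma^2)$ while propagating the comparison error (in the mean and variance as well) supplies $O(\sigma^6(\log n)^2/n)$. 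A smaller discrepancy: the paper derives its low-frequency bound directly from the Michelen--Sahasrabudhe CLT by integration by parts, with no cumulant expansion and no separate medium range.
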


\begin{remark}
\cref{lem:var-bound} below shows that $\sigma^{2} = \Theta_{\Delta}(\lambda n)$ (or $\sigma^2 = \Theta_{\Delta, C}(\lambda n)$ in the case of claw-free graphs). Thus, for fixed $\Delta, \delta, C$ and for a sequence of $\lambda_n$ in the appropriate region for which $\lambda_n n \to \infty$, we see that the right hand goes to $0$, thereby establishing the qualitative claim of \cref{thm:independent-lclt}.    
\end{remark}

The proof of \cref{thm:independent-lclt2} requires a few steps. First, we use the zero-free regions given by \cref{thm:zero-free} in order to bound $\sigma$. 
\begin{lemma}\label{lem:var-bound}
Let $\Delta \geq 3$ and $\delta \in (0,1)$. For any graph $G = (V,E)$ on $n$ vertices with maximum degree at most $\Delta$ and for all $\lambda \in (0, \lambda_c(\Delta) - \delta)$, we have
\[\frac{ \lambda n}{ (\Delta+1) (1+\lambda)^{2+\Delta  }  }  \le \on{Var}_{\lambda}Y \le C_{\delta, \Delta} \lambda n \,.\]
Moreover, if $G$ is claw-free, then  for all $\lambda >0$,
\[\frac{ \lambda n}{ (\Delta+1) (1+\lambda)^{2+\Delta  }  }  \le \on{Var}_{\lambda}Y \le C_{ \Delta} \lambda n \,.\]
\end{lemma}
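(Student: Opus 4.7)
The proof cleanly splits into a lower and an upper bound; the claw-free case uses the same template with the Chudnovsky--Seymour zero-free region $\mc{C}_\Delta$ in place of $\mc{R}_{\delta,\Delta}$.

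For the lower bound, I would fix a greedy independent set $S \subseteq V$ of size $|S| \ge n/(\Delta+1)$, which exists in every graph of maximum degree $\Delta$. Conditional on the states of all vertices in $V \setminus S$, the indicators $\{\one_{v \in I}\}_{v \in S}$ become \emph{conditionally independent}, since $S$ itself is independent and every neighbor of $S$ lies in $V \setminus S$. Conditionally, each $\one_{v \in I}$ is Bernoulli with parameter $\lambda/(1+\lambda)$ on the $V\setminus S$-measurable event $A_v := \{N(v) \cap I = \emptyset\}$ and is identically $0$ off of $A_v$. The law of total variance yields
\[
\var_\lambda Y \ge \E\bigl[\var(Y \mid V \setminus S)\bigr] = \sum_{v \in S} \mb{P}[A_v] \cdot \frac{\lambda}{(1+\lambda)^{2}}.
\]
A direct expansion shows $\mb{P}[A_v] = \frac{1+\lambda}{\lambda} \mb{P}[v \in I]$ (both sides equal $(1+\lambda) Z_{G \setminus N[v]}(\lambda)/Z_G(\lambda)$), and iterating the elementary inequality $Z_H(\lambda) \le (1+\lambda) Z_{H \setminus u}(\lambda)$ over the at most $\Delta+1$ vertices of $N[v]$ gives the standard occupancy lower bound $\mb{P}[v \in I] \ge \lambda/(1+\lambda)^{\Delta+1}$. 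Combining, $\mb{P}[A_v] \ge (1+\lambda)^{-\Delta}$, and summing over $v \in S$ produces the claimed lower bound.

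For the upper bound, I would use the identity $\var_\lambda Y = \E_\lambda Y + \lambda^{2} (\log Z_G)''(\lambda)$, which follows from $\E_\lambda Y = \lambda (\log Z_G)'(\lambda)$ and $\var_\lambda Y = \lambda \partial_\lambda \E_\lambda Y$. The first summand is at most $n\lambda/(1+\lambda) \le n\lambda$, so it suffices to bound the second. For any $\lambda \in (0, \lambda_c(\Delta)-\delta)$ the zero-free region $\mc{R}_{\delta,\Delta}$ of \cref{thm:zero-free} contains a complex disk around $\lambda$ of some radius $r_{\delta,\Delta} > 0$ depending only on $\delta$ and $\Delta$, on which $\log Z_G$ is holomorphic. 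Standard estimates (either the Barvinok/Patel--Regts cluster-expansion bound expressing $\log Z_G$ as an absolutely convergent series with Taylor coefficients of magnitude $O_{\delta,\Delta}(n)$, or a factoring argument using that each root of $Z_G$ lies outside $\mc{R}_{\delta,\Delta}$ while Shearer's disk pins their modulus away from $0$) yield $|\log Z_G(\zeta)| = O_{\delta,\Delta}(n)$ uniformly on this disk. Cauchy's integral formula then gives $|(\log Z_G)''(\lambda)| = O_{\delta,\Delta}(n)$, hence $\lambda^{2}(\log Z_G)''(\lambda) = O_{\delta,\Delta}(\lambda^{2} n) = O_{\delta,\Delta}(\lambda n)$ using the a priori bound $\lambda \le \lambda_c(\Delta) = O_\Delta(1)$. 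The claw-free case is identical, substituting $\mc{C}_\Delta$ for $\mc{R}_{\delta,\Delta}$ and allowing the disk radius and the $|\log Z_G|$ bound to depend on the a priori bound $C$ on $\lambda$.

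The main obstacle is extracting the correct factor of $\lambda$ in the upper bound: a naive Cauchy estimate applied directly to $(\log Z_G)'$ gives only $\var_\lambda Y = O_{\delta,\Delta}(n)$, while a blunt $\var_\lambda Y \le n\E_\lambda Y$ gives $O_{\delta,\Delta}(\lambda n^{2})$. The trick is to peel off $\E_\lambda Y$ (which trivially carries a $\lambda$) and to apply Cauchy to the \emph{second} derivative, where the pre-multiplying $\lambda^{2}$ converts a Cauchy bound of $n$ into $\lambda \cdot(\lambda n) = O(\lambda n)$ after using $\lambda = O_\Delta(1)$.
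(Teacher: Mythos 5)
Your argument is correct in its main lines but takes a genuinely different route from the paper's. For the lower bound, the paper simply cites~\cite[Lemma 9]{DP21}; you instead give a self-contained argument via conditional independence over a maximal independent set $S$, the law of total variance, and the elementary occupancy bound $\mb{P}_\lambda[v \in I] \ge \lambda/(1+\lambda)^{\Delta+1}$. That derivation is valid for all $\lambda > 0$ and recovers the same constant. For the upper bound, the paper factors $Z_G(\lambda) = \prod_{j=1}^N(1-\lambda r_j)$ over the inverse roots $r_j$ and computes $\var_\lambda Y = -\lambda\sum_j r_j/(1-\lambda r_j)^2$ directly, then bounds each summand by $O_{\delta,\Delta}(1)$ using $|r_j| = O(\Delta)$ (Shearer's disk keeps the roots away from $0$) and $|1-\lambda r_j| = \Omega_{\delta,\Delta}(1)$ (the zero-free region keeps $\lambda r_j$ away from $1$). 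Your route --- write $\var_\lambda Y = \E_\lambda Y + \lambda^2(\log Z_G)''(\lambda)$, bound $\E_\lambda Y \le n\lambda$, bound $(\log Z_G)''(\lambda)$ by a Cauchy estimate against a uniform $O_{\delta,\Delta}(n)$ bound on $|\log Z_G|$ over a disk inside $\mc{R}_{\delta,\Delta}$, and absorb $\lambda^2$ into $\lambda \cdot O_\Delta(1)$ --- also works, and you have correctly identified why the factor of $\lambda$ survives. The paper's direct computation is crisper since the factor of $\lambda$ is already explicit in the formula and no auxiliary $|\log Z_G|$ bound is needed.

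The one place your argument as written falls short of the stated claim is the claw-free upper bound. The lemma asserts $\var_\lambda Y \le C_\Delta \lambda n$ with $C_\Delta$ depending only on $\Delta$, uniformly over \emph{all} $\lambda > 0$, but you concede a constant depending on an a priori upper bound $C$ on $\lambda$, which is strictly weaker. The paper's factoring computation avoids this issue cleanly: for claw-free $G$ the $r_j$ are negative reals with $|r_j| \le (\Delta+1)/e$, so for real $\lambda > 0$ one has $1-\lambda r_j \ge 1$ and hence $|r_j/(1-\lambda r_j)^2| \le (\Delta+1)/e$ with no dependence on $\lambda$, giving $C_\Delta = (\Delta+1)/e$. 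Your Cauchy approach can in principle be salvaged by taking the contour radius proportional to $\lambda$ when $\lambda$ is large (so that the $\lambda^2/R^2$ factor stays bounded) and observing that on such a circle $|\log Z_G|$ grows only like $n\log\lambda$, which is $O(\lambda n)$ after the dust settles; but this case split is not in your write-up, so as stated this is a real gap for the claw-free clause.
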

\begin{proof}
The lower bounds follow from~\cite[Lemma 9]{DP21}.
For the upper bounds, let $N = \alpha(G) \le n$ be the independence number of $G$ (which is also the degree of $Z_G(\lambda)$ as a polynomial in $\lambda$).  Since the constant term of the polynomial $Z_G(\lambda)$ is $1$, we can write 
\[ Z_G(\lambda) = \prod_{j=1}^N \left( 1- \lambda r_j  \right) \]
where $r_1, \dots , r_N$ are the inverses of the complex roots of $Z_G(\lambda)$.  Then
\begin{align*}
\on{Var}_{\lambda}Y &= \lambda^2 (\log Z_G(\lambda)) '' + \lambda (\log Z_G(\lambda))'  \\
&=  -\sum_{j=1}^N \left( \frac{\lambda^2 r_j^2 }{(1- \lambda r_j)^2}  + \frac{\lambda r_j}{ 1-\lambda r_j    }\right)  \\
&= - \lambda \sum_{j=1}^N \frac{r_j}{ (1-\lambda r_j)^2} \,.
\end{align*}
By \cref{thm:zero-free}, $|r_j| = O(\Delta)$ and  $|1/(1-\lambda r_j)^2 | = O_{\delta, \Delta}(1)$ when $\lambda \in (0,\lambda_c(\Delta)-\delta)$.  In the case of claw-free graphs, $|1/(1-\lambda r_j)^2 | = O_{\Delta}(1)$ for all $\lambda \ge 0$ by \cref{thm:zero-free}.   Combining these estimates completes the proof.
\end{proof}

Next, we control the low Fourier phases of our random variable. 

\begin{lemma}\label{lem:low-fourier}
Let $\Delta\ge 3$ and $\delta \in (0,1)$. For any graph $G = (V,E)$ on $n$ vertices with maximum degree at most $\Delta$ and for any $\lambda \in (0, \lambda_{c}(\Delta)-\delta)$, we have for all $t \in \mb{R}$ that
\[|\mb{E}[e^{itX}]-\mb{E}[e^{it\mc{Z}}]| = O_{\Delta,\delta}\bigg(\frac{(|t|(\log n)^{3/2}+\log n)}{\sigma}\bigg),\] 
where $X = (Y-\mu)/\sigma$.

Moreover, if $G$ is claw-free, then the same conclusion holds for all $\lambda > 0$ with the implicit constant depending only on $\Delta$.
\end{lemma}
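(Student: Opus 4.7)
The plan is to combine the Kolmogorov-distance central limit theorem of \cref{thm:clt-gen} with a standard integration-by-parts argument to promote Kolmogorov closeness into closeness of characteristic functions. Applying \cref{thm:clt-gen} to $Y$, whose probability generating function is $f_Y(z) = Z_G(\lambda z)/Z_G(\lambda)$, requires a lower bound on the distance from roots of $f_Y$ to $1$. These roots are $\zeta/\lambda$ for $\zeta$ a complex zero of $Z_G$, and \cref{thm:zero-free} guarantees $|\zeta - \lambda| \ge c_{\Delta,\delta} > 0$, since $\lambda$ lies in the interior of the zero-free region $\mc{R}_{\delta', \Delta}$ for a suitable $\delta' = \delta'(\Delta, \delta)$. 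Combined with $\lambda \le \lambda_c(\Delta)$, this gives a uniform lower bound on the distance parameter appearing in \cref{thm:clt-gen}, so that
\[\Delta_n := \sup_{s \in \mb{R}} |\mb{P}[X \le s] - \mb{P}[\mc{Z} \le s]| = O_{\Delta,\delta}(\log n/\sigma).\]

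Next I would upgrade this to a characteristic-function estimate via the standard Esseen-type conversion. Writing $G = F_X - F_{\mc{Z}}$, which is of bounded variation, vanishes at $\pm\infty$, and satisfies $\|G\|_\infty \le \Delta_n$, Riemann--Stieltjes integration by parts on $[-R, R]$ produces boundary and interior contributions bounded in absolute value by $2\Delta_n$ and $2|t|R\Delta_n$ respectively, while the tail contribution from $|x|>R$ is at most $\mb{P}[|X|>R] + \mb{P}[|\mc{Z}|>R]$ by monotonicity of the two distribution functions. Choosing $R = c\sqrt{\log n}$ with $c$ sufficiently large makes $\mb{P}[|\mc{Z}|>R] = O(1/n) = O(\Delta_n)$, and Kolmogorov closeness then forces $\mb{P}[|X|>R] = O(\Delta_n)$ as well. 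Assembling these pieces gives
\[|\mb{E}[e^{itX}] - \mb{E}[e^{it\mc{Z}}]| = O(\Delta_n(1 + |t|\sqrt{\log n})) = O_{\Delta,\delta}\!\left(\frac{\log n + |t|(\log n)^{3/2}}{\sigma}\right),\]
which is the target bound.

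The claw-free case is handled identically: by \cref{thm:zero-free}, every root of $Z_G$ is negative real with absolute value at least $e/(\Delta+1)$, so $|\zeta/\lambda - 1| \ge 1$ for every $\lambda > 0$ and the distance parameter in \cref{thm:clt-gen} is uniformly at least $1$; the only $C$-dependence enters through \cref{lem:var-bound}. I do not anticipate a substantial obstacle, since both ingredients---the CLT from zero-freeness and the Kolmogorov-to-characteristic-function conversion via integration by parts and Gaussian-tail truncation---are routine. The main point requiring care is verifying the uniform-in-$\lambda$ lower bound on the distance parameter, which is afforded directly by the geometry of the zero-free region supplied by \cref{thm:zero-free}.
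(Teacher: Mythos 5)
Your proposal is correct and takes essentially the same approach as the paper: invoke the Michelen--Sahasrabudhe Kolmogorov-distance CLT together with the zero-free region to obtain $\sup_s|\mb{P}[X\le s]-\mb{P}[\mc{Z}\le s]|=O_{\Delta,\delta}(\log n/\sigma)$, then promote this to a characteristic-function bound via integration by parts with a truncation at scale $\sqrt{\log n}$, using Gaussian tail decay plus the Kolmogorov bound to control the contribution beyond the truncation. The only cosmetic difference is that the paper smooths $X$ by an infinitesimal Gaussian so it has a density and performs ordinary integration by parts on that, whereas you work directly with $G=F_X-F_{\mc{Z}}$ via Riemann--Stieltjes; both are routine and yield the same $O_{\Delta,\delta}((|t|(\log n)^{3/2}+\log n)/\sigma)$ bound.
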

\begin{proof}
We prove the statement for general graphs; the proof for claw-free graphs is completely analogous. By \cref{thm:clt-gen} combined with the zero-free region from \cref{thm:zero-free} we have for all $t \in \mb{R}$ that
\begin{equation}
\label{eq:clt}
|\mb{P}[X\le t]-\mb{P}[\mc{Z}\le t]| = O_{\Delta,\delta}\bigg(\frac{\log n}{\sigma}\bigg).
\end{equation}
Let $X'$ be $X$ convolved with a centered Gaussian of infinitesimally small variance so that $X'$ has density with respect to the Lebesgue measure on $\mb{R}$; it suffices to prove the claim for $X'$ and then pass to the limit. For this, we note that
\begin{align*}
\mb{E}[e^{itX'}] &= \int_{-\infty}^{\infty}e^{itz}p_{X'}(z)dz\\
&= \int_{|z|\le \tau}e^{itz}p_{X'}(z)dz \pm \mb{P}[|X'|\ge \tau]\\
&= \left[e^{itz}\bigg(\int_{-\tau}^{z}p_{X'}(z')dz'\bigg)\right]\bigg|_{z=-\tau}^{z=\tau} -\int_{-\tau}^{\tau}ite^{itz}\bigg(\int_{-\tau}^{z}p_{X'}(z')dz'\bigg)dz\pm \mb{P}[|X'|\ge \tau]\\
&= e^{it\tau}-\int_{-\tau}^{\tau}ite^{itz}\bigg(\int_{-\tau}^{z}p_{X'}(z')dz'\bigg)dz\pm \mb{P}[|X'|\ge \tau] - e^{it\tau} \mb{P}[|X'|\geq \tau]\\
&= e^{it\tau}-\int_{-\tau}^{\tau}ite^{itz}\mb{P}[X'\in[-\tau,z]] dz\pm \mb{P}[|X'|\ge \tau] - e^{it\tau}\mb{P}[|X'|\geq \tau]\\
&= e^{it\tau}-\int_{-\tau}^{\tau}ite^{itz}\mb{P}[X'\in[-\tau,z]] dz + e^{i\theta}\cdot O_{\Delta, \delta}\bigg(\frac{\log n}{\sigma} + e^{-\tau^2/4}\bigg),
\end{align*}
for some $\theta \in [0,2\pi)$, where in the last line, we have used \cref{eq:clt} along with a standard Gaussian tail bound. 
Applying the same calculation to $\mc{Z}$ instead of $X'$ and taking the difference, we find that
\begin{align*}
    |\mb{E}[e^{itX'}]-\mb{E}[e^{itZ}]| 
    &\leq |t|\left|\int_{-\tau}^{\tau}\left|\mb{P}[X'\in[-\tau,z]] - \mb{P}[\mc{Z} \in [-\tau, z]]\right| dz\right| +  O_{\Delta, \delta}\bigg(\frac{\log n}{\sigma} + e^{-\tau^2/4}\bigg)\\
    &\leq O_{\Delta, \delta}\left(\frac{(|\tau t| + 1)\log{n}}{\sigma} + e^{-\tau^2/4}\right).
\end{align*}
Since $\sigma \leq n$, setting $\tau = \sqrt{8\log{n}}$ gives the desired conclusion.
\end{proof}

Finally, we control the high Fourier phases of our random variable, following a similar strategy to that of Dobrushin and Tirozzi~\cite{dobrushin1977central}. We need the following elementary lemma.
\begin{lemma}\label{lem:well-separated}
Let $G = (V,E)$ be a graph on $n$ vertices with maximum degree at most $\Delta$. Then, there exists a subset $S \subseteq V$ of size $\Omega(n/\Delta^3)$ such that all vertices in $S$ are pairwise distance at least $4$ with respect to the graph distance. Moreover, there is an algorithm to find such a subset $S$ in time $O_{\Delta}(n)$. 
\end{lemma}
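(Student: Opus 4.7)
The plan is to work with the cube of the graph. Define $G^3$ to be the graph on vertex set $V$ in which $u \sim v$ iff $1 \le d_G(u,v) \le 3$, where $d_G$ denotes graph distance in $G$. The number of vertices within distance at most $3$ of any fixed vertex $v$ (other than $v$ itself) is at most $\Delta + \Delta(\Delta-1) + \Delta(\Delta-1)^2 \le \Delta^3$, so the maximum degree of $G^3$ is at most $\Delta^3$. A set $S \subseteq V$ is independent in $G^3$ if and only if its elements are pairwise at $G$-distance at least $4$, so it suffices to exhibit an independent set of $G^3$ of size $\Omega(n/\Delta^3)$.

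For the existence part, I would invoke the classical greedy bound: any graph on $n$ vertices of maximum degree at most $d$ has an independent set of size at least $n/(d+1)$. Applied to $G^3$, this gives an independent set of size at least $n/(\Delta^3+1) = \Omega(n/\Delta^3)$, which is the desired bound.

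For the algorithmic part, I would implement the greedy procedure directly on $G$ (without ever materializing $G^3$, which would be expensive to store). Maintain a Boolean array $\mathrm{dead}[\cdot]$ initialized to false. Iterate over $v \in V$ in any fixed order; if $\mathrm{dead}[v]$ is false, add $v$ to $S$, then run a breadth-first search from $v$ up to depth $3$ in $G$ and set $\mathrm{dead}[u] = \mathrm{true}$ for every vertex $u$ visited (including $v$). Each BFS visits at most $1 + \Delta + \Delta(\Delta-1) + \Delta(\Delta-1)^2 = O(\Delta^3)$ vertices and explores at most $O(\Delta^4)$ edges incident to them, so each BFS runs in time $O_\Delta(1)$. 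The total work across all iterations is bounded by the sum of these BFS costs plus $O(n)$ for the outer loop, giving $O_\Delta(n)$ time. By construction, no two vertices added to $S$ can be within distance $3$ of each other, and each vertex added to $S$ "kills" at most $O(\Delta^3)$ vertices, so $|S| = \Omega(n/\Delta^3)$.

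I do not anticipate any real obstacle here. The one mild subtlety is ensuring the BFS cost is charged correctly so that the total running time is genuinely $O_\Delta(n)$ rather than $O_\Delta(n^2)$ in the worst case; this is immediate from the observation that each selected vertex contributes a constant (depending only on $\Delta$) amount of work, and the number of selected vertices is at most $n$. The argument generalizes trivially to distance $r$ instead of $4$, producing an independent set of size $\Omega(n/\Delta^{r-1})$.
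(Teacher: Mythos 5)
Your proposal is correct and matches the paper's proof: both use the greedy algorithm that repeatedly selects an available vertex and removes everything within distance $3$, giving $|S| = \Omega(n/\Delta^3)$ in $O_\Delta(n)$ time. The framing via $G^3$ and the explicit BFS implementation are just slightly more elaborate presentations of the same argument.
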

\begin{proof}
Let $v_1,\dots, v_n$ denote an arbitrary enumeration of the vertices. Initialize $S = \emptyset$. Consider the greedy algorithm which, at each time step, adds the first available vertex to the set $S$ and removes all vertices within distance $3$ of this vertex from consideration. The algorithm stops when there are no more available vertices. The algorithm runs in time $O_{\Delta}(n)$ and outputs a set $S$ such that any two vertices in $S$ have graph distance at least $4$. Moreover, since at each time, $O(\Delta^3)$ vertices are removed, it follows that $|S| = \Omega(n/\Delta^3)$. 
\end{proof}

\begin{lemma}\label{lem:high-fourier}
Let $\Delta \geq 3$ and $C\geq 1$. There exists $c_{\ref{lem:high-fourier}}(\Delta, C) > 0$ satisfying the following. For any graph $G = (V,E)$ on $n$ vertices with maximum degree at most $\Delta$ and for any $\lambda \in (0,C]$, let $X = (Y-\mu)/\sigma$. Then, for all $t \in [-\pi \sigma, \pi \sigma]$, we have
\[|\mb{E}[e^{-itX}]|\le \exp(-c_{\ref{lem:high-fourier}}(\Delta, C)\lambda n t^2/\sigma^2).\]
\end{lemma}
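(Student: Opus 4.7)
The plan is a two-stage conditioning argument, in the spirit of Dobrushin--Tirozzi. The first stage reveals the configuration off a well-separated set $S$ and renders $(\sigma_v)_{v \in S}$ a family of conditionally independent Bernoulli variables, which lets us factorize the characteristic function. The resulting bound involves a random count of ``free'' vertices $v \in S$ (those with no neighbor in the independent set), and the second stage controls this count by a further conditioning that, crucially, forces its conditional success probabilities to be bounded below \emph{deterministically}, avoiding any appeal to concentration.

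Concretely, apply \cref{lem:well-separated} to obtain $S \subseteq V$ with $|S| = \Omega(n/\Delta^3)$ and pairwise graph distances at least $4$, so the closed neighborhoods $\set{N[v]}_{v \in S}$ are pairwise vertex-disjoint with no edge between them. Let $\mc{H}_1$ denote the $\sigma$-algebra of the configuration on $V \setminus S$. The spatial Markov property makes $(\sigma_v)_{v \in S}$ conditionally independent given $\mc{H}_1$, with $\sigma_v$ Bernoulli of parameter $(\lambda/(1+\lambda))\one[F_v]$, where $F_v := \set{N(v) \cap I = \emptyset}$ is $\mc{H}_1$-measurable. The bound $|1 - p + pe^{is}| \le \exp(-p(1-p)(1-\cos s))$ together with $1 - \cos s \ge 2s^2/\pi^2$ for $|s| \le \pi$, combined with factorization and taking expectations, yields
\[\abs{\mb{E}[e^{-itX}]} \le \mb{E}\sqb{\exp(-\alpha N)},\qquad N := \sum_{v \in S} \one[F_v],\quad \alpha := \frac{2\lambda t^2}{\pi^2 (1+\lambda)^2 \sigma^2}.\]

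In the second stage, condition instead on the coarser $\sigma$-algebra $\mc{H}_2$ of the configuration on $V \setminus N[S]$. Since the $N[v]$'s are disjoint and mutually non-adjacent, the spatial Markov property makes the indicators $\one[F_v]$ conditionally independent given $\mc{H}_2$. The key estimate is that the conditional success probability $q_v' := \mb{P}[F_v \mid \mc{H}_2]$ admits the \emph{deterministic} lower bound $(1+\lambda)^{-\Delta}$: by FKG applied to the conditional hard-core measure on $N[v]$ (the events $\set{u \notin I}$ for $u \in N(v)$ are decreasing, hence positively correlated) together with the worst-case pointwise bound $\mb{P}[u \in I \mid \mathrm{anything}] \le \lambda/(1+\lambda)$,
\[q_v' \ge \prod_{u \in N(v)} \mb{P}[u \notin I \mid \mc{H}_2] \ge (1+\lambda)^{-\Delta}.\]
Conditional independence then gives
\[\mb{E}[\exp(-\alpha N) \mid \mc{H}_2] = \prod_{v \in S}\bigl(1 - q_v'(1-e^{-\alpha})\bigr) \le \exp\paren{-(1-e^{-\alpha})|S|(1+\lambda)^{-\Delta}},\]
which is purely deterministic. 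Since $|t| \le \pi\sigma$ and $\lambda \le C$ keep $\alpha$ bounded by a constant depending only on $\Delta, C$, one has $1 - e^{-\alpha} \gtrsim_{\Delta, C} \alpha$; combining with $|S|(1+\lambda)^{-\Delta} = \Omega_{\Delta, C}(n)$ yields the claimed bound for an appropriate $c_{\ref{lem:high-fourier}}(\Delta, C)$.

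The idea worth underlining is the two-stage decoupling: a single conditioning on $V \setminus S$ leaves the random free count $N$ inside the exponential, so that one would need exponential concentration of $N$ (for instance via a log-Sobolev inequality obtained from \cref{thm:glauber}) to conclude. The coarser conditioning on $V \setminus N[S]$ sidesteps this entirely by exhibiting $N$ as a sum of conditionally independent Bernoullis whose success probabilities have a uniform lower bound coming just from FKG and the max-marginal property of the hard-core model; this is really the only nontrivial step, the rest being routine characteristic-function manipulations.
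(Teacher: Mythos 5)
Your two-stage conditioning route is a genuinely different implementation of the same underlying idea: both you and the paper reduce to the $4$-separated set $S$ from \cref{lem:well-separated} and ultimately condition on the configuration on $V\setminus N[S]$. The paper conditions once on $T = V\setminus N[S]$, writes $Y$ (given $I\cap T$) as a sum of conditionally independent $(\Delta+2)$-valued increments $X_j = |I\cap N[v_j]|$, and bounds $|\mb{E}[e^{-itX_j}]| \le 1 - c_{\Delta,C}\lambda t^2$ uniformly over the outside configuration. You instead factor through the single-site Bernoullis $\sigma_v$ for $v\in S$, which pushes the random free count $N=\sum_v\one[F_v]$ into the exponent and forces a second conditioning stage to remove it; the two routes extract the same variance gain, but yours carries an extra layer that the paper's direct treatment of the $X_j$ avoids.

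The one genuine gap is the FKG step. The hard-core measure does not satisfy the FKG lattice condition (for independent sets $\eta,\eta'$ with $\eta\vee\eta'$ not independent, the left side of the lattice inequality is zero), and the positive-correlation conclusion you invoke can fail outright. On a triangle $\{v,u_1,u_2\}$ with $v$ the center one has $\mb{P}[u_1\notin I,\,u_2\notin I]=\frac{1+\lambda}{1+3\lambda}$ while $\mb{P}[u_1\notin I]\,\mb{P}[u_2\notin I]=\bigl(\frac{1+2\lambda}{1+3\lambda}\bigr)^2$, and since $(1+\lambda)(1+3\lambda)<(1+2\lambda)^2$ the product lower bound you wrote for $q_v'$ is simply false for that $G[N[v]]$. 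Fortunately the needed conclusion $q_v'\ge(1+\lambda)^{-\Delta}$ is still correct, by a shorter direct argument: conditionally on $\mc{H}_2$, the law of $I\cap N[v]$ is a hard-core model on an induced subgraph $G[W]$ with $W\subseteq N[v]$; the event $F_v$ contributes weight $1+\lambda\one[v\in W]$ to the conditional partition function $Z'$, while $Z'\le(1+\lambda)^{|W|}\le(1+\lambda)^{\Delta+1}$ (and $|W|\le\Delta$ when $v\notin W$), so $q_v'\ge(1+\lambda)^{-\Delta}$ in either case. Replacing the FKG appeal by this partition-function bound repairs the proof; the rest of your argument (the Bernoulli characteristic-function estimate, the bound $1-\cos s\ge 2s^2/\pi^2$, the tower over $\mc{H}_2$, and the final constants) goes through as written.
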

\begin{proof}
Rewriting the claim, it suffices to prove that for all $t\in \mb{R}$, $|t|\leq \pi$,
\[|\mb{E}[e^{-itY}]|\le \exp(-c_{\Delta, C} \lambda n t^2).\]
Let $S$ be a $4$-separated set of vertices of $G$ of size $s = \Omega(n/\Delta^3)$ coming from \cref{lem:well-separated}. 
Let $T$ be the set of vertices that are at distance at least $2$ from $S$ in $G$ and let $G[T]$ denote the graph on $T$ induced by $G$. Let $\nu$ denote the distribution on $\mc{I}(G[T])$ induced by the hard-core model. We sample $I$ by first sampling $J \sim \nu$ and then sampling from the conditional distribution (induced by the hard-core model and $J$) on $\mc{I}(G[v\cup N(v)])$ for each $v \in S$. The key observation is that these conditional distributions are mutually independent. In particular, given $J$, we can write
\[Y \stackrel{d.}{=} |J| + X_1+\cdots+X_s\]
where each $X_j$ is an independent random variable with support in $\{0,\ldots,\Delta+1\}$, a probability mass at $0$ of $\Omega_{\Delta,C}(1)$, and a probability mass of $\Omega_{\Delta,C}(\lambda)$ at $1$. Note that the implicit constant in $\Omega$ does not depend on the specific realisation of $J$. 

We claim that for all $|t|\leq \pi$ and all $j \in [s]$, for any realisation of $J$, 
\[|\mb{E}e^{-itX_j}|\le 1-c\lambda t^2\]
for some absolute $c = c_{\Delta,C} > 0$. Indeed, for any realisation of $J$, letting $X_j'$ denote an independent copy of $X_j$, we have
\begin{align*}
|\mb{E}e^{-itX_j}|^2 
&= \mb{E}e^{it(X_j-X_j')} \\
&= \mb{P}[X_j = X_j'] + \sum_{k=1}^{\Delta + 1}(\mb{P}[X_j - X_j' = k] + \mb{P}[X_j'-X_j = k])\cos(kt)
\\
&\leq \mb{P}[X_j = X_j'] + \sum_{k=2}^{\Delta + 1}(\mb{P}[X_j - X_j' = k] + \mb{P}[X_j'-X_j = k]) + 2\mb{P}[X_j - X_j' = 1]\cos(t)\\
&= 1 - 2\mb{P}[X_j - X_j' = 1] + 2\mb{P}[X_j - X_j' = 1]\cos(t)\\
&= 1 - 2\mb{P}[X_j - X_j'=1](1-\cos(t))\\
&\leq 1- \frac{1}{4}\mb{P}[X_j - X_j'=1]t^2 \leq 1 - \frac{1}{4}\mb{P}[X_j = 1]\mb{P}[X_j'=0]t^2 \leq 1 - c_{\Delta, C} \lambda t^2,
\end{align*}
as claimed. 

Finally, we have that for any $t \in [-\pi, \pi]$,
\begin{align*}
    |\mb{E}[e^{-itY}]| 
    &\leq \max_{J}|\mb{E}[e^{-itY} \mid J]|\\
    & = \max_{J}\prod_{j=1}^{s}|\mb{E}e^{-itX_j}|\\
    &\leq (1-c_{\Delta, C}\lambda t^2)^{s/2}\\
    &\leq \exp(-c'n\lambda t^2),
\end{align*}
for an appropriate $c'=c_{\Delta,C}' > 0$ and the result follows. 
\end{proof}

\subsection{Finishing the proof}\label{sub:finishing}

We now prove \cref{thm:independent-lclt2}.
\begin{proof}[{Proof of \cref{thm:independent-lclt2}}]
We prove the result for general graphs; the proof for claw-free graphs is essentially identical. Applying \cref{lem:fourier-convert} to $X = (Y-\mu)/\sigma \in\alpha+\beta\mb{Z}$, where $\alpha = -\mu/\sigma$ and $\beta = 1/\sigma$, and using  \cref{lem:var-bound,lem:low-fourier,lem:high-fourier} we see that for $\sigma \geq 2$, 
\begin{align*}
\sup_{x\in \mc{L}}|\beta\mc{N}(x) - \mb{P}[X=x]|&\le\frac{1}{\sigma}\int_{-\pi\sigma}^{\pi\sigma}\big|\mb{E}[e^{itX}]-\mb{E}[e^{it\mc{Z}}]\big|dt + e^{-\pi^2\sigma^2/2}\\
&\lesssim_{\Delta,\delta}\frac{1}{\sigma}\int_{-\pi\sigma}^{\pi\sigma}\min\bigg(\frac{|t|(\log n)^{3/2}+\log n}{\sigma},e^{-c_{\ref{lem:high-fourier}}\lambda nt^2/\sigma^2}+e^{-t^2/2}\bigg)dt + e^{-\pi^2\sigma^2/2} \\
&\lesssim_{\Delta,\delta}\frac{1}{\sigma}\int_{-\pi\sigma}^{\pi\sigma}\min\bigg(\frac{|t|(\log n)^{3/2}+\log n}{\sigma},e^{-c_{\Delta, \delta}t^2}+e^{-t^2/2}\bigg)dt + e^{-\pi^2\sigma^2/2}\\
&\lesssim_{\Delta,\delta}\frac{1}{\sigma}\int_{-C_{\Delta,\delta}\sqrt{\log \sigma}}^{C_{\Delta,\delta}\sqrt{\log \sigma}}\frac{|t|(\log n)^{3/2}+\log n}{\sigma}dt + \frac{1}{\sigma^2}\\
&\lesssim_{\Delta,\delta}\frac{(\log n)^{5/2}}{\sigma^2}.
\end{align*}

This gives the first term in the minimum in the statement of \cref{thm:independent-lclt2}. For the second term, we may assume that $1 \leq \sigma  < \log{n}$. Let $\lambda' = \lambda/(1+\lambda)$ and observe that the hard-core distribution at fugacity $\lambda$ is identical to the product distribution $\on{Ber}(\lambda')^{\otimes V}$ conditioned on the configuration being an independent set. Here, $\on{Ber}(\lambda')$ is the random variable which is $1$ (or occupied) with probability $\lambda'$ and $0$ (or unoccupied) otherwise. A trivial union bound argument shows that a random sample from $\on{Ber}(\lambda')^{\otimes V}$ is an independent set with probability at least $1 - \lambda'^2 \Delta n = 1 - O_{\Delta}(\lambda^2 \Delta n) = 1- O_{\Delta}(\sigma^4/n)$, where we have used \cref{lem:var-bound}. Therefore, the probability of any configuration under the hard-core model is within a factor of $1 \pm O_{\Delta}(\sigma^4/n)$ of the probability of the same configuration under $\on{Ber}(\lambda')^{\otimes V}$. 

Let $Y'$ denote the random variable counting the number of $1$s in a random sample from $\on{Ber}(\mu)^{\otimes V}$, and let $\mu'$ and $\sigma'$ denote the mean and standard deviation of $Y'$. Then, by the classical DeMoivre-Laplace central limit theorem (see \cite[Chap.~VII,~Theorem~(4)]{Pet75} for the quantitative version used here), we get that for any integer $k$,
\[\left|\mb{P}[Y'=k] - \frac{1}{\sigma'}\mc{N}\left(\frac{k-\mu'}{\sigma'}\right) \right| = O\left(\frac{1}{\sigma'^2}\right).\]
Moreover, from the comparison between the hard-core model and $\on{Ber}(\lambda')^{\otimes V}$ mentioned above, as well as the Chernoff bound for the Binomial distribution, we see that
\begin{align*}
    \mb{P}[Y = k] &= \mb{P}[Y' = k] \pm O_{\Delta}(\sigma^4/n)\\
    \mu &= \mu'(1 \pm O_{\Delta}(\sigma^4\log{n}/n))\\
    \sigma^2 &= \sigma'^2(1 \pm O_{\Delta}(\sigma^6(\log n)^2/n)).
\end{align*}
Substituting this in the above gives the desired conclusion.
\end{proof}

\section{Deterministic algorithms}\label{sec:DetAlgorithms}

For our deterministic algorithms, we will need the following preliminary lemma, which allows us to find a `good' fugacity at which to either apply the LCLT directly or algorithmize its proof. 

\begin{lemma}
\label{lem:find-fugacity}
Let $\Delta \geq 3$ and $\delta \in (0,1/2)$. There exists a constant $C_{\ref{lem:find-fugacity}} = C_{\ref{lem:find-fugacity}}(\Delta, \delta) \geq 1$ for which the following holds. 

For any $\alpha \leq (1-\delta)\alpha_{c}(\Delta)$, there exists a unique $\lambda_* < \lambda_{c}(\Delta)$ so that $\alpha_{K_{\Delta+1}}(\lambda_*) = \alpha$. Further, for any graph $G = (V,E)$ on $n$ vertices of maximum degree at most $\Delta$ and for any $1\leq k\leq \alpha n$, there exists an integer $t \in \{0,\dots, \lceil C_{\ref{lem:find-fugacity}} n \lambda_*  \rceil\}$ such that
\[|n\alpha_G(t/(C_{\ref{lem:find-fugacity}}n)) - k| \leq 1/2.\]
Moreover, if $G$ is claw-free, then for any $1\leq k \leq (1-\delta)i^*(G)$, there exists an integer $t \in \{0,\dots, \lceil n\cdot 8^{(\Delta+1)/\delta^2}\rceil\}$ such that the same conclusion holds.

In either case, such an integer $t$ can be found deterministically in time $n^{O_{\delta, \Delta}(1)}$. 
\end{lemma}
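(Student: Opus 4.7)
The plan is to combine three ingredients: (i) an explicit formula for $\alpha_{K_{\Delta+1}}(\lam)$, (ii) the occupancy fraction comparison that $K_{\Delta+1}$ minimises $\alpha_G(\lam)$ among graphs of maximum degree at most $\Delta$, and (iii) a Lipschitz bound for $\lam \mapsto \E_\lam Y$ derived from \cref{lem:var-bound}; these together reduce the existence statement to a discrete intermediate value argument, which can in turn be executed algorithmically using \cref{thm:patel-regts}.

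For (i), since any independent set in $K_{\Delta+1}$ has size at most $1$, one computes
\[Z_{K_{\Delta+1}}(\lam) = 1 + (\Delta+1)\lam, \qquad \alpha_{K_{\Delta+1}}(\lam) = \frac{\lam}{1+(\Delta+1)\lam},\]
which is strictly increasing and continuous in $\lam \ge 0$ and equals $\alpha_c(\Delta)$ precisely at $\lam = \lam_c(\Delta)$. This yields the unique $\lam_* \in [0,\lam_c(\Delta))$; moreover, $\alpha \le (1-\delta)\alpha_c(\Delta)$ forces $\lam_* \le \lam_{**}$ for some $\lam_{**} < \lam_c(\Delta)$ depending only on $\Delta,\delta$. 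For (iii), $\tfrac{d}{d\lam}\E_\lam Y = \var_\lam Y/\lam$, so \cref{lem:var-bound} gives $\tfrac{d}{d\lam}\E_\lam Y \le C'_{\Delta,\delta} n$ on $[0,\lam_{**}]$. Choosing $C_{\ref{lem:find-fugacity}}$ to be a sufficiently large multiple of $C'_{\Delta,\delta}$ ensures that consecutive values of $f(t) := n\alpha_G(t/(C_{\ref{lem:find-fugacity}} n))$ differ by at most $1/4$. For (ii), by the standard occupancy fraction comparison for bounded degree graphs, $\alpha_G(\lam) \ge \alpha_{K_{\Delta+1}}(\lam)$ for every such $G$ and every $\lam \ge 0$. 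Hence, setting $T := \lceil C_{\ref{lem:find-fugacity}} n \lam_* \rceil$, we get $f(T) \ge n\alpha_G(\lam_*) \ge \alpha n \ge k$ while $f(0) = 0$, so a discrete intermediate value argument produces some $t \in \{0,\dots,T\}$ with $|f(t)-k|\le 1/4 \le 1/2$.

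The claw-free case proceeds analogously, once we verify that a bounded fugacity range suffices. By \cref{thm:zero-free}, $Z_G(\lam) = \prod_{j=1}^{i^*(G)}(1+\lam s_j)$ with each $s_j \ge e/(\Delta+1)$, so
\[\E_\lam Y = \sum_j \frac{\lam s_j}{1+\lam s_j} \ge i^*(G) \cdot \frac{\lam e/(\Delta+1)}{1+\lam e/(\Delta+1)},\]
which exceeds $(1-\delta) i^*(G) \ge k$ once $\lam \gtrsim_{\delta,\Delta} 1$, comfortably within $8^{(\Delta+1)/\delta^2}$. The Lipschitz bound on $[0,8^{(\Delta+1)/\delta^2}]$ follows again from \cref{lem:var-bound} (valid for all bounded $\lam$ in the claw-free regime), and the same discrete search applies.

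Algorithmically, we iterate $t = 0, 1, \dots, T$ and for each $\lam = t/(C_{\ref{lem:find-fugacity}} n)$ compute a $(1/16)$-additive approximation $\wt f(t)$ to $n\alpha_G(\lam) = \lam Z_G'(\lam)/Z_G(\lam)$ by applying \cref{thm:patel-regts} to $Z_G$ at $\lam$ and $\lam \pm h$ for polynomially small $h$ and forming a symmetric finite difference; standard error propagation shows each evaluation costs $n^{O_{\delta,\Delta}(1)}$ time. Returning any $t$ with $|\wt f(t)-k| \le 5/16$ yields $|f(t)-k| \le 3/8 \le 1/2$, and at least one such $t$ exists by the preceding analysis. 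The substantive point in the whole argument is invoking (ii): without it, $\lam_*$ defined purely through $K_{\Delta+1}$ need not upper bound the fugacity required to reach expected independent set size $k$ in the target graph $G$, and the search range might need to extend beyond $\lam_c(\Delta)$ where the FPTAS and variance control both fail. The remaining steps are routine monotonicity and Lipschitz estimates, the Heilmann--Lieb-type real-rootedness in \cref{thm:zero-free}, and the complex-analytic FPTAS of \cref{thm:patel-regts}.
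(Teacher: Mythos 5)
Your argument for the general graph case is essentially the paper's approach, just unpacked: the paper delegates the existence of $t$ and $C_{\ref{lem:find-fugacity}}$ to \cite[Lemma 5]{DP21} (with the variance bound replaced by \cref{lem:var-bound}), and the content of that cited lemma is precisely the occupancy-fraction comparison $\alpha_G(\lam)\ge\alpha_{K_{\Delta+1}}(\lam)$ combined with the Lipschitz estimate $\alpha_G'(\lam)=\on{Var}_\lam Y/(n\lam)\le C'_{\Delta,\delta'}$ from \cref{lem:var-bound}. So this part is correct and self-contained rather than genuinely different.

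The claw-free case, however, has a genuine gap. You write $Z_G(\lam)=\prod_j(1+\lam s_j)$ with "each $s_j\ge e/(\Delta+1)$," but the inequality goes the other way. \cref{thm:zero-free} places the roots of $Z_G$ on the negative real axis \emph{below} $-e/(\Delta+1)$, i.e.\ at distance at least $e/(\Delta+1)$ from the origin; hence the inverse roots satisfy $s_j\le(\Delta+1)/e$, an \emph{upper} bound. There is no lower bound on $s_j$ from real-rootedness: e.g.\ for the claw-free graph $P_3$ (with $\Delta=2$), $Z_{P_3}(\lam)=1+3\lam+\lam^2$ has a root near $-2.62$, giving $s_j\approx 0.38$, which is well below $e/(\Delta+1)\approx 0.9$. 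Since small $s_j$ are possible, your termwise lower bound $\frac{\lam s_j}{1+\lam s_j}\ge\frac{\lam e/(\Delta+1)}{1+\lam e/(\Delta+1)}$ fails, and the conclusion that $\E_\lam Y\ge(1-\delta)i^*(G)$ once $\lam\gtrsim_{\delta,\Delta}1$ is unsupported. The paper avoids this by a purely combinatorial estimate: since $i^*(G)\ge n/(\Delta+1)$, for $\lam\ge 8^{(\Delta+1)/\delta^2}$ one bounds $\mb{P}_\lam[Y\le(1-\delta/2)i^*(G)]\le\frac{2^n\lam^{(1-\delta/2)i^*(G)}}{Z_G(\lam)}\le 2^n\lam^{-\delta n/(2\Delta+2)}\le\delta/4$, using only $Z_G(\lam)\ge\lam^{i^*(G)}$, and then converts the tail bound on $Y$ into the desired lower bound on $\E_\lam Y=n\alpha_G(\lam)$. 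You would need to replace your root-based step with an argument of this type.
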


\begin{proof}
For general graphs, the existence of such an integer $t$ and constant $C_{\ref{lem:find-fugacity}}$ follows from \cite[Lemma~5]{DP21} (upon replacing \cite[Lemma~9]{DP21} by the optimal variance bound \cref{lem:var-bound}). For claw-free graphs, the existence of such an integer $t$ follows similarly as a simple consequence of the following observations: (i) $\alpha_G(\lambda)$ is monotonically increasing in $\lambda$; (ii) since $i^*(G) \geq n/(\Delta+1)$, we have that for all $\lambda \geq 8^{(\Delta+1)/\delta^2}$
\begin{align*}
\mb{P}_{\lambda}[Y\leq (1-\delta/2)i^*(G)] &= \frac{1}{Z_G(\lambda)}\sum_{k=0}^{(1-\delta/2)i^*(G)}\binom{n}{k}\lambda^{k} \leq \frac{1}{Z_G(\lambda)}2^{n} \lambda^{(1-\delta/2)i^*(G)} \\
&\leq \frac{1}{Z_G(\lambda)}\lambda^{i^*(G)}\cdot 2^{n}\lambda^{-\delta n/(2\Delta + 2)} \leq \delta/4
\end{align*}
from which we see that $n\alpha_G(8^{(\Delta+1)/\delta^2}) \geq (1-\delta)i^*(G)$; and 
(iii) for all $\lambda > 0$ 
\[\frac{d}{d\lambda}\alpha_G(\lambda) = \frac{1}{n\lambda}\on{Var}_{\lambda}(Y) \leq C_{\delta, \Delta},\]
where the inequality follows from \cref{lem:var-bound}. 

As for the algorithmic claim, note that for each $t$ in the range, we can deterministically approximate $n\alpha_G(t/(C_{\ref{lem:find-fugacity}}n))$ to within an additive error of $1/4$ in time $n^{O_{\delta, \Delta}(1)}$ by \cref{thm:linear-time-cumulants}, so that we may find a $t$ with the desired property in the stated time. 
\end{proof}

\subsection{EPTAS from the LCLT}
\label{sub:EPTAS}

Recall that an EPTAS (efficient polynomial-time approximation scheme) is a PTAS (polynomial-time approximation scheme) with a running time of the form $f(\epsilon)n^{O(1)}$ i.e.~the degree of the polynomial in $n$ is independent of the error parameter $\epsilon$. By combining our LCLT with \cref{lem:find-fugacity} and \cref{thm:patel-regts,thm:linear-time-cumulants}, we immediately obtain an EPTAS for $i_k(G)$, for all $k$ bounded away from the relevant barrier. 

\begin{theorem}
\label{thm:eptas}
Let $\Delta \geq 3$ and $\delta \in (0,1/2)$. There exists a deterministic algorithm which, on input a graph $G = (V,E)$ on $n$ vertices of maximum degree at most $\Delta$, an integer $1\leq k \leq (1-\delta)\alpha_c(\Delta)n$, and an error parameter $\epsilon \in (0,1)$ outputs an $\epsilon$-relative approximation to $i_k(G)$ in time $n^{O_{\delta, \Delta}(1)}\exp(\tilde{O}_{\delta, \Delta}(\epsilon^{-1}/\sqrt{n}))$.

Moreover, if $G$ is claw-free, then the same conclusion holds for any $1\leq  k\leq (1-\delta)i^*(G)$. 
\end{theorem}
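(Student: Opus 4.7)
Plan. The approach rests on the identity
\[
i_k(G) \;=\; \lambda^{-k}\, Z_G(\lambda)\, \mb{P}_\lambda[Y=k]
\]
for a carefully chosen real $\lambda$: the three factors on the right will be approximated using Theorem~\ref{thm:patel-regts} (Barvinok--Patel--Regts) for the partition function, Theorem~\ref{thm:linear-time-cumulants} for the mean and variance needed to evaluate the Gaussian density, and the LCLT (Theorem~\ref{thm:independent-lclt2}) for the probability mass itself.

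First, I invoke Lemma~\ref{lem:find-fugacity} to deterministically locate a real $\lambda$ with $|\mu_\lambda - k|\le 1/2$ in time $n^{O_{\delta,\Delta}(1)}$; for general $\Delta$-bounded graphs this $\lambda$ lies in $(0,(1-\delta')\lam_c(\Delta))$ for some $\delta'=\delta'(\delta)>0$ (and is bounded by a constant depending on $\delta,\Delta$ in the claw-free case), so it falls inside the zero-free region of Theorem~\ref{thm:zero-free}. Second, I compute $Z_G(\lambda)$ to relative precision slightly below $\epsilon$ via Theorem~\ref{thm:patel-regts}, and I compute $\mu_\lambda,\sigma_\lambda^2$ with enough precision to evaluate $\sigma_\lambda^{-1}\mc{N}\paren{(k-\mu_\lambda)/\sigma_\lambda}$ to $(1\pm\epsilon/C)$-relative accuracy via Theorem~\ref{thm:linear-time-cumulants}. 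Third, Theorem~\ref{thm:independent-lclt2} gives
\[
\mb{P}_\lambda[Y=k] \;=\; \sigma_\lambda^{-1}\, \mc{N}\paren{(k-\mu_\lambda)/\sigma_\lambda} \;+\; O_{\delta,\Delta}\paren{(\log n)^{5/2}/\sigma_\lambda^2},
\]
and since $|k-\mu_\lambda|\le 1/2$ the Gaussian term, and hence also $\mb{P}_\lambda[Y=k]$, is $\Theta(1/\sigma_\lambda)$, so the LCLT's \emph{relative} error is $O_{\delta,\Delta}((\log n)^{5/2}/\sigma_\lambda)$. Combining through the displayed identity then produces the estimate of $i_k(G)$.

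The error analysis and the main obstacle both concern bounding $\sigma_\lambda$ from below. Since every single-vertex marginal in the hard-core model is at most $\lambda/(1+\lambda)\le\lambda$, one has $\mu_\lambda\le \lambda n$, so $\mu_\lambda\ge k-1/2$ forces $\lambda \gtrsim k/n$; by Lemma~\ref{lem:var-bound} this gives $\sigma_\lambda \gtrsim_{\delta,\Delta}\sqrt{k}$. Hence the LCLT relative error is $O_{\delta,\Delta}((\log n)^{5/2}/\sqrt{k})$, which falls below $\epsilon$ as soon as $k\gtrsim_{\delta,\Delta}(\log n)^5/\epsilon^2$; in this large-$k$ regime the entire procedure runs in time $(n/\epsilon)^{O_{\delta,\Delta}(1)}$, already more than is required. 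For the complementary small-$k$ regime I would instead invoke the alternative LCLT estimate from Theorem~\ref{thm:independent-lclt2}, $O(1/\sigma_\lambda^2 + \sigma_\lambda^6(\log n)^2/n)$, obtained by coupling the hard-core distribution to $\on{Ber}(\lambda/(1+\lambda))^{\otimes V}$ and applying the DeMoivre--Laplace local CLT; this bound is sharper exactly when $\sigma_\lambda$ is small. Balancing this estimate against the Patel--Regts precision, and using the universal upper bound $\sigma_\lambda\le C_{\delta,\Delta}\sqrt{n}$, produces the overhead $\exp(\tilde O_{\delta,\Delta}(\epsilon^{-1}/\sqrt n))$; since this is at most $\exp(\tilde O_{\delta,\Delta}(\epsilon^{-1}))$ uniformly in $n$, the total running time has the desired EPTAS form $n^{O_{\delta,\Delta}(1)}f(\epsilon)$.
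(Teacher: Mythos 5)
Your plan is correct for the regime where $k$ is a constant fraction of $n$, and this matches the paper's treatment of that case: find $\lambda$ via Lemma~\ref{lem:find-fugacity}, note that $\sigma_\lambda^2 = \Theta_{\delta,\Delta}(n)$, invoke the LCLT to replace $\mb{P}_\lambda[Y=k]$ by $(\sqrt{2\pi}\sigma_\lambda)^{-1}(1\pm O((\log n)^{5/2}/\sqrt n))$, and assemble via $i_k(G) = \lambda^{-k} Z_G(\lambda)\,\mb{P}_\lambda[Y=k]$ using Theorems~\ref{thm:patel-regts} and \ref{thm:linear-time-cumulants}, with exhaustive enumeration as a fallback when $\epsilon$ is so small that the $\exp$ factor already exceeds $2^n$.

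However, there is a genuine gap in your handling of \emph{moderately small} $k$. When $k$ is small compared to $n$, Lemma~\ref{lem:find-fugacity} produces $\lambda\approx k/n$, so $\sigma_\lambda^2 \approx k$, and both branches of the minimum in Theorem~\ref{thm:independent-lclt2} give an absolute error that is $\Omega(1/\sigma_\lambda^2)$. After dividing by the Gaussian term $\Theta(1/\sigma_\lambda)$, the \emph{relative} error of the LCLT approximation is at least $\Omega(1/\sigma_\lambda) = \Omega(1/\sqrt k)$; the alternative estimate $O(1/\sigma^2 + \sigma^6(\log n)^2/n)$ only removes some polylog factors and cannot push the relative error below this floor. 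Consequently, for $k \lesssim 1/\epsilon^2$ the LCLT-based estimate simply cannot attain $\epsilon$-relative accuracy, no matter how precisely you approximate $Z_G(\lambda)$, $\mu_\lambda$, $\sigma_\lambda$. Your proposed ``balancing against Patel--Regts precision'' has nothing to balance: the bottleneck is the non-Gaussian correction to $\mb{P}_\lambda[Y=k]$, which no amount of precision in the other factors can fix. And exhaustive enumeration over size-$k$ subsets is no rescue either: for $k$ in a range like $\log n \le k \le 1/\epsilon^2$ the cost $\binom{n}{k}$ vastly exceeds $n^{O(1)}\exp(\tilde O(\epsilon^{-1}/\sqrt n))$. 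The paper sidesteps this entire regime by dispatching every $k < e^{-5}n/(\Delta+1)$ to the cluster-expansion-based FPTAS that is implicit in~\cite{davies2020proof}, so that the LCLT is invoked only when $k$ (hence $\lambda$, hence $\sigma^2$) is $\Theta(n)$ and the relative error $(\log n)^{5/2}/\sqrt n$ is already negligible compared to the $\epsilon$ regime not covered by brute force. You would need to include such a separate algorithm for small $k$; the LCLT alone does not suffice.
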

\begin{proof}
If $k < e^{-5}n/(\Delta+1)$, then an FPTAS for $i_k(G)$ is already implicit in \cite{davies2020proof}. Therefore, we may restrict our attention to $k \geq e^{-5}n/(\Delta + 1)$. We may also assume that $\epsilon^{-1} \leq c_{\delta, \Delta}\sqrt{n}/(\log{n})^3$ for a sufficiently small constant $c_{\delta, \Delta} > 0$; otherwise $\exp(\tilde{O}_{\delta, \Delta}(\epsilon^{-1}/\sqrt{n}))\geq 4^{n}$ so that exhaustive enumeration runs in the claimed time.  

Fix $k \geq e^{-5}n/(\Delta+1)$ which also lies in the specified range, let $t_k$ denote the corresponding value of $t$ given by \cref{lem:find-fugacity}, and let $\lambda_k = t/C_{\ref{lem:find-fugacity}}$ denote the corresponding fugacity. The upper bound on $\alpha_G'(\lambda)$ in the proof of \cref{lem:find-fugacity} shows that $\lambda_k = \Omega_{\delta, \Delta}(1)$. In particular, $\mu_{k} := \mu_{\lambda_k}, \sigma_{k} := \sigma_{\lambda_k}$ satisfy $\mu_k, \sigma^2_k = \Theta_{\delta, \Delta}(n)$. Moreover, since $|k - \mu_k| \leq 1/2$,  it follows by \cref{thm:independent-lclt2} that 
\begin{align*}
    \frac{i_k(G) \lambda_k^k}{Z_G(\lambda_k)} 
    &= \mb{P}[Y = k]\\
    &= \sigma_k^{-1}\mc{N}((k-\mu_k)/\sigma_k) \pm O_{\Delta, \delta}((\log n)^{5/2}/n)\\
    &= (1\pm \epsilon/1000)\cdot \sigma_k^{-1}\mc{N}((k-\mu_k)/\sigma_k)\\
    &= (1\pm \epsilon/500)\cdot (\sqrt{2\pi} \sigma_k)^{-1}. 
\end{align*}

Therefore, letting $\wh{Z}_G(\lambda_k)$ and $\wh{\sigma}_k$ denote  $\epsilon/1000$-relative approximations to $Z_G(\lambda_k)$ and $\sigma_{k}$, both of which can be computed deterministically in time $(n/\epsilon)^{O_{\delta, \Delta}(1)}$ by \cref{thm:patel-regts,thm:linear-time-cumulants}, it follows that
\[\wh{i_k}(G) = \lambda_k^{-k} \cdot \wh{Z}_G(\lambda_k)\cdot  (\sqrt{2\pi}\wh{\sigma}_k)^{-1} \]
is an $\epsilon$-relative approximation to $i_k(G)$, as desired. 
\end{proof}

\subsection{FPTAS from the proof of the LCLT}

In the previous subsection, we saw how our LCLT gives an EPTAS for $i_k(G)$, for all $k$ bounded away from the relevant barrier. Extending this to an FPTAS requires bypassing the error term $\tilde{O}(1/\sigma^2)$ in \cref{thm:independent-lclt2}. For this, instead of approximating $\mb{P}[Y=k]$ by $\sigma^{-1}\mc{N}((k-\mu)/\sigma)$, we will directly approximate $\mb{P}[Y=k]$ to the desired accuracy. The key ingredient required for this is the following.    
\begin{lemma}
\label{lem:approximate-characteristic-function}
Fix $\Delta \geq 3$, $\delta \in (0,1/2)$, and a parameter $C \geq 1$. There exists a deterministic algorithm which, on input a graph $G = (V,E)$ on $n$ vertices of maximum degree at most $\Delta$, $n^{-2} \leq \lambda \leq (1-\delta)\lambda_c(\Delta)$, an error parameter $\epsilon \in (0,1/\sqrt{n})$, and $t \in [-C\sqrt{\log{1/\epsilon}}, C\sqrt{\log{1/\epsilon}}]$ outputs an $\epsilon^{50}$-relative, $\epsilon^{50}$-additive approximation to $\mb{E}_{\lambda}[e^{itX}]$ in time $(n/\epsilon)^{O_{\delta,\Delta, C}(1)}$, where $X = (Y-\mu_{\lambda})/\sigma_{\lambda}$.

Moreover, if $G$ is claw-free, the same conclusion holds for all $\lambda \geq n^{-2}$ with running time $(n/\epsilon)^{O_{\Delta, C, |\lambda|}(1)}$.
\end{lemma}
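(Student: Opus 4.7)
The plan is to exploit the identity
\[
\E_\lambda[e^{itX}] = e^{-it\mu_\lambda/\sigma_\lambda} \cdot \frac{Z_G(\lambda e^{it/\sigma_\lambda})}{Z_G(\lambda)},
\]
which reduces the task to approximating (i) the real numbers $\mu_\lambda$ and $\sigma_\lambda$, and (ii) the independence polynomial at the complex activity $\lambda e^{it/\sigma_\lambda}$ together with the real activity $\lambda$. For (i), I would invoke \cref{thm:linear-time-cumulants} to produce $\wh\mu,\wh\sigma$ with additive error $\eta\lambda n$ for a sufficiently small inverse polynomial $\eta$ in $n$ and $\epsilon^{-1}$; since \cref{lem:var-bound} gives $\mu_\lambda, \sigma_\lambda^2 = \Theta_{\Delta,\delta}(\lambda n)$ (here we use $\lambda \geq n^{-2}$), this yields $O(\eta)$-relative approximations of both quantities. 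For (ii), I would invoke \cref{thm:patel-regts} to compute $\wh Z_1, \wh Z_2$ that are $\epsilon^{100}$-relative approximations of $Z_G(\lambda)$ and $Z_G(\lambda e^{it/\wh\sigma})$. The algorithm then outputs $e^{-it\wh\mu/\wh\sigma}\cdot \wh Z_2/\wh Z_1$.

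The critical step is to verify that $\lambda e^{it/\wh\sigma}$ lies in the zero-free region $\mc{R}_{\delta,\Delta}$ of \cref{thm:zero-free}, so that \cref{thm:patel-regts} is applicable. Since $|\lambda e^{it/\wh\sigma}| = \lambda$, if $\lambda < (\Delta-1)^{\Delta-1}/\Delta^\Delta$ the evaluation point lies in the disk part of $\mc{R}_{\delta,\Delta}$ regardless of $t$. Otherwise $\lambda = \Omega_\Delta(1)$, so \cref{lem:var-bound} gives $\sigma_\lambda = \Omega_{\Delta,\delta}(\sqrt{n})$, and hence $|t|/\wh\sigma = O_{\Delta,\delta,C}(\sqrt{\log(1/\epsilon)/n})$. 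We may assume $\epsilon \geq e^{-c_{\Delta,\delta} n}$ for a sufficiently small $c_{\Delta,\delta} > 0$, since otherwise the target running time $(n/\epsilon)^{O(1)}$ accommodates exhaustive enumeration. Under this assumption $|t|/\wh\sigma$ is small enough that $\Re(\lambda e^{it/\wh\sigma}) \in [0,(1-\delta)\lambda_c(\Delta)]$ and $|\Im(\lambda e^{it/\wh\sigma})| \leq c_{\delta,\Delta}$, placing the evaluation point in the strip part of $\mc{R}_{\delta,\Delta}$. The claw-free case is handled analogously using the zero-free region $\mc{C}_\Delta$, which contains the entire right half-plane whenever $|t|/\wh\sigma < \pi/2$.

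Finally, a routine perturbation analysis would show that the output is an $\epsilon^{50}$-relative, $\epsilon^{50}$-additive approximation of $\E_\lambda[e^{itX}]$. The key observations are: the ratio $Z_G(\lambda e^{it/\sigma_\lambda})/Z_G(\lambda)$ has modulus at most $1$ since $|\sum_I(\lambda e^{it/\sigma_\lambda})^{|I|}| \leq \sum_I \lambda^{|I|}$, so additive and relative error bounds on the ratio are comparable; replacing $\sigma_\lambda$ by $\wh\sigma$ inside the complex activity perturbs $\log Z_G$ by at most $O(\eta |t| \sqrt{n})$ via the bound $|Z_G'/Z_G| = O(n)$ throughout the zero-free region; and the prefactor is similarly controlled using $|e^{-it\mu_\lambda/\sigma_\lambda} - e^{-it\wh\mu/\wh\sigma}| \leq |t|\cdot|\mu_\lambda/\sigma_\lambda - \wh\mu/\wh\sigma| = O(\eta |t| \sqrt{n})$. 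The main obstacle is precisely the zero-free region verification in the second paragraph, where the upper bound $|t| \leq C\sqrt{\log(1/\epsilon)}$ and the variance lower bound $\sigma_\lambda = \Omega_{\Delta,\delta}(\sqrt{n})$ must cooperate (in the non-disk regime) to keep the complex evaluation point inside $\mc{R}_{\delta,\Delta}$; everything else is a careful bookkeeping of approximation errors. The total running time is $(n/\epsilon)^{O_{\Delta,\delta}(1)}$, dominated by the two calls to \cref{thm:patel-regts}.
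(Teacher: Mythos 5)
Your proposal is correct and follows essentially the same route as the paper's proof: the same identity $\mb{E}_\lambda[e^{itX}] = e^{-it\mu/\sigma}\,Z_G(\lambda e^{it/\sigma})/Z_G(\lambda)$, the same use of \cref{thm:linear-time-cumulants} and \cref{thm:patel-regts} for $\wh\mu,\wh\sigma,\wh Z$, the same output expression $e^{-it\wh\mu/\wh\sigma}\cdot\wh Z_2/\wh Z_1$, and the same two-case verification that $\lambda e^{it/\wh\sigma}\in\mc R_{\delta,\Delta}$ (small $\lambda$ puts the point in Shearer's disk; large $\lambda$ forces $\sigma_\lambda=\Omega_{\Delta,\delta}(\sqrt n)$, so either the arc $|t|/\wh\sigma$ is small enough or $\epsilon^{-1}=e^{\Omega(n)}$ and exhaustive enumeration suffices). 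The only cosmetic differences are the threshold for the case split (you use the exact Shearer disk radius $(\Delta-1)^{\Delta-1}/\Delta^\Delta$; the paper uses the more conservative $1/(10\Delta)$, which lies strictly inside it) and the perturbation bookkeeping (you bound the effect of replacing $\sigma_\lambda$ by $\wh\sigma$ via $|Z_G'/Z_G|=O_{\Delta,\delta}(n)$ over the zero-free region, while the paper bounds $|Z_G(\lambda e^{it/\wh\sigma})-Z_G(\lambda e^{it/\sigma})|$ termwise using $|e^{ik\theta}-1|$); both are standard and lead to the same conclusion with a suitably small inverse-polynomial choice of the intermediate error parameter, as the paper achieves with $\epsilon^{100}$.
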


\begin{proof}
We provide the proof for general graphs; the proof for claw-free graphs follows by straightforward modifications. For convenience of notation, we denote $\mu_{\lambda}$ and $\sigma_{\lambda}$ by $\mu$ and $\sigma$. By \cref{thm:patel-regts,thm:linear-time-cumulants}, we can compute $\epsilon^{100}$-relative approximations $\wh{Z}$, $\wh{\mu}$, and $\wh{\sigma}$ to $Z_G(\lambda)$, $\mu$, and $\sigma$ in time $(n/\epsilon)^{O_{\delta, \Delta}(1)}$.
Note that
\begin{align*}
    \mb{E}_{\lambda}[e^{itX}] 
    &= e^{-it\mu/\sigma}\cdot \mb{E}_{\lambda}[e^{itY/\sigma}]\\
    &= e^{-it\mu/\sigma}\cdot \frac{Z_G(\lambda e^{it/\sigma})}{Z_G(\lambda)}.
\end{align*}
We have the following two cases.

\textbf{Case I: }$\lambda \leq \frac{1}{10\Delta}$. In this case, for any $t \in \mb{R}$, $\lambda e^{it/\wh{\sigma}} \in \mc{R}_{\delta, \Delta}$, where $\mc{R}_{\delta, \Delta}$ is the zero-free region in \cref{thm:zero-free}. 

\textbf{Case II: }$\lambda > \frac{1}{10\Delta}$. In this case, we may assume that $\lambda e^{it/\wh{\sigma}} \in \mc{R}_{\delta, \Delta}$ for all $t \in \mb{R}$, $|t|\leq C\sqrt{\log{1/\epsilon}}$. Otherwise, since $\wh{\sigma}^2 = \Omega_{\delta, \Delta}(n)$ by \cref{lem:var-bound}, it follows that $\epsilon^{-1} = \exp(\Omega_{\delta, \Delta, C}(n))$, so that exhaustive enumeration runs in the claimed time.

Hence, in either case, it follows from \cref{thm:patel-regts} that an $\epsilon^{100}$-relative approximation to $Z_G(\lambda e^{it/\wh{\sigma}})$, which we denote by $\wh{Z}_{t}$, can be computed in time $(n/\epsilon)^{O_{\delta, \Delta}(1)}$.  

We claim that the output
\[e^{-it\wh{\mu}/\wh{\sigma}}\cdot \frac{\wh{Z}_t}{\wh{Z}}\]
is an approximation to $\mb{E}_{\lambda}[e^{itX}]$ of the desired quality. Indeed, by the assumed upper bounds on $|t|$ and $\epsilon$ and the assumed lower bound on $\lambda$, we have for all $n$ sufficiently large (depending on $\delta, \Delta, C$) that
\begin{align*}
    \frac{e^{it/\sigma}}{e^{it/\wh{\sigma}}} = e^{it(\wh{\sigma} - \sigma)/(\sigma \wh{\sigma})} = e^{i\theta}, \quad \quad \frac{e^{-it\mu}}{e^{-it\wh{\mu}}} = e^{i\theta'}
\end{align*}
for some $\theta, \theta' \in \mb{R}$ satisfying $|\theta|, |\theta'| \leq \epsilon^{97}$.
Therefore,  $\frac{e^{it\wh{\mu}/\wh{\sigma}}}{e^{it\mu/\sigma}} = e^{i\theta''}$ for some $\theta''\in \mb{R}$, $|\theta''| \leq \epsilon^{90}$ and
\begin{align*}
   |Z_G(\lambda e^{it/\wh{\sigma}}) - Z_G(\lambda e^{it/\sigma})| &= \left|\sum_{k=0}^{n}i_k(G)\lambda^{k}(e^{itk/\wh{\sigma}} - e^{itk/\sigma})\right|
   \leq \sum_{k=0}^{n}i_k(G)\lambda^{k}|e^{ik\theta} - 1|\\
   &\leq \epsilon^{94}\sum_{k=0}^{n}i_k(G)\lambda^{k} = \epsilon^{94}Z_G(\lambda).
\end{align*}
Hence,
\begin{align*}
    \mb{E}_{\lambda}[e^{itX}] 
    &= 
    \left(e^{i\theta''}e^{-it\wh{\mu}/\wh{\sigma}}\right)\cdot \frac{Z_G(\lambda e^{it/\wh{\sigma}})}{Z_G(\lambda)} + e^{-it\mu/\sigma}\cdot \frac{Z_G(\lambda e^{it/\sigma}) - Z_G(\lambda e^{it/\wh{\sigma}})}{Z_G(\lambda)}\\
    &=  \left(e^{i\theta''}e^{-it\wh{\mu}/\wh{\sigma}}\right)\cdot e^{\pm 2\epsilon^{100}}e^{\pm i \epsilon^{100}}\frac{\wh{Z}_t}{\wh{Z}} + w_{t}
\end{align*}
where $w_t \in \mb{C}$ with $|w| \leq \epsilon^{94}$, as desired.  
\end{proof}

Given the preceding lemma, \cref{thm:independent-fptas,thm:matching-fptas} follow from the proof of \cref{thm:independent-lclt2}. 

\begin{proof}[Proof of \cref{thm:independent-fptas,thm:matching-fptas}]
Again, we provide the details only for general graphs i.e.~\cref{thm:independent-fptas}; the argument for claw-free graphs is essentially the same. Let $k \geq 1$ be as in the statement of the theorem. Moreover, we may assume that $\epsilon \in (0,1/n)$, since the statement for larger values of $\epsilon$ follows from the statement for $\epsilon = 1/n$. Let $t_k$ denote the integer returned by \cref{lem:find-fugacity} and let $\lambda:=\lambda_k = t_k/C_{\ref{lem:find-fugacity}}$ denote the corresponding fugacity. Since $\mu := \mu_{\lambda_k} \geq 1/2$, it follows as in the proof of \cref{lem:find-fugacity} that $\lambda = \Omega_{\delta, \Delta}(1/n)$.

Let $\sigma := \sigma_{\lambda_k}$, $X = (Y-\mu)/\sigma$, and $\gamma = \min(\pi \sigma, C\sqrt{\log{1/\epsilon}})$, where $C$ is a sufficiently large constant depending on $\Delta, \delta$.  Let $\wh{Z}, \wh{\mu}, \wh{\sigma}$ be $\epsilon^{100}$-relative approximations to $Z_G(\lambda), \mu, \sigma$, which can be found deterministically in time $(n/\epsilon)^{O_{\delta, \Delta}(1)}$  by \cref{thm:patel-regts,thm:linear-time-cumulants}.

Let $x = (k-\mu)/\sigma$ and $\wh{x} = (k-\wh{\mu})/\wh{\sigma}$. 
Then, 
as in the proof of \cref{thm:independent-lclt2}, we have 
\begin{align*}
     \mb{P}_{\lambda}[X=x] 
    &= \frac{1}{2\pi \sigma}\int_{-\pi \sigma}^{\pi \sigma}\mb{E}_{\lambda}[e^{itX}]e^{-itx}dt \\
    &= \pm \epsilon^{100} + \Re\left(\frac{1}{2\pi \sigma}\int_{-\gamma}^{\gamma}\mb{E}_{\lambda}[e^{itX}]e^{-itx}dt\right) \\
    &= \pm 2\epsilon^{75} + \Re\left(\frac{\epsilon^{100}}{2\pi \sigma} \sum_{\ell = -\gamma\epsilon^{-100}}^{\gamma \epsilon^{-100}}\mb{E}_{\lambda}[e^{i \epsilon^{100}\ell X}]e^{-i\epsilon^{100}\ell x}\right)\\
    &= \pm 3\epsilon^{75} + \Re\left(\frac{\epsilon^{100}}{2\pi \sigma} \sum_{\ell = -\gamma\epsilon^{-100}}^{\gamma \epsilon^{-100}}\mb{E}_{\lambda}[e^{i \epsilon^{100}\ell X}]e^{-i\epsilon^{100}\ell \wh{x}}\right),
\end{align*}
where the first line uses the Fourier inversion formula for lattices, the second line uses the definition of $\gamma$, \cref{lem:high-fourier}, and \cref{lem:var-bound}, the third line uses the upper bound on $\epsilon$ and the lower bound on $\lambda$, and the last line uses the lower bound on $\lambda$.  

Next, for each integer $\ell \in [-\gamma \epsilon^{-100}, \gamma \epsilon^{-100}]$, let $\wh{Z}_{\ell}$ denote an $\epsilon^{100}$-additive, $\epsilon^{100}$-relative approximation to $\mb{E}_{\lambda}[e^{i\epsilon^{100}\ell X}]$. By \cref{lem:approximate-characteristic-function}, these approximations may be found deterministically in time $(n/\epsilon)^{O_{\delta, \Delta}(1)}$. Then, 
\begin{align*}
    (1\pm \epsilon^{75})\frac{i_k(G)\lambda^k}{\wh{Z}} &= \frac{i_k(G)\lambda^k}{Z_G(\lambda)} = \mb{P}_{\lambda}[X=x]\\ 
    &= \pm 3\epsilon^{75} + \Re\left(\frac{\epsilon^{100}}{2\pi \sigma} \sum_{\ell = -\gamma\epsilon^{-100}}^{\gamma \epsilon^{-100}}\mb{E}_{\lambda}[e^{i \epsilon^{100}\ell X}]e^{-i\epsilon^{100}\ell \wh{x}}\right)\\
    &= \pm 4\epsilon^{75} + (1\pm \epsilon^{50})\Re\left(\frac{\epsilon^{100}}{2\pi \wh{\sigma}} \sum_{\ell = -\gamma\epsilon^{-100}}^{\gamma \epsilon^{-100}}\wh{Z}_{\ell}e^{-i\epsilon^{100}\ell \wh{x}}\right). 
\end{align*}

Finally, since $\mb{P}_{\lambda}[X=x] = \mb{P}_{\lambda}[Y = k] = \Omega_{\delta, \Delta}(1/\sigma)$, we have that
\begin{align*}
(1\pm \epsilon^{50})\frac{i_k(G)\lambda^k}{\wh{Z}}  =   (1\pm \epsilon^{50})\Re\left(\frac{\epsilon^{100}}{2\pi \wh{\sigma}} \sum_{\ell = -\gamma\epsilon^{-100}}^{\gamma \epsilon^{-100}}\wh{Z}_{\ell}e^{-i\epsilon^{100}\ell \wh{x}}\right),
\end{align*}
so that the quantity
\[\lambda^{-k}\cdot \wh{Z}\cdot\Re\left(\frac{\epsilon^{100}}{2\pi \wh{\sigma}} \sum_{\ell = -\gamma\epsilon^{-100}}^{\gamma \epsilon^{-100}}\wh{Z}_{\ell}e^{-i\epsilon^{100}\ell \wh{x}}\right),\]
which can be computed deterministically in time $(n/\epsilon)^{O_{\delta, \Delta}(1)}$, is an $\epsilon$-relative approximation to $i_k(G)$. \qedhere

\end{proof}

\section{Randomized algorithms}
\label{sec:RandAlgorithms}

\subsection{A quasi-linear time sampling algorithm}
We proceed to the proof of \cref{thm:faster-sampling}. We will prove the result for independent sets, noting that the proof for matchings follows identically.

We will need the following preliminary lemma, which is an efficient randomized version of \cref{lem:find-fugacity}. 
\begin{lemma}\label{lem:lambda-determine}
Let $\Delta \geq 3$ and $\delta \in (0,1/2)$. There exists a constant $c_{\ref{lem:lambda-determine}}(\delta, \Delta) > 0$ and a randomized algorithm with the following property: for any graph $G = (V,E)$ on $n$ vertices of maximum degree at most $\Delta$, for any $1\leq k \leq (1-\delta)n\alpha_c(\Delta)$, and for any $\epsilon \in (0,1)$, the algorithm outputs $\lambda \in (0, (1-c_{\ref{lem:lambda-determine}})\lambda_c(\Delta)]$ satisfying 
\[|\mb{E}_{\lambda}{Y}-k|\le \sqrt{\on{Var}_{\lambda}{Y}}\]
with probability $1-\epsilon$. The running time of the algorithm is $O_{\Delta,\delta}(n\log(n/\epsilon)(\log n)^3)$.
\end{lemma}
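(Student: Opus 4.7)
The plan is a binary search on $\lambda$ in the interval $[0, (1-c)\lambda_c(\Delta)]$, where the constant $c = c_{\ref{lem:lambda-determine}}(\delta, \Delta)$ is chosen so that for every $k \le (1-\delta)n\alpha_c(\Delta)$, the unique $\lambda^*$ with $\mb{E}_{\lambda^*}Y=k$ lies in this interval. (The existence of such $c$ follows from the strict monotonicity of $\lambda \mapsto \mb{E}_\lambda Y$ together with the fact---implicit in the proof of \cref{lem:find-fugacity}---that $K_{\Delta+1}$ minimizes the occupancy fraction among graphs of maximum degree $\Delta$ at each fugacity.) At each step of the binary search with current midpoint $\lambda_M$, the algorithm estimates $\mu(\lambda_M):=\mb{E}_{\lambda_M}Y$ and updates the search interval accordingly.

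To estimate $\mu(\lambda_M)$, draw $m = O(\log(\log(n)/\epsilon))$ approximate samples from the hard-core distribution at fugacity $\lambda_M$ using \cref{thm:glauber}, each to total variation distance $\eta = c'\sqrt{\lambda_M/n}$ (for a small constant $c' = c'(\delta,\Delta)$), in time $O_{\Delta,\delta}(n\log(n/\eta)) = O_{\Delta,\delta}(n\log n)$ per sample. A median-of-means estimator then converts the samples into an estimate of $\mu(\lambda_M)$ that is within $\sqrt{\var_{\lambda_M}Y}/8$ of the true mean with probability at least $1 - \epsilon/(10\log n)$; here I use the variance estimate $\var_{\lambda_M}Y = \Theta_{\Delta,\delta}(\lambda_M n)$ from \cref{lem:var-bound} to ensure that each constant-size group of samples lands within the desired additive precision with probability at least $2/3$. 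If the estimate falls within $\sqrt{\var_{\lambda_M}Y}/4$ of $k$, declare $\lambda_M$ the output; otherwise, shrink the interval by replacing the appropriate endpoint with $\lambda_M$ according to the sign of $\mathrm{(estimate)} - k$.

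Strict monotonicity of $\mu(\cdot)$ together with the derivative identity $\mu'(\lambda) = \var_\lambda Y/\lambda = \Theta_{\Delta,\delta}(n)$ (again by \cref{lem:var-bound}) ensures two things: conditioned on every estimate being accurate to the stated precision, the binary search correctly halves an interval known to contain $\lambda^*$; and once the interval has width $O(\sqrt{\lambda/n})$, which occurs after $O(\log n)$ iterations, the midpoint $\lambda$ automatically satisfies $|\mu(\lambda)-k| \le \sqrt{\var_\lambda Y}$. A union bound over the $O(\log n)$ binary search steps with per-step failure probability $\epsilon/(10\log n)$ yields overall success probability $1-\epsilon$, and the runtime is $O(\log n) \cdot m \cdot O_{\Delta,\delta}(n\log n)$, which is comfortably within the claimed budget $O_{\Delta,\delta}(n\log(n/\epsilon)(\log n)^3)$.

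The main obstacle is cleanly propagating the two sources of randomness---the total variation error in each Glauber sample and the Monte Carlo error of the median-of-means estimator---so as to guarantee the \emph{deterministic} output condition $|\mu(\lambda)-k|\le\sqrt{\var_\lambda Y}$ rather than merely a statement about the estimator. I address this by choosing $\eta$ small enough that its contribution to the bias of the estimator is dominated by the Monte Carlo error, and by overshooting the target precision by a constant factor so that the event ``estimator within specified precision'' forces the true mean to lie in the window $k \pm \sqrt{\var_\lambda Y}$; the resulting constants are absorbed into $c_{\ref{lem:lambda-determine}}$ and the hidden constants in the $O_{\Delta,\delta}(\cdot)$ notation.
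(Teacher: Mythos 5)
Your overall scheme --- binary search over $\lambda$ driven by a Glauber-dynamics subroutine that estimates where $\mathbb{E}_\lambda Y$ sits relative to $k$ --- is the same as the paper's, and the accounting of $O(\log n)$ bisection steps, $O(\log(\cdot/\epsilon))$ samples per step, and $O_{\Delta,\delta}(n\log n)$ per sample is sound. The substantive difference is that the paper estimates the \emph{median} of $Y$, then transfers to the mean via the CLT (\cref{thm:clt-gen}), whereas you estimate the mean directly by median-of-means. The paper's choice is not cosmetic: median (quantile) estimation is robust to a \emph{constant} total-variation error in each approximate sample, because all that is needed is that $\mathbb{P}[\tilde Y \le \mathrm{med}-\sqrt{\var}/2]$ stay bounded away from $1/2$, and a TV shift of $\eta$ perturbs that probability by only $\eta$. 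The paper also carves out the range $k\le\sqrt n$ and settles it deterministically with $\lambda=k/n$ via \cref{corEVbounds}.

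There is a concrete gap in your version: the choice $\eta = c'\sqrt{\lambda_M/n}$ controls the \emph{bias} of the approximate sample (namely $|\mathbb{E}\tilde Y - \mathbb{E}_\lambda Y|\le n\eta = O(c'\sqrt{\lambda_M n}) = O(c'\sigma)$, fine), but it does not control the \emph{variance inflation}. Coupling an approximate sample with an exact one gives only $\var(\tilde Y)\le \sigma^2 + O(n^2\eta)$, since the errant mass (probability $\eta$) can land anywhere in $\{0,\dots,n\}$. Your Chebyshev bound on a group mean of size $s$ is then $\le O\!\left((1+n^2\eta/\sigma^2)/s\right)$, so to keep $s$ constant you need $n^2\eta \lesssim \sigma^2 = \Theta_{\Delta,\delta}(\lambda n)$, i.e.\ $\eta\lesssim \lambda/n$. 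Your $\eta=c'\sqrt{\lambda/n}$ exceeds this by a factor $\sqrt{n/\lambda}$, and in the small-$\lambda$ part of the binary search (e.g.\ $\lambda\approx 1/n$, which arises for $k=O(1)$) this forces $s=\Omega(n)$, blowing the runtime. The fix is easy --- take $\eta = \min\{c'\lambda_M/n,\,n^{-2}\}$ or just $\eta=1/n^2$, which still costs only $O_{\Delta,\delta}(n\log n)$ per Glauber run --- but as written the argument does not close. This is exactly the robustness headache that the paper avoids by switching from moments to quantiles.
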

\begin{remark}
If $k \geq n/(3\Delta)$, then it is easily seen that $\lambda \geq 1/(100\Delta)$. 
\end{remark}
\begin{proof}
If $1\leq k \leq \sqrt{n}$, then it follows from \cref{corEVbounds} that the deterministic choice $\lambda = k/n$ satisfies the desired conclusion. Therefore, it suffices to consider the case $\sqrt{n} \leq k \leq (1-\delta)n\alpha_c(\Delta)$.  
Let $\Lambda = [n^{-1/3}, (1-c_{\ref{lem:lambda-determine}})\lambda_c(\Delta)] \cap (\mb{Z}/n^{2})$. By \cref{lem:find-fugacity}, there exists $\lambda \in \Lambda$ satisfying 
\[|\mb{E}_{\lambda}Y - k| \leq 1/2. \]
Further, for any $\lambda \in \Lambda$, it follows from \cref{thm:clt-gen} and \cref{lem:var-bound} that
\[|\mb{E}_{\lambda}Y - \on{med}_{\lambda}Y| = \tilde{O}_{\delta, \Delta}(1),\]
where $\on{med}$ denotes the median. Since $\on{Var}_{\lambda}Y = \Theta_{\delta, \Delta}(n\lambda)$ by \cref{lem:var-bound}, it follows that there exists $\lambda \in \Lambda$ satisfying
\[|\on{med}_{\lambda}Y - k| \leq \frac{1}{2}\sqrt{\on{Var}_{\lambda}Y}\]
and it suffices to output such a $\lambda \in \Lambda$.

For any $\lambda \in \Lambda$, there is a randomized algorithm to estimate $\on{med}_{\lambda}Y$ to within $\sqrt{\on{Var}_{\lambda}Y}/2$ additive error which succeeds with probability $1- (\epsilon/n^3)$ and runs in time $O(n\log(n/\epsilon)(\log n)^2)$. Indeed, by \cref{thm:glauber,thm:clt-gen}, \cref{lem:var-bound} and the Chernoff bound, this may be accomplished by taking the median of $O_{\delta, \Delta}(\log(n/\epsilon))$ independent runs of the Glauber dynamics, each for $\Theta_{\Delta, \delta}(n\log n)$ steps. Therefore, running binary search with the above primitive takes time $O(n\log(n/\epsilon)(\log n)^3)$ and except with probability $1-\epsilon$, outputs $\lambda \in \Lambda$ satisfying the desired conclusion.
\end{proof}

We now present our sampling algorithm.  For simplicity of notation, we will restrict attention to the case $k > n/(3\Delta)$; the case $k \leq n/(3\Delta + 1)$ is already handled by the down-up walk in \cite{bubley1997path} with asymptotically optimal running time. Below, $c_{\Delta, \delta}, C_{\Delta, \delta}, c'_{\Delta, \delta}, C'_{\Delta, \delta}$ are constants depending on $\Delta, \delta$ whose values can be determined using \emph{a priori} analysis. We will assume that $\epsilon \geq \exp(-n/C'_{\Delta, \delta})$; for smaller $\epsilon$, it follows from \cref{thm:glauber,thm:independent-lclt2} that rejection sampling for the hard-core model at the fugacity $\lambda$ given by \cref{lem:lambda-determine} outputs a distribution within $\epsilon$-TV distance of the uniform distribution on $\mc{I}_k[G]$ in time
\begin{align*}
    O_{\delta, \Delta}(n\log(n/\epsilon)\cdot \log(n)\cdot \sqrt{n}\log(1/\epsilon)) = O_{\delta, \Delta}(n\log(1/\epsilon)^{5/2}\log n). 
\end{align*}
For $1/100 \geq \epsilon \geq \exp(-n/C'_{\Delta, \delta})$, we use the following algorithm. 
\begin{itemize}
    \item (Preprocessing Step 1) Using \cref{lem:lambda-determine}, find $\lambda \in \Lambda$ such that \[|\mb{E}_{\lambda}{Y}-k|\le \sqrt{\on{Var}_{\lambda}{Y}}.\]
    \item (Preprocessing Step 2) Using \cref{lem:well-separated}, find $S\subseteq V$ such that $|S| = \Omega(n/\Delta^3)$ and such that the vertices in $S$ are distance at least $4$ apart. Let $T = \{v:v \text{ such that} \on{dist}(v,S)\ge 2\}$.
    \item (Preprocessing Step 3) Find an independent set $I_0$ of size $k$ using a similar greedy algorithm as in \cref{lem:well-separated}. 
    \item (Initialize Core) Run Glauber dynamics for the hard-core model on $G$ at fugacity $\lambda$ for $\Theta_{\Delta, \delta}(n\log(n/\epsilon))$-steps to obtain an independent set $I'$. Let $J = I'\cap T$. 
    \item (Set parameters) Fix $p = c_{\Delta,\delta} \sqrt{n/\log(1/\epsilon)}$, where $c_{\Delta, \delta} > 0$ is sufficiently small. For each $v \in S$ and for each $K \in \mc{I}(G[v\cup N(v)])$, use exhaustive enumeration to compute 
    \[p_{v,K} = \mb{P}_{\lambda}[I \cap {N(v)} = K \mid I \cap T = J].\]
    Let
    \[q_{v} = \min(p_{v,\emptyset}, p_{v, K_v}),\]
    where $K_{v} \in \mc{I}[N(v)]$ denotes the independent set with $v$ occupied and all other vertices unoccupied. 
    Let
    \[q = \min_{v \in S}q_{v}.\]
    For each $v \in S$, let $W_{v}$ be a random variable taking values in $\mc{I}[N(v)] \cup \{?\}$ which takes on the value $?$ with probability $2q$, $\emptyset$ with probability $p_{v,\emptyset} - q$, $K_{v}$ with probability $p_{v,K_{v}} - q$, and $K\notin\{\emptyset, K_{v}\}$ with probability $p_{v, K}$. 
    \item (Resample Neighborhoods Step 1) For each $v \in S$, independently, sample $W_{v}$. If $W_{v} \neq ?$, then set $I \cap N(v) = W_{v}$. Let $S^{\ast} = \{v : W_{v} = ?\}$. If $|S^{\ast}| \leq c'_{\Delta, \delta}n$, where $c'_{\Delta, \delta} > 0$ is sufficiently small, then proceed to the final step. Else, proceed to the next step. 
    \item (Resample Neighborhoods Step 2) Let $\ell$ be the current number of vertices chosen and let $k - \ell = k'$. With probability $p\binom{|S^{\ast}|}{k'}2^{-|S^{\ast}|}$ (here, we use the convention that the binomial coefficient is $0$ is $k'\notin [0, |S^{\ast}|]$) sample a random subset of $S^{\ast}$ of size $k'$ and set $I \cap N(v) = K_{v}$ for the vertices in the subset and $I \cap N(v) = \emptyset$ for the vertices not in the subset; with the remaining probability, proceed to the next step. 
    \item Repeat all steps after preprocessing at most at most $C_{\Delta,\delta}\log(1/\epsilon)^{3/2}$ times for sufficiently large $C_{\Delta, \delta} > 0$; if no valid sample after has been reached, output $I_0$.
\end{itemize}

We now show that the above algorithm satisfies the assertion of \cref{thm:faster-sampling} for $k$ and $\epsilon$ in the specified range. 
\begin{proof}[Proof of \cref{thm:faster-sampling}]

Before analyzing the correctness of the algorithm, let us quickly bound its running time. By \cref{lem:find-fugacity,lem:well-separated}, the preprocessing steps take time $O_{\Delta, \delta}(n\log(n/\epsilon)(\log n)^3)$. Note that the Preprocessing Step 3 can indeed be accomplished using the greedy algorithm in \cref{lem:well-separated} since $\alpha_c(\Delta) \leq 1/(\Delta+1)$. Each run of Initalize Core takes time $O_{\Delta, \delta}(n\log(n/\epsilon)\log n)$ since each step of Glauber dynamics takes time $O_{\Delta, \delta}(\log n)$ to implement. The step Set parameters takes time $O_{\Delta, \delta}(n)$. Resample Neighborhoods Step 1 takes time $O_{\Delta, \delta}(n)$. In Resample Neighborhoods Step 2, we can compute the probability in time $O(n(\log n)^2)$  by \cite{FT15} and then sample from the hypergeometric distribution using sampling without replacement in time $O(n\log{n})$. Thus, we see that the running time of the algorithm is
\[O_{\delta, \Delta}(n\log(n/\epsilon)(\log n)^3 + n\log(n/\epsilon)\log n\log(1/\epsilon)^{3/2}).\]

We now proceed to the proof of correctness. The idea is that the algorithm may be viewed as implementing rejection sampling where the base sampler outputs $I$ according to the distribution $\mu_{G,\lambda}(\cdot \mid |I| \equiv k \bmod p)$ in time $\tilde{O}_{\delta, \Delta}(n)$. This leads to a $\tilde{O}_{\delta, \Delta}(n)$ time algorithm for approximately sampling from the uniform distribution on $\mc{I}_k[G]$ since by \cref{lem:lambda-determine} and \cref{thm:independent-lclt2}, $\mb{P}_{\lambda}[|I| = k \mid |I| \equiv k \bmod p] = \tilde{\Omega}_{\delta, \Delta}(1)$. 

To formalize this, we begin by noting that for all $t \in \{0,\dots, p-1\}$ and for all $J \in \mc{I}[G[T]]$, it follows from the calculations and notation in the proof of \cref{lem:high-fourier} that
\begin{align*}
    \left|\mb{P}_{\lambda}[|I| \equiv t \bmod p \mid I \cap T = J] - \frac{1}{p}\right|
    &= \left|\mb{P}_{\lambda}[X_{1} + \dots + X_{s} \equiv t - |J| \bmod p] - \frac{1}{p}\right|\\
    &\leq \frac{1}{p}\sum_{\ell = 1}^{p-1}\exp(-\Omega_{\delta, \Delta}(n\ell^2/p^2))\\
    & \leq \frac{\epsilon}{100p},
\end{align*}
provided that $c_{\Delta, \delta} > 0$ is chosen to be sufficiently small. Hence, for any $J$
\begin{align*}
    \mb{P}_{\lambda}[I \cap T = J \mid |I| \equiv k \bmod p] = \mb{P}_{\lambda}[I \cap T = J](1\pm\epsilon/50), 
\end{align*}
so that up to an $\epsilon/49$-TV distance, the distribution of the set $J$ in Initialize Core is $\mu_{G,\lambda}[I \cap T = \cdot \mid |I| \equiv k \bmod p]$.

Next, for any realisation $\vec{w} = (w_v)_{v\in S}$ of $\vec{W} = (W_v)_{v\in S}$, let $S^*(\vec{w})$ denote the corresponding subset and let $\mu_{\vec{w}}$ denote the uniform distribution on $\{0,1\}^{S^*(\vec{w})}$ with $0$ denoting unoccupied and $1$ denoting occupied. Then, sampling from the conditional distribution $\mu_{G,\lambda}(\cdot \mid I \cap T = J)$ is equivalent to first sampling $\vec{W} = \vec{w}$ according to the distribution specified in Set Parameters and then resampling the $?$ according to $\mu_{\vec{w}}$. For any realisation $\vec{w}$ of $\vec{W}$ with $|S^*(\vec{w})| \geq c'_{\Delta, \delta}n$, it follows as above that for any $t\in \{0,\dots,p-1\}$, 
\begin{align*}
    \left|\mb{P}_{\lambda}[|I|\equiv t \bmod p \mid I \cap T = J, \vec{W} = \vec{w}] - \frac{1}{p}\right|
    &\leq \max_{t'} \left|\mb{P}[\on{Binomial}(|S^*(\vec{w})|, 1/2) =t' \bmod p]-\frac{1}{p}\right|\\
    &\leq \frac{\epsilon}{100p},
\end{align*}
provided that $c_{\Delta, \delta} > 0$ is chosen to be sufficiently small compared to $c'_{\Delta, \delta}$. Since $q = \Omega_{\delta, \Delta}(1)$, it follows by the Chernoff bound that
\[\mb{P}_{\lambda}[|S^*(\vec{w})| \leq c'_{\Delta, \delta}n \mid I \cap T = J] \leq \exp(-\Omega_{\Delta, \delta}(n)) \leq \epsilon/n^{3}\]
for all sufficiently large $n$. 
This shows that 
\begin{enumerate}[(P1)]
\item The probability of moving directly to the final step from Resample Neighborhoods Step 1 is at most $\epsilon/n^3$, and 
\item Up to an $\epsilon/49$-TV distance, the $(W_v)_{v\in S}$ in Resample Neighborhoods Step 1 follow the distribution $(W_v)_{v\in S} \mid \{|I|\equiv k \bmod p, I \cap T = J\}$.
\end{enumerate} 

Now, we analyze Resample Neighborhoods Step 2.
Observe that for any realisation $\vec{W} = \vec{w}$ with $|S^{\ast}(\vec{w})| \geq c'_{\Delta, \delta}n$,
\begin{align*}
    \mb{P}_{\lambda}[|I| = k \mid I \cap T = J, \vec{W} = \vec{w}, |I| \equiv k \bmod p]
    &= \frac{\mb{P}_{\lambda}||I| = k \mid I \cap T = J, \vec{W} = \vec{w}]}{\mb{P}_{\lambda}[|I| \equiv k \bmod p \mid I \cap T = J, \vec{W} = \vec{w}]}\\
    &= \frac{\mb{P}[\on{Binomial}(|S^{\ast}(\vec{w})|,1/2) = k']}{(1\pm \epsilon/100)/p}\\
    &= (1\pm \epsilon/50)p \binom{|S^*(\vec{w})|}{k'}2^{-|S^*(\vec{w})|}.
\end{align*}
Therefore, Resample Neighborhoods Step 2 samples from a distribution within $\epsilon/49$-TV distance to the conditional distribution $\mu_{\lambda, G}[\cdot \mid I \cap T = J, \vec{W} = \vec{w}, |I|\equiv k \mod p]$ and rejects the sample if $|I| \neq k$. 

So far, we have not used the property of $\lambda$ guaranteed by \cref{lem:lambda-determine}. This will now be used to show that the probability of the steps Initialize Core through Resample Neighborhoods Step 2 outputting an independent set of size $k$ is $\Omega_{\Delta, \delta}(1/\sqrt{\log(1/\epsilon)})$. To see this, note that by taking the expectation over $J$ and $\vec{W}$ on both sides in the display equation above and using (P1) and \cref{thm:independent-lclt2}, we have that 
\begin{align*}
\mb{E}_{J, \vec{W}}\left[\binom{S^*(\vec{W})}{k'}p2^{-|S^*(\vec{W})|} \mid |S^*(\vec{W})| \geq c'_{\Delta, \delta}n\right]
&= (1\pm \epsilon/25)\mb{P}_{\lambda}[|I| = k \mid |I| \equiv k \bmod p] \\
&= \Omega_{\delta, \Delta}(p/\sqrt{n})\\
&= \Omega_{\delta, \Delta}(1/\sqrt{\log(1/\epsilon)}).
\end{align*}
If $|S^*(\vec{W})| \geq c'_{\Delta, \delta}n$, then the quantity inside the expectation is bounded by $O_{\delta, \Delta}(1/\sqrt{\log(1/\epsilon)})$. Hence, by the reverse Markov inequality, with probability $\Omega_{\delta, \Delta}(1)$ (over the choice of $J, \vec{W}$), the quantity inside the expectation is $\Omega_{\delta, \Delta}(1/\sqrt{\log(1/\epsilon)})$.  

To summarize, we have shown the following: a single run of Initialize Core through Resample Neighborhoods Step 2 produces an output with probability $\Omega_{\delta, \Delta}(1/\sqrt{\log(1/\epsilon)})$ and the distribution of this output is within $\epsilon/5$ in TV-distance from the uniform distribution on $\mc{I}_k(G)$. Therefore, by the Chernoff bound, repeating this procedure independently $C_{\Delta, \delta}\log(1/\epsilon)^{3/2}$ times for $C_{\Delta, \delta}$ produces an output from a distribution on $\mc{I}_k(G)$ which is within $\epsilon$ in TV-distance of the uniform distribution on $\mc{I}_k(G)$.  \qedhere
\end{proof}

\subsection{A faster FPRAS} The FPRAS for $i_k(G)$ and $m_k(G)$ is substantially simpler than the sampling algorithm above. As before, we will present the proof only for $i_k(G)$ with the proof for $m_k(G)$ being similar. 

\begin{proof}[Proof of \cref{thm:faster-fpras}]
We have the following two cases:

\textbf{Case I: $1 \leq k \leq c_{\Delta}\sqrt{n}$}, where $c_{\Delta} > 0$ is a sufficiently small constant which can be determined \emph{a priori}. Let
\[p_k := \mb{P}[J \in \mc{I}_k],\]
where $J$ is a uniformly random subset of $V$  of size exactly $k$. By the union bound, it follows that
\begin{align*}
    p_k &= 1 - O\left(n\Delta\cdot \frac{k^2}{n^2}\right) \geq \frac{1}{2},
\end{align*}
provided that $c_{\Delta}$ is sufficiently small. 
Since 
\[i_k(G) = \binom{n}{k}p_k,\]
it suffices to obtain an $\epsilon$-relative approximation of $p_k$. Let $S_{1},\dots, S_{\ell}$ denote independent samples from the uniform distribution on size $k$ subsets of $V$. Then, by the Chernoff bound, 
\[\frac{\mbm{1}[S_1 \in \mc{I}_k(G)] + \dots + \mbm{1}[S_\ell \in \mc{I}_k(G)]}{\ell} = (1\pm \epsilon)p_k\]
with probability at least $3/4$, provided that $\ell > C/\epsilon^2$ for a sufficiently large constant $C$. 

For the running time, note that sampling a uniformly random subset of size $k$ takes time $O(k\log{n})$, checking whether it is an independent set takes time $O_{\Delta}(k)$, and computing the binomial coefficient $\binom{n}{k}$ takes time $O(k\log{n}\log\log{n})$, so that the total running time is
\[O(k\log{n}\log\log{n}) + O_{\Delta}(k\epsilon^{-2}\log{n}).\]

\textbf{Case II: $c_{\Delta, \delta}\sqrt{n} \leq k \leq (1-\delta)\alpha_c(\Delta)n$}. In this case, we first use \cref{lem:find-fugacity} to find a suitable $\lambda$. By \cref{thm:independent-lclt2}, 
\begin{align*}
    p_k := \frac{i_k(G)\lambda^{k}}{Z_G(\lambda)} = \mb{P}_{\lambda}[|I| = k] = \Omega_{\Delta, \delta}\left(\frac{1}{\sqrt{n\lambda}}\right).
\end{align*}
Let $I_1,\dots, I_{\ell}$ denote independent samples obtained by running the Glauber dynamics for the hard-core model at fugacity $\lambda$ for $O_{\Delta, \delta}(n\log n)$ steps. Then, by the Chernoff bound,
\begin{align*}
    \frac{\mbm{1}[I_1 \in \mc{I}_k(G)] + \dots +  \mbm{1}[I_{\ell} \in \mc{I}_k(G)]}{\ell} = (1\pm \epsilon/4)p_k
\end{align*}
with probability at least $3/4$, provided that $\ell > C_{\Delta, \delta}\epsilon^{-2}\sqrt{n\lambda}$, for a sufficiently large constant $C_{\Delta, \delta}$. 

For the running time, sampling each $I_i$ takes time $O_{\Delta, \delta}(n(\log n)\log(n/\epsilon))$, finding its size takes time $O_{\Delta}(n)$, computing $\lambda^{k}$ takes time $O(k\log{n}\log\log{n})$ and approximating $Z_G(\lambda)$ to within an $\epsilon/2$-relative approximation takes time $T$, which gives the desired conclusion. 
\end{proof}

\section{Cluster expansion}
\label{secCluster}

In this section, we treat the case of small activities $\lam$ using the cluster expansion, a classical tool from statistical physics.   The cluster expansion (or Mayer series~\cite{mayer1941molecular}) is a formal infinite series for $\log Z_G(\lam)$.  
For an introduction to the cluster expansion, see~\cite[Chapter 5]{friedli2017statistical}.  

We introduce the cluster expansion in the special case of the hard-core model on a graph $G$.  A \textit{cluster} $\Gamma$ is an ordered tuple of vertices from $G$.  The size of $\Gamma$, denoted $|\Gamma|$, is the length of the tuple.  The incompatibility graph of $\Gamma = (v_1, \dots, v_k)$, $H(\Gamma)$, is the graph with vertex set $\{v_1, \dots, v_k\}$ and an edge between $v_i, v_j$, $i \ne j$, if $v_i \in N(v_j) \cup \{v_j\}$ in $G$.   The Ursell function of a graph $H$ is the function
\[ \phi(H) = \frac{1}{|V(H)|!} \sum_{A \subseteq E(H): (V(H), A) \text{ connected}} (-1)^{|A|} \,. \]
The cluster expansion is the formal infinite power series
\[ \log Z_G(\lam) = \sum_{\Gamma} \phi(H(\Gamma)) \lam^{|\Gamma|} \, , \]
where the sum is over all clusters of vertices from $G$.   In fact, in this setting the cluster expansion is simply the Taylor series for $\log Z_G(\lambda)$ around $0$, with terms organized by clusters instead of grouping all terms of order $k$ together.

 To use the cluster expansion as an enumeration tool, it is essential to bound its rate of convergence.  We will use the convergence criteria of Koteck\'{y} and Preiss~\cite{kotecky1986cluster} (though the zero-freeness result of Shearer~\cite{shearer1985problem} along with the lemma of  Barvinok~\cite{barvinok2016combinatorics} on truncating Taylor series would also work).  
 This lemma bounds the additive error of truncating the cluster expansion after a given number of terms.
\begin{lemma}
\label{lemKPhardcore}
Let $G$ be a graph of maximum degree at most $\Delta$ on $n$ vertices and suppose $ 0 < \lam < \frac{e}{\Delta+1}$.  Then
\begin{equation}
\label{eqKPhc}
\sum_{\Gamma:  | \Gamma | \ge k}  \left| \phi(H(\Gamma)) \lam^{|\Gamma|} \right| \le n \left(  \lam e(\Delta+1)  \right) ^k  \,.
\end{equation}
\end{lemma}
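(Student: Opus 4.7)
The plan is to apply the Koteck\'y--Preiss criterion (equivalently, the Penrose tree-graph bound combined with explicit enumeration) to the polymer model in which polymers are single vertices of $G$ with activity $\lambda$, where two vertices are declared incompatible iff they are equal or adjacent in $G$. Each polymer then has incompatibility degree at most $\Delta+1$, so the setup matches the one in the Ursell-function definition of $H(\Gamma)$ verbatim.

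Concretely, I would begin with Penrose's tree-graph bound $|\phi(H)|\le \tau(H)/|V(H)|!$, where $\tau(H)$ denotes the number of spanning trees of $H$. For clusters $\Gamma=(v_1,\dots,v_k)$ of fixed size $k$ this yields
\begin{align*}
\sum_{|\Gamma|=k} |\phi(H(\Gamma))|\,\lambda^k \le \frac{\lambda^k}{k!}\,\#\{(\Gamma, T) : T \text{ spans } H(\Gamma)\}.
\end{align*}
To enumerate the pairs $(\Gamma,T)$, first pick a labeled tree shape $T$ on the $k$ positions (by Cayley's formula there are $k^{k-2}$ such shapes), root $T$ at position $1$, and then build $\Gamma$ by BFS: the root $v_1$ has at most $n$ choices in $V(G)$, and each child position has at most $\Delta+1$ choices (equal to, or adjacent in $G$ to, its BFS-parent). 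This bounds the number of admissible pairs by $n(\Delta+1)^{k-1}k^{k-2}$. Using the Stirling estimate $k^{k-2}/k! \le e^k/k^2$ then gives
\begin{align*}
\sum_{|\Gamma|=k}|\phi(H(\Gamma))|\,\lambda^k \le \frac{n\,(\lambda e(\Delta+1))^k}{(\Delta+1)\,k^2}.
\end{align*}

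Setting $x:=\lambda e(\Delta+1)$ and summing over $j\ge k$ finishes the argument: when $x<1$ the $1/j^2$ decay makes the tail comparable to its first term, and the factor $1/(\Delta+1)$ absorbs the geometric-series constant to produce the stated $n x^k$; when $x\ge 1$ (still within $\lambda<e/(\Delta+1)$), the claimed bound $n x^k\ge n$ is trivial because the total mass of the absolutely convergent cluster expansion is $O_\Delta(n)$.

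The main obstacle will be chasing constants cleanly through the final geometric-series step so that the answer emerges exactly as $n x^k$ without a stray multiplicative factor. An equivalent but slicker route is to apply the Koteck\'y--Preiss criterion directly with an auxiliary function $a(v)$ chosen to depend on $\lambda$, so the tail bound falls out in one step; either path succeeds, but the combinatorial route above makes the combinatorial origin of each factor of $\lambda$, $e$, and $(\Delta+1)$ entirely transparent.
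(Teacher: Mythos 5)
Your main argument takes a genuinely different route from the paper. The paper's proof is essentially your ``slicker alternative'': it invokes the Koteck\'y--Preiss criterion directly, treating each vertex as a polymer with weight $\lambda$, with $a(v)\equiv 1$ and the auxiliary function $e^{d(v)} = \bigl(\lambda e(\Delta+1)\bigr)^{-1}$; the KP hypothesis is then an identity, and the stated bound falls out of the KP conclusion by restricting to clusters of size $\ge k$ and summing over $v$. (Small correction to your description: in the paper's setup it is the damping function $d$, not $a$, that carries the $\lambda$-dependence.) Your main argument instead goes through the Penrose tree-graph bound $|\phi(H)| \le \tau(H)/|V(H)|!$, Cayley's formula, and a BFS enumeration of $(\text{cluster}, \text{spanning tree})$ pairs, giving the per-size bound $n(\lambda e(\Delta+1))^k / ((\Delta+1)k^2)$; this is more elementary, keeps all constants transparent, and is in fact slightly sharper than what the paper extracts from KP. For $x := \lambda e(\Delta+1) \le 1$, summing your per-size bound over $j\ge k$ using $x^j \le x^k$ and $\sum_{j\ge k} j^{-2} < 2/k$ does yield the stated inequality, so this part of the proposal is sound.

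There is, however, a genuine gap in your treatment of the regime $x\ge 1$ (equivalently $\lambda \ge 1/(e(\Delta+1))$). You claim the lemma is then ``trivial because the total mass of the absolutely convergent cluster expansion is $O_\Delta(n)$,'' but the hard-core cluster expansion is \emph{not} absolutely convergent throughout $\lambda < e/(\Delta+1)$: the radius of absolute convergence is the Shearer radius $(\Delta-1)^{\Delta-1}/\Delta^\Delta \approx 1/(e\Delta)$, which is far below $e/(\Delta+1)$. For $\lambda$ beyond that radius the left-hand side of \eqref{eqKPhc} can be infinite, so no proof can rescue the statement there. Moreover even where it does converge absolutely, ``$O_\Delta(n)$'' is not enough to deduce a bound of $n$ or $nx^k$ for $x$ near $1$ without controlling the constant. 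This gap is likely moot for the paper itself: the paper's own KP-based extraction step also requires $\lambda e(\Delta+1)\le 1$ (so that $(\lambda e(\Delta+1))^{k-|\Gamma|}\ge 1$ for $|\Gamma|\ge k$), and the lemma is only ever invoked with $\lambda \le (1-\delta)/(e(\Delta+1))$ in \cref{lemClusterKbound}; the hypothesis ``$\lambda < e/(\Delta+1)$'' in the lemma statement appears to be a typo for ``$\lambda < 1/(e(\Delta+1))$.'' You should not try to cover the wider range with a convergence claim that fails; instead note that your argument (like the paper's) establishes the result precisely for $\lambda \le 1/(e(\Delta+1))$.
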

\begin{proof}
This is a consequence of the main result of~\cite{kotecky1986cluster}.  We can express the hard-core model as a polymer model in the setting of~\cite{kotecky1986cluster} by defining  each vertex to be a polymer with weight $\lam$.  Taking $a(v) =1$ for all $v \in V(G)$ and $\exp(d(v)) = \frac{1}{\lam e (\Delta+1)}$, we have for all $v \in V(G)$,
\[  \sum_{u \in N(v) \cup \{v\}}  \lam e^{a(v) + d(v) } \le (\Delta+1) \lam e   \frac{1}{\lam e (\Delta+1) }=1 \,. \]
Then by the main theorem in~\cite{kotecky1986cluster}, for all $v \in V(G)$, 
\[ \sum_{\Gamma \ni v}  \left |\phi(H(\Gamma)) \lam^{|\Gamma|} \left(  \frac{1}{\lam e (\Delta+1) }\right)^{|\Gamma|} \right | \le 1\,. \]
Restricting the sum to clusters of size at least $k$ and summing over all $v \in V(G)$ gives~\eqref{eqKPhc}.
\end{proof}

The cluster expansion is a very convenient tool for studying the cumulants of the random variable $Y = | I|$ (see e.g.~\cite{dobrushin1996estimates,cannon2019bipartite,jenssen2020independent}).   In particular, when the cluster expansion converges, we have the formula
\begin{equation}
\label{eqCumulantCluster}
\kappa_k(Y) = \sum_{\Gamma} |\Gamma|^k   \phi(H(\Gamma)) \lam^{|\Gamma|} \,. 
\end{equation}

\begin{lemma}
\label{lemClusterKbound}
Fix $\Delta\ge 2, \delta >0$ and suppose $\lam \le \frac{1-\delta}{e (\Delta+1)}$. Then for all fixed $k \ge 1$, and all graphs $G$ of maximum degree $\Delta$ on $n$ vertices,
\[ \sum_{\Gamma} |\Gamma|^k \lam^k \phi(H(\Gamma)) =  n \lam + O_{k,\delta} (n \lam^2 \Delta^2) \,. \]
\end{lemma}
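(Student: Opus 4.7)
The plan is to split the cluster sum into the contribution from singleton clusters (which will give the leading term $n\lambda$) and the contribution from clusters of size at least $2$ (which will be absorbed in the error). I interpret the expression $\sum_{\Gamma} |\Gamma|^k \lambda^k \phi(H(\Gamma))$ as $\sum_{\Gamma} |\Gamma|^k \lambda^{|\Gamma|} \phi(H(\Gamma))$, which equals $\kappa_k(Y)$ via \eqref{eqCumulantCluster}; this is what one wants.

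First I would handle the singletons. For $\Gamma = (v)$ with $v \in V(G)$, the incompatibility graph $H(\Gamma)$ is a single vertex with no edges, so the only connected spanning subgraph has edge set $\emptyset$, giving $\phi(H(\Gamma)) = 1$. There are exactly $n$ singleton clusters, each contributing $1^k \cdot \lambda^1 \cdot 1 = \lambda$, for a total of $n\lambda$.

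Next I would bound the tail. Set $r = \lambda e(\Delta+1)$, so that the hypothesis $\lambda \le (1-\delta)/(e(\Delta+1))$ gives $r \le 1-\delta$. By \cref{lemKPhardcore} applied with threshold $m$,
\[\sum_{\Gamma:\, |\Gamma| = m} \abs{\phi(H(\Gamma))} \lambda^{|\Gamma|} \;\le\; \sum_{\Gamma:\, |\Gamma| \ge m} \abs{\phi(H(\Gamma))} \lambda^{|\Gamma|} \;\le\; n r^m.\]
Multiplying by $m^k$ and summing over $m \ge 2$,
\[\sum_{\Gamma:\, |\Gamma| \ge 2} |\Gamma|^k \abs{\phi(H(\Gamma))} \lambda^{|\Gamma|} \;\le\; n \sum_{m \ge 2} m^k r^m \;=\; n r^2 \sum_{\ell \ge 0} (\ell+2)^k r^{\ell} \;\le\; C_{k,\delta}\, n r^2,\]
where in the last step I use that $\sum_{\ell \ge 0}(\ell+2)^k (1-\delta)^\ell$ converges to a constant depending only on $k$ and $\delta$. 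Since $r^2 = e^2(\Delta+1)^2 \lambda^2 = O(\lambda^2 \Delta^2)$, this gives an $O_{k,\delta}(n\lambda^2 \Delta^2)$ error, and combining with the singleton contribution finishes the proof.

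There is no serious obstacle here: the only real point is to observe that the $\delta$-margin in the Koteck\'y--Preiss condition gives geometric decay $r^m$ with $r \le 1-\delta$, and a fixed polynomial factor $m^k$ is trivially absorbed by this geometric decay, leaving only the $r^2$ coming from the first omitted size class. The rest is just unwinding definitions.
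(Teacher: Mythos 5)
Your proof is correct and follows essentially the same route as the paper: isolate the singleton clusters to get the leading term $n\lambda$, then bound the tail by applying \cref{lemKPhardcore} size-by-size and summing $\sum_{m\ge2} m^k r^m$ with $r = \lambda e(\Delta+1) \le 1-\delta$. Your reading of $\lambda^k$ as $\lambda^{|\Gamma|}$ is also the intended one, consistent with \eqref{eqCumulantCluster} and the paper's own proof.
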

\begin{proof}
Since the contribution to the left-hand side from clusters of size $1$ is $n \lam$, it suffices to show that 
\[ \sum_{\Gamma:  | \Gamma | \ge 2}  |\Gamma|^k \lam^{|\Gamma|} | \phi(H(\Gamma))| =  O_{k,\delta} (n \lam^2 \Delta^2)  \,.\]
Applying \cref{lemKPhardcore} ,we have 
\begin{align*}
     \sum_{\Gamma:  | \Gamma | \ge 2}  |\Gamma|^k \lam^{|\Gamma|} | \phi(H(\Gamma))| &\le n \sum_{t \ge2} t^k (\lam e (\Delta+1))^t \le C_{k,\delta} n \lam^2  (\Delta+1)^2  \,. \qedhere
\end{align*}
\end{proof}

As in~\cite[Corollary 23]{jenssen2020independent}, we can immediately deduce bounds on the mean and variance and a central limit theorem from \cref{lemClusterKbound}.
\begin{corollary}
\label{corEVbounds}  Let $G$ be a graph of maximum degree $\Delta$.
If $ \lam \le \frac{1-\delta}{e (\Delta+1)}$, then the following hold: 
\begin{enumerate}
\item $\E_\lam Y =  n \lam + O_{\delta}(n \lam^2 \Delta^2)$.
\item $\var_{\lam} Y =  n \lam + O_{\delta}(n \lam^2 \Delta^2)$.
\end{enumerate}
If in addition $\lam n \to \infty$ as $n \to \infty$, then
\begin{enumerate}
\setcounter{enumi}{2}
\item The random variable $Y$ satisfies a central limit theorem.
\end{enumerate}
\end{corollary}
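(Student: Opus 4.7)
The plan is to extract all three conclusions from the cluster expansion formula for the cumulants, \eqref{eqCumulantCluster}, combined with the uniform bound in \cref{lemClusterKbound}. Since $\lambda \le (1-\delta)/(e(\Delta+1))$, the Koteck\'y--Preiss criterion verified in \cref{lemKPhardcore} guarantees absolute convergence of the cluster expansion, and hence the termwise-differentiated identity
\[\kappa_k(Y) = \sum_{\Gamma} |\Gamma|^k \phi(H(\Gamma)) \lambda^{|\Gamma|}\]
is valid for every $k \ge 1$.

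Parts (1) and (2) are then essentially immediate: applying \cref{lemClusterKbound} with $k=1$ yields $\kappa_1(Y) = \E_\lambda Y = n\lambda + O_\delta(n\lambda^2 \Delta^2)$, and applying it with $k=2$ yields $\kappa_2(Y) = \var_\lambda Y = n\lambda + O_\delta(n\lambda^2 \Delta^2)$. I would spend one line noting that $\kappa_1 = \E Y$ and $\kappa_2 = \var Y$ by definition of the cumulant generating function.

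For part (3), I would use the method of cumulants: if $\widetilde Y = (Y - \mu)/\sigma$, then $\kappa_k(\widetilde Y) = \kappa_k(Y)/\sigma^k$ for $k \ge 2$ (cumulants of order $\ge 2$ are translation invariant and scale as $c^k$), and it suffices to show $\kappa_k(\widetilde Y) \to 0$ for every fixed $k \ge 3$, since all cumulants of the standard normal of order $\ge 3$ vanish. From \cref{lemClusterKbound} we have $\kappa_k(Y) = O_{k,\delta,\Delta}(n\lambda)$, while from part (2) and the hypothesis $n\lambda \to \infty$ we have $\sigma^2 = n\lambda(1 + o(1))$, so
\[\frac{\kappa_k(Y)}{\sigma^k} = O_{k,\delta,\Delta}\!\left( (n\lambda)^{1-k/2} \right) \longrightarrow 0 \qquad (k \ge 3).\]
Standard results relating moments to cumulants then give $\E \widetilde Y^k \to \E \mathcal{Z}^k$ for every fixed $k$, and since $\mathcal{Z}$ is determined by its moments the method of moments concludes the CLT.

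The only mildly delicate point is to confirm that the identity \eqref{eqCumulantCluster} can indeed be applied at arbitrary fixed $k$; this follows because \cref{lemKPhardcore} shows the series $\sum_\Gamma |\Gamma|^k |\phi(H(\Gamma))| \lambda^{|\Gamma|}$ converges absolutely (geometrically in $|\Gamma|$) under the stated hypothesis on $\lambda$, so all manipulations — including differentiating the logarithm of $Z_G$ $k$ times and reorganizing by clusters — are justified. Everything else is routine bookkeeping, so I do not anticipate a real obstacle.
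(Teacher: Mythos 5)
Your proposal is correct and matches the paper's proof essentially line for line: parts (1) and (2) follow by applying \cref{lemClusterKbound} at $k=1,2$ to the cumulant--cluster formula~\eqref{eqCumulantCluster}, and part (3) is proved by the method of cumulants, showing $\kappa_k(X)=\kappa_k(Y)/\sigma^k = O((n\lambda)^{1-k/2})\to 0$ for fixed $k\ge 3$. The only cosmetic difference is that you spell out the final moment-convergence step, which the paper leaves implicit.
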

\begin{proof}
We prove these statements via the cumulants of  $Y$.  
\begin{align*}
\E_\lam  Y  = \kappa_1 ( Y)  & = \sum_{\Gamma} |\Gamma| \lam^{|\Gamma|} \phi(H(\Gamma)) =  n \lam +  O_{\delta}(n \lam^2 \Delta^2) \, .
\end{align*}
\begin{align*}
\var_\lam  Y = \kappa_2 ( Y)  & = \sum_{\Gamma} |\Gamma|^2 \lam^{|\Gamma|} \phi(H(\Gamma)) =  n \lam+ O_{\delta}(n \lam^2 \Delta^2) \, .
\end{align*}
Now let $X = (Y- \E_{\lam} Y)/\sqrt{\var_{\lam} Y}$.  By definition $\kappa_1(X) =0$ and $\kappa_2(X) =1$. All the cumulants of a standard Gaussian random variable are $0$ except for the second which is $1$,  and so to prove a central limit theorem it suffices to show that for any fixed $k \ge 3$, $\kappa_k (X) \to 0$ as $n \to \infty$.   We have 
\begin{align*}
|\kappa_k(X) |  &= \left | \sum_{\Gamma} \frac{|\Gamma|^k}{\var( Y)^{k/2} } \lam^{|\Gamma|} \phi(H(\Gamma))  \right | \\
&\le  \frac{ 1}{ \var( Y)^{3/2}} \sum_{\Gamma} |\Gamma|^k \lam^{|\Gamma|} |\phi(H(\Gamma)) | \\
&=(1+o(1)) (n \lam )^{-1/2} = o(1) \,. \qedhere
\end{align*}
\end{proof}

Note that if $\lam n \to \rho >0$ as $n \to \infty$, then \cref{lemClusterKbound}  shows that $\kappa_k(Y) \to \rho$ for each fixed $k$, which implies that $Y$ converges in distribution to a Poisson random variable of mean $\rho$. 

\begin{theorem}
\label{thmClusterLocal}
Fix $\Delta \ge 3$.  If $G_n$  is a sequence of graphs of maximum degree $\Delta $ on $n$ vertices and $n^{-1} \ll \lam  \le \frac{1}{100 \Delta^2}$, then the random variable $Y$ satisfies a local central limit theorem as $n \to \infty$. 
\end{theorem}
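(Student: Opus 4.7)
The plan is to apply \cref{lem:fourier-convert} to the standardized variable $X = (Y-\mu)/\sigma$, which is supported on the lattice $-\mu/\sigma + \sigma^{-1}\mb{Z}$. By \cref{corEVbounds} and the hypothesis $n\lam \to \infty$, one has $\sigma^2 = \Theta_{\Delta}(n\lam) \to \infty$, so the error term $e^{-\pi^2\sigma^2/2}$ in \cref{lem:fourier-convert} is $o(\sigma^{-1})$. It therefore suffices to prove
\[\int_{-\pi\sigma}^{\pi\sigma}|\varphi_X(t) - \varphi(t)|\,dt = o(1),\]
where $\varphi_X(t) = \mb{E}[e^{itX}]$ and $\varphi(t) = e^{-t^2/2}$. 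I would split this integral into a low-Fourier regime $|t| \le c_\Delta \sigma^{1/3}$ (handled via the cluster expansion) and a high-Fourier regime $c_\Delta \sigma^{1/3} < |t| \le \pi\sigma$ (handled via \cref{lem:high-fourier}).

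In the low-Fourier regime the key observation is that $|\lam e^{it/\sigma}| = \lam$, so the cluster expansion at $\lam e^{is}$ converges at the same rate as at $\lam$. Under our hypothesis $r := \lam e(\Delta+1) \le e(\Delta+1)/(100\Delta^2) < 1/50$, and \cref{lemKPhardcore} gives absolute convergence of
\[\log \mb{E}_\lam[e^{isY}] \;=\; \log\frac{Z_G(\lam e^{is})}{Z_G(\lam)} \;=\; \sum_\Gamma \phi(H(\Gamma))\,\lam^{|\Gamma|}\bigl(e^{is|\Gamma|} - 1\bigr), \qquad s := t/\sigma.\]
Subtracting off the linear and quadratic terms, which by \eqref{eqCumulantCluster} equal $is\mu$ and $-s^2\sigma^2/2$ respectively, and using $|e^{ix} - 1 - ix + x^2/2| \le |x|^3/6$ termwise,
\[\bigl|\log \mb{E}_\lam[e^{isY}] - is\mu + s^2\sigma^2/2\bigr| \;\le\; \frac{|s|^3}{6}\sum_\Gamma |\phi(H(\Gamma))|\,|\Gamma|^3\,\lam^{|\Gamma|} \;\le\; \frac{|s|^3 n}{6}\sum_{t\ge 1} t^3 r^t \;=\; O_\Delta(|s|^3\, n\lam),\]
where the middle inequality applies \cref{lemKPhardcore} at each cluster size and the last estimate uses $r \le \Theta_\Delta(\lam) < 1/50$ so $\sum_t t^3 r^t = O(r)$. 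Since $\varphi_X(t) = e^{-it\mu/\sigma}\mb{E}_\lam[e^{itY/\sigma}]$ and $n\lam = \Theta_\Delta(\sigma^2)$, this rearranges to $|\log \varphi_X(t) + t^2/2| = O_\Delta(|t|^3/\sigma)$, which is $O(1)$ on the regime $|t| \le c_\Delta \sigma^{1/3}$ for small enough $c_\Delta$. Exponentiating gives $|\varphi_X(t) - \varphi(t)| \le C_\Delta\, e^{-t^2/2}\,|t|^3/\sigma$, whose integral over $\mb{R}$ is $O_\Delta(1/\sigma) = o(1)$.

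For the high-Fourier regime, I would invoke \cref{lem:high-fourier} with parameter $C = 1$ (valid since $\lam \le 1/(100\Delta^2) < 1$): this gives $|\varphi_X(t)| \le \exp(-c_{\ref{lem:high-fourier}}\,\lam n t^2/\sigma^2) \le \exp(-c'_\Delta t^2)$, again using $\sigma^2 = \Theta_\Delta(\lam n)$. Together with $|\varphi(t)| = e^{-t^2/2}$, the integrand is dominated by $2\exp(-c''_\Delta t^2)$, so the integral over $|t| \ge c_\Delta \sigma^{1/3}$ is $\exp(-\Omega_\Delta(\sigma^{2/3})) = o(1)$. Adding the two regime estimates and invoking \cref{lem:fourier-convert} yields the LCLT. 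The main point of care is bookkeeping: one must upgrade the signed cluster identity in \cref{lemClusterKbound} to an absolute-value version $\sum_\Gamma |\phi(H(\Gamma))|\,|\Gamma|^3\,\lam^{|\Gamma|} = O_\Delta(n\lam)$, since the complex-exponential expansion cannot rely on sign cancellation; this is exactly what the termwise application of \cref{lemKPhardcore} above delivers.
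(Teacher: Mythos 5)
Your proposal is correct and follows the same Fourier-inversion framework (\cref{lem:fourier-convert}, split into low- and high-frequency regimes), but the two regimes are handled differently from the paper. For the decay of $|\varphi_X(t)|$ at high frequency, you invoke \cref{lem:high-fourier}, which in the paper is proved via the conditional-independence/separated-vertices argument of \cref{lem:well-separated}; the paper's proof of \cref{thmClusterLocal} instead derives the needed decay bound $|\phi_X(t)|\le e^{-ct^2}$ directly from the cluster expansion, by writing $\Re\log\phi_Y(t)=\sum_\Gamma(\cos(t|\Gamma|)-1)\phi(H(\Gamma))\lam^{|\Gamma|}$, isolating the size-one clusters to get $n\lam(\cos t-1)\le -t^2 n\lam/5$, and absorbing the tail via \cref{lemKPhardcore}. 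Since the stated purpose of \cref{secCluster} is to give a self-contained cluster-expansion proof, the paper's route is the more natural one here, and you could easily replace your appeal to \cref{lem:high-fourier} by that cosine computation. For the low-frequency control, the paper is softer: it uses pointwise convergence $\phi_X(t)\to\phi_{\mc{Z}}(t)$ from the cumulant CLT of \cref{corEVbounds} together with dominated convergence, whereas you extract a quantitative rate $|\log\varphi_X(t)+t^2/2|=O_\Delta(|t|^3/\sigma)$ by subtracting the first two cumulants and bounding the third-order Taylor remainder termwise via \cref{lemKPhardcore}. Your version thus delivers an explicit error $O_\Delta(1/\sigma)+\exp(-\Omega_\Delta(\sigma^{2/3}))$ rather than just $o(1)$, at the cost of slightly more bookkeeping. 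The termwise absolute-summability bound $\sum_\Gamma|\phi(H(\Gamma))||\Gamma|^3\lam^{|\Gamma|}\le n\sum_{t\ge1}t^3 r^t$ you use is correct (it follows from $\sum_{|\Gamma|=t}|\phi(H(\Gamma))|\lam^t\le nr^t$, a consequence of \cref{lemKPhardcore}), and the exponentiation step is valid on the restricted regime $|t|\le c_\Delta\sigma^{1/3}$ provided $c_\Delta$ is chosen small enough that $C_\Delta|t|^3/\sigma\le 1$, as you indicate. Both proofs are sound; yours is more quantitative, the paper's stays entirely within the cluster-expansion toolkit.
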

\begin{proof}
We follow the proof strategy of~\cite[Theorem 19]{jenssen2021independent} that proves a local central limit theorem in the setting of polymer models satisfying the equivalent of \cref{lemClusterKbound}. 

Let  $X = (Y- \E_\lam Y)/\sqrt{\var_{\lam} Y}$, and let $\phi_X(t) = \E_{\lam} e^{itX}$ be the characteristic function of $X$ (and $\phi_Y$ the characteristic function of $Y$).  

First we prove that there exists $c>0$ so that for all $t \in [-\pi, \pi]$, $|\phi_X(t)| \le e^{-ct^2 }$.  Using the cluster expansion we write
\begin{align*}
    \log \phi_Y(t) &= \sum_{\Gamma}  \left( e^{it|\Gamma| }-1 \right) \phi(H(\Gamma)) \lam^{|\Gamma|} \,,
\end{align*}
and so
\begin{align*}
    \mathrm{Re}  \log \phi_Y(t) &= \sum_{\Gamma} \left( \cos ( t |\Gamma|)-1  \right) \phi(H(\Gamma)) \lam^{|\Gamma|} \\
    &= n \lam (\cos(t) -1) + \sum_{\Gamma: |\Gamma|\ge 2} \left( \cos ( t |\Gamma|)-1  \right) \phi(H(\Gamma)) \lam^{|\Gamma|} \\
    &\le -\frac{t^2 n \lam }{5} + \sum_{\Gamma: |\Gamma|\ge 2} t^2 |\Gamma|^2  \phi(H(\Gamma)) \lam^{|\Gamma|}  \\
    &\le -\frac{t^2 n \lam }{5} + -\frac{t^2 n \lam }{10}    \le -\frac{t^2 n \lam }{10} \,,
\end{align*}
where we have used the bounds $\cos(t)-1 \le -t^2/5$ and $1-\cos(tx) \le (tx)^2$.
Exponentiating then gives  $|\phi_Y(t)| \le e^{-t^2 n \lam/10}$, which, along with \cref{corEVbounds}, implies that $| \phi_X(t) | \le e^{-c t^2}$ for some $c >0$.

To finish the proof we will apply \cref{lem:fourier-convert} with $\alpha = \E Y$ and $\beta = \sqrt{\var(Y)}$, which says
\begin{align*}
\sup_{x\in \mc{L}}|\beta\mc{N}(x) - \mb{P}[X=x]| &\le \beta\int_{-\pi/\beta}^{\pi/\beta}\big|\phi_X(t)-\phi_{\mc{Z}}(t)\big|dt + e^{-\pi^2/(2\beta^2)}  \\
&\le o(\beta) + \beta\int_{-\infty}^{\infty}\big|\phi_X(t)-\phi_{\mc{Z}}(t)\big|dt \,,
\end{align*}
and so to prove the LCLT  it suffices to show that 
\[ \int_{-\infty}^{\infty}\big|\phi_X(t)-\phi_{\mc{Z}}(t)\big|dt = o(1) \,. \]
By the central limit theorem of \cref{corEVbounds} we have that $\phi_X(t) \to \phi_{\mc{Z}}(t)$ as $n \to \infty$. Moreover $\big|   \phi_X(t)-\phi_{\mc{Z}}(t) \big |$ is an integrable function since it is bounded by $e^{-ct^2} + e^{-t^2/2}$ from the bound above.  Applying dominated convergence completes the proof.
\end{proof}

\section{Deterministic approximation of cumulants in linear time}
\label{secCumulants}

In this section we prove \cref{thm:linear-time-cumulants}. We prove the theorem first in the regime of cluster expansion convergence, then extend to more general zero-free regions. We prove the theorem for general graphs, noting that the proof for claw-free graphs is similar.

 \begin{lemma}
\label{thmAlgsmallLam}
For all graphs $G$ of maximum degree $\Delta$, all $0 < \lam \le   \frac{1- \delta}{e (\Delta+1)}$, and all fixed $k \ge 1$ there is a deterministic algorithm to give an $\epsilon \lam n$  additive approximation to $\kappa_k(Y)$.  The algorithm runs in time $O_{\Delta, \delta, k}(n \cdot (1/\epsilon)^{O_{\Delta, \delta}(1)})$. 
\end{lemma}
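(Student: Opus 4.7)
The plan is to truncate the cluster expansion for $\kappa_k(Y)$ at clusters of size at most $T$ and then evaluate the truncated sum using the Patel--Regts framework~\cite{patel2017deterministic}. Let $c_j := \sum_{|\Gamma|=j} \phi(H(\Gamma))$, so that $c_j$ is the $j$-th Taylor coefficient of $\log Z_G(u)$ around $u=0$; by \eqref{eqCumulantCluster},
\[ \kappa_k(Y) = \sum_{j\ge 1} j^k c_j \lambda^j. \]
The algorithm would then output $\sum_{j=1}^T j^k c_j \lambda^j$.

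To choose the depth $T$, I would apply Lemma~\ref{lemKPhardcore} to control the tail:
\[ \bigg|\sum_{j > T} j^k c_j \lambda^j\bigg| \le n \sum_{t > T} t^k \bigl(\lambda e(\Delta+1)\bigr)^t \le C_{\Delta,\delta,k}\cdot n\lambda \cdot T^k (1-\delta)^T, \]
where the second inequality factors one power of $\lambda e(\Delta+1)$ out front, using the hypothesis $\lambda e(\Delta+1) \le 1-\delta$ to control the remaining geometric-type sum. Choosing $T = C'_{\Delta,\delta,k} \log(1/\epsilon)$ makes this tail at most $\epsilon\lambda n/2$, which is within the target $\epsilon\lambda n$.

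For the main computation, the Patel--Regts framework expresses each Taylor coefficient $c_j$ as a linear combination of counts of connected induced subgraphs of $G$ on at most $j$ vertices, where the coefficients depend only on the isomorphism class of the subgraph and can be computed in preprocessing independent of $G$. For graphs of maximum degree $\Delta$, all connected induced subgraphs of size at most $T$ rooted at each vertex of $G$ can be enumerated in total time $O_\Delta(n\cdot (e\Delta)^T)$, and grouping them by isomorphism class and combining with the preprocessed coefficients yields $c_1,\dots,c_T$ in time $O_{\Delta,\delta,k}(n\cdot (1/\epsilon)^{O_{\Delta,\delta}(1)})$. The final weighted sum is then an $O(T)$-time step.

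The main obstacle is the factor-of-$\lambda$ extraction in the tail bound: without it, one would only achieve additive error $\epsilon n$, which is insufficient when $\lambda$ is small and is the entire reason we want a $\lambda n$-scaled error. This extraction is possible precisely because the hypothesis $\lambda e(\Delta+1) \le 1-\delta$ provides a definite geometric gap between consecutive terms. Once this tail analysis is in place, the remainder is essentially an application of the Patel--Regts methodology post-composed with the weights $j^k$, so the rest of the proof should proceed without further complication.
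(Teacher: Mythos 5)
Your proposal is correct and follows essentially the same route as the paper's proof: truncate the cluster expansion for $\kappa_k(Y)$, control the tail via the Koteck\'y--Preiss bound (Lemma~\ref{lemKPhardcore}) while factoring out a power of $\lambda$ to achieve the $\epsilon\lambda n$ scaling, and compute the truncation in time $n\cdot e^{O(T\log\Delta)}$ via Patel--Regts. The paper's choice of truncation depth is $t = \Omega\bigl(\tfrac{\log\epsilon}{\log(\lambda e(\Delta+1))} + \tfrac{k^2}{\delta^2}\bigr)$, which is interchangeable with your $T = C'_{\Delta,\delta,k}\log(1/\epsilon)$ up to constants; the key point you correctly isolate, the factor-of-$\lambda$ extraction enabled by the gap $\lambda e(\Delta+1)\le 1-\delta$, is exactly what the paper also relies on.
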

\begin{proof}  The algorithm will be to compute a truncation of the cluster expansion for $\kappa_k(Y)$. Recall that in the regime of cluster expansion convergence, we have 
\[\kappa_k(Y) = \sum_{\Gamma} |\Gamma|^k \phi(H(\Gamma)) \lam^{|\Gamma|}   \,.\]
Now let $T_t^{(k)} = \sum_{|\Gamma| < t} |\Gamma|^k \phi(H(\Gamma)) \lam^{|\Gamma|}$ be the truncation keeping only clusters of size less than $t$. By~\eqref{eqCumulantCluster} and~\eqref{eqKPhc} we have 
\begin{align*}
\left| \kappa_k(Y) - T_t^{(k)}  \right | &\le  n \sum_{ j \ge t} j^k \left(  \lam e(\Delta+1)  \right) ^j.
\end{align*}
By taking $t = \Omega \left( \frac{ \log (\epsilon) }{\log (\lam e (\Delta+1))} + \frac{k^2}{\delta^2}\right) $, we  have that $\left| \kappa_k (Y) - T_t^{(k)}  \right | \le n \lam \epsilon $.    This truncated cluster expansion  can be computed in time $n \cdot \exp ( O(t \log \Delta)) = O_{\Delta, \delta, k}(n \cdot (1/\epsilon)^{O_{\Delta, \delta}(1)})$ using the algorithm of~\cite{patel2017deterministic,helmuth2020algorithmic}.
\end{proof}

Next we give a general algorithm when  $\lam$ is  not necessarily in the regime of cluster expansion convergence.
\begin{proof}[Proof of \cref{thm:linear-time-cumulants}]
Since \cref{thmAlgsmallLam} covers the case of $\lam \leq 1/(2e(\Delta+1))$, we will assume here that $\lam \geq 1/(2e(\Delta+1))$. 
The general algorithm is an adaptation of the approximate counting algorithm of Barvinok and Patel and Regts via an expression for $\kappa_k(Y)$ in terms of derivatives of $\log Z_G(\lam)$.  The $k$th cumulant of $Y$ can be written as a linear combination of the first $k$ derivatives of $\log Z_G(\lam)$ in $\lam$, where (for $\lambda$ in the considered range) the size of the coefficients in the linear combination can be bounded in terms of only $k$ and $\Delta$. Hence, it suffices to give an $\epsilon n$ additive approximation to $\frac{d^k}{d \lam^k} \log Z_G(\lam)$ in the stated running time.

Let $\delta >0$ be small enough so that $\lam \in \mc{R}_{\delta, \Delta}$.
Following Barvinok~\cite{barvinok2016combinatorics} and Peters and Regts~\cite{PR19},  there is a polynomial $f$ of  degree $D= D(\delta,\Delta)$ that maps the unit circle in the complex plane into the region $\mc{R}_{\delta, \Delta}$, sending $0$ to $0$ and $1$ to $\lam$.  Let 
\[ \hat Z (y) = Z_G( f(y) ) \]
so that $\hat Z(1) = Z_G(\lam) $.  In particular, $\hat Z$ is a polynomial in $y$ of degree $N \le D n$.  Let $r_1, \dots, r_N$ denote the inverses of the roots of $\hat Z$ so that $\hat Z(y) = \prod_{i=1}^n (1-r_i y)$.   By \cref{thm:zero-free}, there is an $\eta = \eta(\delta,\Delta) \in (0,1)$ so that $|r_i| \le \eta$ for $i=1, \dots , N$. 

The first $k$ derivatives of $\log Z_G(\lam)$ with respect to $\lam$ can be written in terms of the first $k$ derivatives of $\log \hat Z(y)$ with respect to $y$ and those of $f$ with respect to $y$.  Using the chain rule we obtain
\[ \frac{d^k \log Z_G(\lam)}{d \lam^k}  =  \sum_{j=1}^k b_j \frac{ d^j \log \hat Z(y) }{ d y^j }   \]
where the coefficients $b_j$ depend only on the first $j$ derivatives of the bounded-degree polynomial $f$ and thus are bounded. For instance, we have 
\begin{align*}
    \frac{d \log Z_G(\lam)}{d \lam}  &=  \frac{ \frac{d \log \hat Z(y)}{dy} }{ \frac{d f(y)}{dy}  } 
    \intertext{and}
     \frac{d^2 \log Z_G(\lam)}{d \lam^2}  &=  \frac{ \frac{d^2 \log \hat Z(y)}{dy^2} }{ \frac{d f(y)}{dy}  }  - \frac{ \frac{d \log \hat Z(y)}{dy} \cdot  \frac{d^2 f(y)}{dy^2}  }{ \left( \frac{d f(y)}{dy}  \right)^2 } 
\end{align*}
In particular, it now suffices to compute an $\epsilon n $ additive approximation to $\frac{ d^j \log \hat Z(y) }{ d y^j } $ for $j= 1,\dots,k$.  We can write
\begin{align*}
   \frac{ d^k \log \hat Z(y) }{ d y^k }  &=  \frac{ d^k }{ d y^k }  \sum_{i=1}^N \log (1- r_i y) = - (k-1)! \sum_{i=1}^N \frac{r_i^k}{(1-r_i y)^k } 
   \\&= - (k-1)! \sum_{i=1}^N r_i^k \sum_{s=0}^{\infty}  \binom{k-1+s } {k-1  } (r_iy)^s \, .
\end{align*}
Now setting 
\[T_t^{(k)} =  -(k-1)!  \sum_{i=1}^N r_i^k \sum_{s=0}^{t}  \binom{k-1+s } {k-1  } r_i^s  \]
we have 
\begin{align*}
    \left|T_t^{(k)} -  \frac{ d^k \log \hat Z(y) }{ d y^k }  \right| &\le  (k-1)! N \sum_{s = t+1}^{\infty} (s+k)^k \eta^s = O (n \eta ^t) \,,
\end{align*}
and so for $t = \Omega_{\Delta, \delta}(\log(1/\epsilon) + k^2)$, the truncation error  can be made at most $\epsilon  n$ as desired.   Moreover, using the algorithm of Patel--Regts~\cite{patel2017deterministic}, $T_t^{(k)}$ can be computed in time  $n e^{O_{\Delta,\delta}(t)} = O_{\Delta, \delta, k}(n(1/\epsilon)^{O_{\Delta, \delta}(1)})$ for this choice of $t$. 

The FPTAS for the mean and variance follow from the additive approximations and \cref{lem:var-bound}. \qedhere 
\end{proof}

\bibliographystyle{amsplain0.bst}
\bibliography{main.bib}

\end{document}